\renewcommand{\qed}{\hfill \blacksquare}
\numberwithin{equation}{section}
\DeclareRobustCommand\sampleline[1]{%
	\tikz\draw[#1] (0,0) (0,\the\dimexpr\fontdimen22\textfont2\relax)
	-- (2em,\the\dimexpr\fontdimen22\textfont2\relax);%
}
\newcommand{\customlabel}[2]{\protected@write @auxout {}{\string \newlabel {#1}{{#2}{\thepage}{#2}{#1}{}}}\hypertarget{#1}{}}
\newcommand{\cv}[1]{\reflectbox{\ensuremath{\vv{\reflectbox{\ensuremath{#1}}}}}}	
\def\@fnsymbol#1{\ensuremath{\ifcase#1\or \dagger\or \ddagger\or
		\mathsection\or \mathparagraph\or \|\or **\or \dagger\dagger
		\or \ddagger\ddagger \else\@ctrerr\fi}}
\newcommand*\samethanks[1][\value{footnote}]{\footnotemark[#1]}
\title{Extremal Set Theory and LWE Based Access Structure Hiding Verifiable Secret Sharing with Malicious-Majority and Free Verification}
\author{Vipin Singh Sehrawat \inst{1}\thanks{Part of this work was done while the author was a PhD candidate at The University of Texas at Dallas, USA.}\thanks{Research partially supported by NPRP award NPRP8-2158-1-423 from the Qatar National Research Fund (a member of The Qatar Foundation). The statements made herein are solely the responsibility of the authors.} \and Foo Yee Yeo\inst{1} \and Yvo Desmedt \inst{2,3}\samethanks}
\institute{Seagate Technology, Singapore \\ \email{\{vipin.sehrawat.cs@gmail.com\}} \and 
	The University of Texas at Dallas, Richardson, USA \and
	University College London, London, UK}
\titlerunning{Access Structure Hiding Verifiable Secret Sharing with Malicious-Majority and Free Verification}
\authorrunning{V. S. Sehrawat, F. Y. Yeo, and Y. Desmedt}
\begin{document}
\maketitle	
	\begin{abstract} \normalsize
		Secret sharing allows a dealer to distribute a secret among a set of parties such that only authorized subsets, specified by an access structure, can reconstruct the secret. Sehrawat and Desmedt (COCOON 2020) introduced \emph{hidden access structures}, that remain secret until some authorized subset of parties collaborate. However, their scheme assumes semi-honest parties and supports only restricted access structures. We address these shortcomings by constructing a novel access structure hiding verifiable secret sharing scheme that supports \emph{all} monotone access structures. Our scheme is the first secret sharing solution to support malicious behavior identification and share verifiability in malicious-majority settings. Furthermore, the verification procedure of our scheme incurs no communication overhead, and is therefore ``free''. As the building blocks of our scheme, we introduce and construct the following:
		\begin{itemize}
			\item a set-system with greater than $\exp\left(c\frac{2(\log h)^2}{(\log\log h)}\right)+2\exp\left(c\frac{(\log h)^2}{(\log\log h)}\right)$ subsets of a set of $h$ elements. Our set-system, $\mathcal{H}$, is defined over $\mathbb{Z}_m$, where $m$ is a non-prime-power. The size of each set in $\mathcal{H}$ is divisible by $m$ while the sizes of the pairwise intersections of different sets are not divisible by $m$ unless one set is a (proper) subset of the other,
			\item a new variant of the learning with errors (LWE) problem, called \textsf{PRIM-LWE}, wherein the secret matrix is sampled such that its determinant is a generator of $\mathbb{Z}_q^*$, where $q$ is the LWE modulus. 
		\end{itemize}
		Our scheme arranges parties as nodes of a directed acyclic graph and employs modulus switching during share generation and secret reconstruction. For a setting with $\ell$ parties, our (non-linear) scheme supports all $2^{2^{\ell - O(\log \ell)}}$ monotone access structures, and its security relies on the hardness of the LWE problem. Our scheme's maximum share size, for any access structure, is: $$(1+ o(1)) \dfrac{2^{\ell}}{\sqrt{\pi \ell/2}}(2 q^{\varrho + 0.5} + \sqrt{q} + \mathrm{\Theta}(h)),$$ where $\varrho \leq 1$ is a constant. We provide directions for future work to reduce the maximum share size to:
		\[\dfrac{1}{l+1} \left( (1+ o(1)) \dfrac{2^{\ell}}{\sqrt{\pi \ell/2}}(2 q^{\varrho + 0.5} + 2\sqrt{q}) \right),\]
		where $l \geq 2$. We also discuss three applications of our secret sharing scheme.
		\keywords{Learning with Errors \and Hidden Access Structures \and General Access Structures \and Verifiable \and \textsf{PRIM-LWE} \and Extremal Set Theory.}
	\end{abstract}
	
\section{Introduction}\label{sec1}
A secret sharing scheme is a method by which a dealer distributes shares of a secret to a set of parties such that only authorized subsets of parties, specified by an access structure, can combine their shares to reconstruct the secret. As noted by Shamir~\cite{Shamir[79]}, the mechanical approach to secret sharing, involving multiple locks to a mechanical safe, was already known to researchers in combinatorics (see \cite{Liu[68]}, Example 1-11). Digital secret sharing schemes were introduced in the late 1970s by Shamir~\cite{Shamir[79]} and Blakley~\cite{Blakley[79]} for the $t$-out-of-$\ell$ threshold access structure, wherein all subsets of cardinality at least $t ~(t \in [\ell])$ are authorized. Ito et al.~\cite{Ito[87]} showed the existence of a secret sharing scheme for every monotone access structure. A number of strengthenings of secret sharing, such as verifiable secret sharing \cite{Chor[85]}, identifiable secret sharing \cite{McSar[81]}, robust secret sharing \cite{Rabin[89]}, rational secret sharing \cite{HalTea[04]}, ramp secret sharing \cite{BlaCat[84]}, evolving secret sharing \cite{KomaMoni[16]}, proactive secret sharing \cite{Amir[95]}, dynamic secret sharing \cite{ChiLein[89]}, secret sharing with veto capability \cite{Beut[89]}, anonymous secret sharing \cite{Stinson[87]}, evolving ramp secret-sharing \cite{AmHuss[20]}, locally repairable secret sharing \cite{AgaMazu[16]} and leakage-resilient secret sharing \cite{FrabAks[21],VipAshK[18]}, have been proposed under varying settings and assumptions. Quantum versions have also been developed for some secret sharing variants (e.g., see \cite{MarkBui[99],QinLiu[21],YaoGuo[21],QinWall[18],CleDan[99],LuMiao[18],LuHou[20],JoySab[20],SutraOm[20]}). Secret sharing is the foundation of multiple cryptographic constructs and applications, including threshold cryptography~\cite{YvoFrank[89],Yvo[89],Santis[94],TalR[06]}, (secure) multiparty computation~\cite{Ben[88],Chaum[88],Cramer[00],CramDam[15],HirtMau[97],GolMic[87],HaoRon[06]}, secure distributed storage \cite{GustSimm[89]}, attribute-based encryption~\cite{Goyal[06],Waterss[11],ConstantDr[15]}, generalized oblivious transfer~\cite{Tassa[11],Shankar[08]}, perfectly secure message transmission~\cite{Danny[93],AshArpAsh[11],MartPate[11],YangYvo[10]}, access control~\cite{Naor[06],HuLi[18],IlanMark[18]}, anonymous communications~\cite{Sehrawat[17]}, leakage-resilient circuit compilers \cite{SebaTal[10],YuvAmi[03],RothG[12]}, e-voting~\cite{Berry[99],Yung[04],Sorin[07]}, e-auctions~\cite{Mic[98],Peter[09]}, secure cloud computing~\cite{MehrDoug[12],SatKei[13]}, witness pseudorandom functions \cite{IlanMark[18]}, cloud data security~\cite{VaruDar[17],NungJia[13]}, distributed storage blockchain \cite{SihAhm[20],RaviLav[18],RavLavR[18],KimKiran[19],DaiZha[18]}, copyright protection \cite{HsiHsu[07],JonYan[10]}, indistinguishability obfuscation \cite{IlanMark[18]}, multimedia applications \cite{AdnanEsam[19]}, and private (linear and logistic) regression \cite{Gasc[17],ShiChao[16],ChaoJun[20]}, tree-based models \cite{FangChen[20]} and general machine learning algorithms \cite{DanTho[15],PayPet[18],PayYup[17]}. 

\subsection{\textbf{Motivation}}
\subsubsection{Hidden Access Structures.}
Traditional secret sharing models require the access structure to be known to the parties. Since secret reconstruction requires shares of any authorized subset from the access structure, having a public access structure reveals the high-value targets, which can lead to compromised security in the presence of malicious parties. Having a public access structure also implies that some parties must publicly consent to the fact that they themselves are not trusted. As a motivating example, consider a scenario where Alice dictates her will/testament and instructs her lawyer that each of her 10 family members should receive a valid share of the will. In addition, the shares should be indistinguishable from each other in terms of size and entropy. She also insists that to reconstruct her will, \{Bob, Tom, Catherine\} or \{Bob, Cristine, Brad, Roger\} or \{Rob, Eve\} must be part of the collaborating set. However, Alice does not want to be in the bad books of her other, less trusted family members. Therefore, she demands that the shares of her will and the procedure to reconstruct it back from the shares must not reveal her ``trust structures'', until after the will is successfully reconstructed. This problem can be generalized to secret sharing with \textit{hidden} access structures, that remain secret until some authorized subset of parties assembles. However, the (only) known access structure hiding secret sharing scheme does not support all $2^{2^{\ell - O(\log \ell)}}$ monotone access structures, where $\ell$ denotes the number of parties~\cite{Vipin[20],VipinThesis[19]}, but only those access structures where the smallest authorized subset contains at least half of the total number of parties. 

\subsubsection{Superpolynomial Size Set-Systems and Efficient Cryptography.}
In this work, we consider the application of set-systems with specific intersections towards enhancing existing cryptographic protocols for distributed security. To minimize the overall computational and space overhead of such protocols, it is desirable that the parameters such as exponents, moduli and dimensions do not grow too large. For a set-system whose size is superpolynomial in the number of elements over which it is defined, achieving a sufficiently large size requires a smaller modulus and fewer elements, which translates into smaller dimensions, exponents and moduli for its cryptographic applications. Hence, quickly growing set-systems are well-suited for the purpose of constructing (relatively) efficient cryptographic protocols. 

\subsubsection{Lattice Based Secret Sharing for General Access Structures.} Lattice-based cryptosystems are among the leading ``post-quantum'' cryptographic candidates that are plausibly secure from large-scale quantum computers. For a thorough review of the various implementations of lattice-based cryptosystems, we refer the interested reader to the survey by Nejatollahi et al.~\cite{Hamid[19]}. With NIST's latest announcements~\cite{NIST[20]}, the transition towards widespread deployment of lattice-based cryptography is expected to pick up even more steam. However, existing lattice-based secret sharing schemes support only threshold access structures~\cite{Stein[07],Pila[17]}. Hence, there is a need to develop lattice-based secret sharing schemes for general (i.e., all monotone) access structures. 

\subsubsection{(Im)possibility of Verifiable Secret Sharing for Malicious-Majority.}
In its original form, secret sharing assumes a fault-free system, wherein the dealer and parties are honest. Verifiable secret sharing (VSS) relaxes this assumption, guaranteeing that there is some unique secret that a malicious dealer must ``commit'' to. The objective of VSS is to resist malicious parties, which are classified as follows: 
\begin{itemize}
\item a dealer sending incorrect shares, 
\item malicious parties submitting incorrect shares for secret reconstruction. 
\end{itemize}

VSS is a fundamental building block for many secure distributed computing protocols, such as (secure) multiparty computation and byzantine agreement~\cite{Abraham[08],Canetti[93],Feld[88],Katz[06],Patra[14]}. Tompa and Woll~\cite{Tompa[89]}, and McEliece and Sarwate~\cite{McEliece[81]} gave the first (partial) solutions to realize VSS, but the notion was defined and fully realized first by Chor et al.~\cite{Chor[85]}. Since then, multiple solutions, under various assumptions, have been proposed~\cite{Chor[85],Cachin[02],Chaum[88],Ben[88],Feld[87],Oded[91],Tor[01],Rabin[89],Genn[98],Basu[19],Kate[10],Cascudo[17],Backes[13],Stadler[96]}. VSS typically assumes that the parties are connected pairwise by authenticated private channels and they all have a broadcast channel, which allows one party to send a consistent message to all other parties, guaranteeing consistency even if the broadcaster itself is malicious. However, even probabilistically, broadcast cannot be simulated on a point-to-point network when more than a third of the parties are malicious. Therefore, it is infeasible to construct VSS protocols when more than a third of the parties are malicious~\cite{Lamport[82]}. Hence, relaxed definitions of verifiability must be explored to design efficient schemes that: 
\begin{itemize}
	\item do not fail when more than a third of the parties are malicious,
	\item unlike VSS and related concepts, do not require additional communication or cryptographic protocols.
\end{itemize} 

\subsection{\textbf{Related Work}}
A limited number of attempts have been made to introduce privacy-preserving features to secret sharing. The first solution that focused on bolstering privacy for secret sharing was called anonymous secret sharing, wherein the secret can be reconstructed without the knowledge of which parties hold which shares~\cite{Stinson[87]}. In such schemes, secret reconstruction can be performed by giving the shares to a black box that does not know the identities of the parties holding those shares. As pointed out by Guillermo et al. \cite{Mida[03]}, anonymous secret sharing does not provide cryptographic anonymity. Existing anonymous secret sharing schemes either operate in restricted settings (e.g., $\ell$-out-of-$\ell$, $2$-out-of-$\ell$ threshold) or use difficult to generate underlying primitives \cite{Stinson[87],Phillips[92],Blundo[96],Kishi[02],Deng[07],HongWei[12]}. For instance, the constructions from~\cite{Stinson[87],Blundo[96]} use resolvable Steiner systems~\cite{Steiner[53]}, which are non-trivial to achieve and have only a few known results in restricted settings~\cite{Bryant[17],Teir[94],Col[92],Lou[20],Kwan[20],Pat[17],Pipp[89],Chau[71],Fer[19],Kra[95],Yuca[99],Yuca[02]}. There are also known impossibility results concerning the existence of certain desirable Steiner systems~\cite{Pat[08]}. For an introduction to Steiner systems, we refer the interested reader to~\cite{Tri[99],Charles[06]}. Kishimoto et al.~\cite{Kishi[02]} employed combinatorics to realize anonymous secret sharing, thereby avoiding the difficult to generate primitives. However, their scheme also works for only certain specific thresholds. 

Daza and Domingo-Ferrer \cite{Daza[07]} aimed at achieving a weaker form of anonymous secret sharing wherein the notion of privacy is analogous to that for ring signatures~\cite{RonAdi[01]}, i.e., instead of a party's identity, only its subset membership is leaked. Recently, Sehrawat and Desmedt~\cite{Vipin[20]} introduced \textit{access structure hiding} secret sharing for restricted access structures, wherein no non-negligible information about the access structure gets revealed until some authorized subset of parties assembles. They constructed novel set-systems and vector families to ``encode'' the access structures such that deterministic and \emph{private} assessments can be conducted to test whether a given subset of parties is authorized for secret reconstruction. 

\subsection{\textbf{Our Contributions}}
The access structure hiding secret sharing scheme from~\cite{Vipin[20]} has the following limitations:
\begin{enumerate}
	\item It assumes semi-honest polynomial-time adversaries, which try to gain additional information while correctly following the protocol. Hence, the scheme fails in the presence of malicious adversaries, which are not guaranteed to follow the protocol correctly.
	\item It requires that the smallest authorized subset contain at least half of the total number of parties.
\end{enumerate}
We address these limitations by introducing access structure hiding verifiable secret sharing, which supports all monotone access structures and remains ``verifiable'' even when a majority of the parties are malicious. Our detailed contributions follow: 

\subsubsection{Novel Superpolynomial Sized Set-Systems and Vector Families.}
In order to build our access structure hiding verifiable secret sharing scheme, we construct a set-system that is described by \Cref{mainThm} in the following text. 
\begin{definition}\label{def1} 
	\emph{We say that a family of sets $\{G_1,\,G_2,\,\ldots,G_t\}$ is \emph{non-degenerate} if there does not exist $1\leq i\leq t$ such that $G_i\subseteq G_j$ for all $1\leq j\leq t$.}
\end{definition}

\begin{definition}\label{def2}
	\emph{Let $m\geq 2$, $t\geq 2$ be integers and $\mathcal{H}$ be a set-system. We shall say that $\mathcal{H}$ has \emph{$t$-wise restricted intersections} modulo $m$ if the following two conditions hold:
	\begin{enumerate}
		\item $\forall H\in\mathcal{H}$, $\vert H\vert = 0\bmod m$,
		\item $\forall t^\prime$ satisfying $2\leq t^\prime\leq t$, and $\forall H_1,\,H_2,\,\ldots,\,H_{t^\prime}\in\mathcal{H}$ with $\{H_1,\,H_2,\,\ldots,\,H_{t^\prime}\}$ non-degenerate, it holds that:
		$$\left\vert\bigcap_{\tau=1}^{t^\prime} H_\tau\right\vert\neq 0\bmod m.$$
	\end{enumerate}}
\end{definition} 

\begin{theorem}\label{mainThm}
	Let $\{\alpha_i\}_{i=1}^r$ be $r > 1$ positive integers and $m = \prod_{i=1}^{r} p_i^{\alpha_i}$ be a positive integer with $r$ different odd prime divisors: $p_1, \ldots, p_r$, and $l\geq 2$ be an integer such that $l<\min(p_1,\,\ldots,\,p_r)$. Then, there exists $c>0$ such that for all integers $t\geq 2$ and $h\geq lm$, there exists an explicitly constructible non-uniform\footnote{member	sets do not	all have equal size} set-system $\mathcal{H}$, defined over a universe of $h$ elements, such that
	\begin{enumerate}
		\item $\vert\mathcal{H}\vert>\exp\left(c\dfrac{l(\log h)^r}{(\log\log h)^{r-1}}\right)+l\exp\left(c\dfrac{(\log h)^r}{(\log\log h)^{r-1}}\right)$,
		\item $\forall H_1, H_2\in\mathcal{H}$, either $\vert H_1\vert=\vert H_2\vert$, $\vert H_1\vert=l\vert H_2\vert$ or $l\vert H_1\vert=\vert H_2\vert$,
		\item $\mathcal{H}$ has $t$-wise restricted intersections modulo $m$.
	\end{enumerate}
\end{theorem}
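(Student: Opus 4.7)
The plan is to follow the Grolmusz-style superpolynomial set-system paradigm, which exploits the Barrington--Beigel--Rudich (BBR) representation of the OR function modulo a composite with several distinct prime factors. Because $m=\prod_{i=1}^{r} p_i^{\alpha_i}$ has $r$ distinct odd prime-power factors, there is a multivariate polynomial of degree $\Theta\!\bigl(n^{1/r}(\log n)^{1-1/r}\bigr)$ over $\mathbb{Z}_m$ that represents $\mathrm{OR}(x_1,\dots,x_n)$. Converting this polynomial into a set-system via the standard monomial-to-indicator encoding yields, over a universe of size polynomial in $n$, a base family $\mathcal{H}_0$ of size $\exp\!\bigl(c(\log h)^{r}/(\log\log h)^{r-1}\bigr)$, whose members have cardinality divisible by $m$ and whose pairwise intersection sizes are nonzero modulo $m$.

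The first task is to upgrade pairwise intersection control to $t$-wise control in the sense of \Cref{def2}. I would argue that the BBR polynomial represents OR on every nonempty input, so if one applies it to the bitwise AND of $t'$ characteristic vectors ($2\le t'\le t$), whenever the $t'$-fold intersection is nonempty its cardinality is nonzero modulo $m$. The non-degeneracy hypothesis of \Cref{def1} is precisely what prevents the intersection from collapsing to one of the sets itself (the only case where the polynomial argument fails), so the modular inequality extends uniformly in $t'$.

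To realise both the summand $l\cdot\exp(\cdots)$ and the summand $\exp(cl(\log h)^{r}/(\log\log h)^{r-1})$ -- and, simultaneously, the two-size structure with ratio $l$ required by condition 2 -- I would combine two complementary constructions. First, partition the ground set into $l$ pairwise disjoint blocks of size about $h/l$ and place a translated copy of $\mathcal{H}_0$ on each block, producing $l\cdot\exp\!\bigl(c(\log h)^{r}/(\log\log h)^{r-1}\bigr)$ ``small'' sets of a common size $s$ with $m\mid s$; the intersection property within a block is inherited from $\mathcal{H}_0$, while any pair from different blocks is disjoint and thus excluded via the non-degeneracy condition (after gluing in a common skeleton that forces cross-block pairs into a chain). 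Second, build the ``large'' family by taking $l$-tuples of small sets, one from each block, and forming their disjoint union; this gives $|\mathcal{H}_0|^{\,l}=\exp\!\bigl(cl(\log h)^{r}/(\log\log h)^{r-1}\bigr)$ large sets of size exactly $ls$. Because the blocks are disjoint, the intersection of two large sets decomposes into a sum of $l$ block-wise intersections, each of which is nonzero modulo $m$ by the base construction; the hypothesis $l<\min(p_1,\dots,p_r)$, which implies $\gcd(l,m)=1$, is what ensures such a sum cannot accidentally vanish modulo $m$.

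The main obstacle will be the mixed-type $t$-wise intersections: a non-degenerate $t$-tuple consisting of some small and some large sets must still satisfy condition 3, and the natural chain relations (a small set sitting inside a large set as one of its block components) have to be funneled into the non-degeneracy clause of \Cref{def2} rather than the modular inequality. Verifying this requires a careful case analysis of how many of the $t$ sets live in each block and in which level, and repeatedly invoking $\gcd(l,m)=1$ to preserve nonzero residues across the decomposition. Once this bookkeeping is complete, plugging the BBR degree estimate into the standard size count and summing the two contributions yields the claimed lower bound
\[|\mathcal{H}|>\exp\!\Bigl(c\tfrac{l(\log h)^{r}}{(\log\log h)^{r-1}}\Bigr)+l\exp\!\Bigl(c\tfrac{(\log h)^{r}}{(\log\log h)^{r-1}}\Bigr),\]
with conditions 2 and 3 verified by construction.
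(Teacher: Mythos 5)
Your overall strategy---a Grolmusz/BBR base family upgraded to $t$-wise intersections, then an $l$-block superstructure of ``small'' sets plus $l$-fold unions as ``large'' sets---is the same route the paper takes (Theorem~\ref{grolmusz_k_intersections} combined with Proposition~\ref{main_construction} and Theorem~\ref{superpolynomial_set_systems}), but two of your key steps fail as stated. First, if the $l$ blocks are genuinely disjoint, two small sets from different blocks form a \emph{non-degenerate} pair whose intersection is empty, hence of size $0 \bmod m$, violating condition 3; disjointness does not make a pair degenerate, so such pairs cannot be ``excluded via the non-degeneracy condition.'' Gluing a common skeleton is indeed the fix, but for a different reason than you give: the paper identifies the cores $A=\bigcap_{H\in\mathcal{H}_i}H$ of the $l$ copies, whose size is $\neq 0 \bmod m$, so every cross-block intersection equals $A$ and is handled by the modular condition, not by degeneracy. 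Moreover, once the copies overlap in $A$, the union of one set per block has size $lkm-(l-1)\vert A\vert \not\equiv 0 \bmod m$, so the large sets must additionally be padded with a fresh set $B$ of exactly $(l-1)\vert A\vert$ new elements to restore divisibility by $m$. Your proposal has no such padding, so either condition 1 of Definition~\ref{def2} fails for the large sets or the cross-block intersection requirement fails for the small ones; you cannot have disjoint unions and a shared skeleton simultaneously without this correction.

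Second, your argument that $\gcd(l,m)=1$ prevents a sum of blockwise intersection sizes from vanishing modulo $m$ is simply false: for $m=15$ and $l=2$, the residues $7$ and $8$ are each nonzero mod $15$ yet sum to $0$. What the proof genuinely needs is the finer property, inherited from the BBR polynomial through Grolmusz's construction, that every relevant intersection size $\mu$ satisfies $\mu\in\{0,1\}\bmod p_i$ for \emph{every} prime divisor $p_i$ of $m$; then some prime $p\mid m$ sees at least one summand $\equiv 1$ while all summands are in $\{0,1\}$, so the total lies in $\{1,\dots,l\}$ with $l<p$ and is nonzero mod $p$, hence mod $m$. Your proposal never tracks this $\{0,1\}\bmod p_i$ property (it only records ``nonzero mod $m$'' for the base family), and it is also the missing ingredient in the mixed small/large $t$-wise cases you flag but do not resolve: in the paper these reduce either to the core $A$ or to a $t'$-wise intersection inside a single copy, not to the non-degeneracy clause. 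A smaller point: the BBR degree is $O(n^{1/r})$, not $\Theta\bigl(n^{1/r}(\log n)^{1-1/r}\bigr)$; with your degree estimate the universe would blow up and the denominator in the final bound would degrade beyond $(\log\log h)^{r-1}$, so the stated size bound would not follow.
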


Recall that $a \bmod m$ denotes the smallest non-negative $b = a \bmod m$. Since the access structure $\Gamma$ is monotone, it holds that if $\mathcal{B} \supseteq \mathcal{A}$, for some $\mathcal{A} \in \Gamma$, then $\mathcal{B} \in \Gamma$. We derive a family of vectors $\mathcal{V} \in (\mathbb{Z}_m)^h$ from our set-system $\mathcal{H}$, that captures the superset-subset relations in $\mathcal{H}$ as (vector) inner products in $\mathcal{V}$. This capability allows us to capture \textit{special} information about each authorized subset $\mathcal{A} \in \Gamma$ in the form of an inner product, enabling us to devise an efficient test for whether a given subset of parties $\mathcal{B}$ is a superset of any $\mathcal{A} \in \Gamma$. 

\subsubsection{\textsf{PRIM-LWE}.} Informally, (the multi-secret version of) the learning with errors (LWE) problem~\cite{Reg[05]} asks for the solution of a system of noisy linear modular equations: given positive integers $n$, $w = \poly(n)$ and $q \geq 2$, an LWE sample consists of $(\textbf{A}, \textbf{B} = \textbf{A} \textbf{S} + \textbf{E} \bmod q)$ for a fixed secret $\textbf{S} \in \mathbb{Z}^{n \times n}_q$ with small entries, and $\textbf{A} \xleftarrow{\; \$ \;} \mathbb{Z}^{w \times n}_q$. The error term $\textbf{E} \in \mathbb{Z}^{w \times n}$ is sampled from some distribution supported on small numbers, typically a (discrete or rounded) Gaussian distribution with standard deviation $\alpha q$ for $\alpha = o(1)$. We introduce a new variant of the LWE problem, called \textsf{PRIM-LWE}, wherein the matrix $\textbf{S}$ can be sampled from the set of matrices whose determinants are generators of $\mathbb{Z}_q^\ast$. We prove that, up to a constant factor, \textsf{PRIM-LWE} is as hard as the plain LWE problem.

\subsubsection{Access Structure Hiding Verifiable Secret Sharing Scheme.} We use our novel set-system and vector family to generate \textsf{PRIM-LWE} instances, and thereby construct the first access structure hiding verifiable (computational) secret sharing scheme that guarantees secrecy, correctness and verifiability (with high probability) even when a majority of the parties are malicious. To detect malicious behavior, we postpone the verification procedure until after secret reconstruction. The idea of delaying verification till secret reconstruction is also used in identifiable secret sharing \cite{McSar[81]} wherein parties only interact with a trusted external \emph{stateless} server and the goal is to inform each honest player of the correct set of cheaters. However, unlike the identifiable secret sharing solutions \cite{AshC[11],KaoSat[95],SatoOb[11],McSar[81],SatTos[06],IshRafa[12],MasaTake[18],ChenLi[15]}, our scheme supports share verification and does not require any digital signature or message authentication subroutines. Furthermore, our scheme does not require any dedicated round to verify whether the reconstructed secret is consistent with all participating shares. Our scheme is graph-based with the parties represented by nodes in a directed acyclic graph (DAG). For a setting with $\ell$ parties, our (non-linear) scheme supports \emph{all} monotone access structures, and its security relies on the hardness of the LWE problem. The maximum share size of our scheme is $(1+ o(1)) \dfrac{2^{\ell}}{\sqrt{\pi \ell/2}}(2 q^{\varrho + 0.5} + \sqrt{q} + \mathrm{\Theta}(h))$, where $q$ is the LWE modulus and $\varrho \leq 1$ is a constant. We also describe improvements that will lead to an access structure hiding verifiable secret sharing scheme with maximum share size equal to:
\[\dfrac{1}{l+1} \left( (1+ o(1)) \dfrac{2^{\ell}}{\sqrt{\pi \ell/2}}(2 q^{\varrho + 0.5} + 2\sqrt{q}) \right),\]
where $l \geq 2$ (as defined by Theorem~\ref{mainThm}). 

\subsection{\textbf{``Free'' Verification at the Expense of Larger Shares}}
In the first secret sharing scheme for general (monotone) access structures~\cite{Ito[87]}, the share size is proportional to the depth 2 complexity of the access structure when viewed as a Boolean function; hence, shares are exponential for most access structures. While for specific access structures, the share size of the later schemes~\cite{Brick[89],Karch[93],Gusta[88]} is less than the share size for the scheme from~\cite{Ito[87]}, the share size of all schemes for general access structures remained $2^{\ell-o(\ell)}~(\ell$ denotes the number of parties) until 2018, when Liu and Vaikuntanathan~\cite{LiuStoc[18]} (using results from~\cite{Liu[18]}) constructed a secret sharing scheme for general access structures with a share size of $2^{0.944 \ell}$. Applebaum et al.~\cite{Benny[20]} (using the results of~\cite{Apple[19],Liu[18]}) constructed a secret sharing scheme for general access structures with a share size of $2^{0.637\ell + o(\ell)}$. Whether the share size for general access structures can be improved to $2^{o(\ell)}$ (or even smaller) remains an important open problem. On the other hand, multiple works~\cite{Blundo[92],Capo[93],Csi[96],Csi[97],Dijk[95]} have proved various lower bounds on the share size of secret sharing for general access structures, with the best being $\mathrm{\Omega}(\ell^2/\log \ell)$ from Csirmaz~\cite{Csi[96]}. 

The maximum share size of our access structure hiding verifiable secret sharing scheme is: $$(1+ o(1)) \dfrac{2^{\ell}}{\sqrt{\pi \ell/2}}(2 q^{\varrho + 0.5} + \sqrt{q} + \Theta(h)),$$ where $q$ is the LWE modulus and $\varrho \leq 1$ is a constant. Therefore, the maximum share size of our scheme is larger than the best known upper bound of $2^{0.637\ell + o(\ell)}$ on the share size for secret sharing over general access structures. However, at the expense of the larger share size, our scheme achieves ``free'' verification because unlike the existing VSS protocols, whose verification procedures incur at least $O(\ell^2)$ communication overhead~\cite{Cachin[02],Backes[13]}, the verification procedure of our scheme does not incur any communication overhead. 

\subsection{\textbf{Applications}}
In this section, we discuss three example applications of our access structure hiding verifiable secret sharing scheme.

\subsubsection{Frameproof Secret Sharing.}
In secret sharing, any authorized subset of parties can compute the shares of another authorized subset of parties and use the latter's shares to perform non licet activities. For example, they can reconstruct a key and sign a message on behalf of their organization, and later, during audit, they can put the blame on the other authorized subset. By doing this they may escape the accountability of using the key inappropriately. Recently, Desmedt et al.~\cite{Yvo[21]} captured this threat by defining \textit{framing} in secret sharing schemes as the ability of some subset $\mathcal{A} \subset \mathcal{P}$ to compute the share of any participant $P_i \in \mathcal{P} \setminus \mathcal{A}$. In our access structure hiding verifiable secret sharing scheme, the share of each party $P_i$ is sealed as a \textsf{PRIM-LWE} instance such that the lattice basis, $\textbf{A}_i$, used to generate it is known only to $P_i$. Since $\textbf{A}_i$ is required to generate $P_i$'s share, it is infeasible for any coalition of polynomial-time parties $\mathcal{A} \subset \mathcal{P}$ to compute the share of $P_i \in \mathcal{P} \setminus \mathcal{A}$ without solving the LWE problem. Hence, our access structure hiding verifiable secret sharing scheme is not vulnerable to framing, and is therefore \textit{frameproof}.

\subsubsection{Eternity Service.} Eternity service aims to use redundancy and scattering techniques to replicate data across a large set of machines (such as the Internet), and add anonymity mechanisms to increase the cost of selective service denial attacks~\cite{Ross[96]}. We know that secret sharing is a provably secure scattering technique. Moreover, hidden access structures guarantee that neither an insider nor outsider (polynomial) adversary can know the access structure without collecting all shares of some authorized subset, making it impossible for the adversary to identify targets for selective service denial attacks. Hence, access structure hiding secret sharing fits the requirements for realizing eternity service. 

\subsubsection{Undetectable Honeypots.} Honeypots are information systems resources conceived to attract, detect, and gather attack information. Honeypots serve several purposes, including the following: 
\begin{itemize}
	\item distracting adversaries from more valuable machines on a network, 
	\item providing early warning about new attack and exploitation trends, 
	\item allowing in-depth examination of adversaries during and after exploitation of the honeypot. 
\end{itemize}
The value of a honeypot is determined by the information that we can obtain from it. Monitoring the data that enters and leaves a honeypot lets us gather information that is not available to network intrusion detection systems. For example, we can log the key strokes of an interactive session even if encryption is used to protect the network traffic. Although the concept is not new~\cite{Stoll[89]}, interest in protection and countermeasure mechanisms using honeypots has become popular only during the past two decades~\cite{Beham[13],Kulkarni[12],Hussein[15],Fred[14]}. For an introduction to the topic, we refer the interested reader to~\cite{Lance[03]}. Unfortunately, honeypots are easy to detect and avoid~\cite{Joni[17],Kraw[04],Eyal[17],Wang[10],Zou[06],Osama[12],Rowe[06],Neil[06],Holz[05],Dorn[04],Vrable[05]}.

In scenarios wherein secret sharing is used to distribute a secret (e.g., encryption keys) among multiple servers, hidden access structures would allow the dealer to provide all servers with \emph{legitimate} shares while enforcing zero or negligible information leakage without some authorized subset's shares. Moreover, each share corresponds to the same secret and the entropy of all shares is equal. Since access structures are hidden, the dealer can keep servers out of the minimal authorized subsets without revealing this information. Shares from such servers are ``useless'' since their participation is only optional for successful secret reconstruction. Because their shares do not hold any value without the participation of an authorized subset, these servers can be exposed to attackers and turned into honeypots. Furthermore, since the protocol allows all servers to participate in secret reconstruction, identifying honeypots is impossible until successful secret reconstruction. 

\subsection{\textbf{Organization}} 
The rest of the paper is organized as follows: Section~\ref{Sec2} recalls necessary definitions and constructs that are required for our constructions and solutions. Section~\ref{Sec3} formally defines access structure hiding verifiable secret sharing scheme. In Section~\ref{construction}, we construct the first building block for our secret sharing scheme, i.e., our superpolynomial size set-systems and vector families. Section~\ref{work} establishes that our set-systems can be operated upon via the vector families. Section~\ref{sec5} extends the idea from Section~\ref{work} by introducing \emph{access structure tokens} and giving an example procedure to generate access structure tokens to ``encode'' \emph{any} monotone access structure. In Section~\ref{prim}, we introduce a new variant of LWE, called \textsf{PRIM-LWE}. We present our access structure hiding verifiable secret sharing scheme in Section~\ref{sec6}. We conclude with a conclusion in Section~\ref{conclude}.

\section{Preliminaries}\label{Sec2}
For a positive integer $n$, let $[n]$ denote the set of the first $n$ positive integers, i.e., $[n] = \{1,\dots,n\}$. 

\begin{theorem}[Dirichlet's Theorem]\label{Dr}
	For all coprime integers $c$ and $q$, there are infinitely many primes, $p$, of the form $p = c \bmod q.$
\end{theorem}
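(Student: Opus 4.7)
The plan is to follow Dirichlet's classical approach via characters and L-functions modulo $q$. First I would introduce the group of Dirichlet characters mod $q$: completely multiplicative maps $\chi : (\mathbb{Z}/q\mathbb{Z})^{\ast} \to \mathbb{C}^{\ast}$ extended by zero on integers sharing a factor with $q$. The orthogonality relation
\[
\frac{1}{\varphi(q)} \sum_{\chi} \overline{\chi(c)}\,\chi(n) \;=\; \begin{cases} 1 & \text{if } n \equiv c \pmod q,\\ 0 & \text{otherwise,}\end{cases}
\]
valid for $\gcd(n,q)=1$, lets me single out exactly the integers lying in the residue class $c \bmod q$.

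Second, I would bring in the Dirichlet L-functions $L(s,\chi) = \sum_{n \geq 1} \chi(n) n^{-s}$, which for $\Re(s)>1$ admit the Euler product $\prod_{p}(1 - \chi(p) p^{-s})^{-1}$. Taking logarithms and expanding, $\log L(s,\chi) = \sum_{p} \chi(p) p^{-s} + R(s,\chi)$, where $R(s,\chi)$ collects the contribution from higher prime powers and stays uniformly bounded as $s \to 1^{+}$. Combining this with the orthogonality above gives
\[
\sum_{p \,\equiv\, c \,(\bmod\, q)} p^{-s} \;=\; \frac{1}{\varphi(q)} \sum_{\chi} \overline{\chi(c)} \log L(s,\chi) \;+\; O(1).
\]
Now I would examine the right-hand side as $s \to 1^{+}$. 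For the principal character $\chi_0$, $L(s,\chi_0) = \zeta(s)\prod_{p\mid q}(1 - p^{-s})$, inheriting the simple pole of $\zeta$, so $\log L(s,\chi_0) \sim \log\frac{1}{s-1}$. For each nonprincipal $\chi$, partial summation on the bounded character sums shows that $L(s,\chi)$ extends continuously (in fact analytically) to $\Re(s)>0$. Hence if $L(1,\chi) \neq 0$ for every nonprincipal $\chi$, then $\log L(s,\chi)$ stays bounded near $s=1$, and the identity above forces $\sum_{p \equiv c} p^{-s} \to \infty$, giving infinitely many primes $p \equiv c \pmod q$.

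The main obstacle, and the reason this proof is nontrivial, is establishing the nonvanishing $L(1,\chi) \neq 0$ for every nonprincipal $\chi$. For complex characters I would argue by contradiction: consider $\prod_{\chi} L(s,\chi)$, whose Dirichlet series has nonnegative coefficients (it factors as the Dedekind zeta function of $\mathbb{Q}(\zeta_q)$, or more elementarily one checks nonnegativity directly from the Euler product). A vanishing $L(1,\chi)=0$ for complex $\chi$ would pair with $L(1,\overline{\chi})=0$, producing a double zero at $s=1$ that overcompensates the simple pole contributed by $\chi_0$; the product would then vanish at $s=1$ while having nonnegative Dirichlet coefficients that prevent it from dropping below a positive constant on $(1,\infty)$, a contradiction. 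The case of real (quadratic) $\chi$ is more delicate and needs a separate argument — either via Dirichlet's class number formula, which expresses $L(1,\chi)$ as an explicitly positive multiple of a class number, or via a Mertens-style estimate on $\sum_{n\leq x} \chi(n)/\sqrt{n}$ that rules out $L(1,\chi)=0$. With both cases settled, the divergence argument of the previous paragraph completes the proof.
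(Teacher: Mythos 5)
Your outline is the standard Dirichlet proof via characters and $L$-functions, and it is sound: the orthogonality relation to isolate the residue class $c \bmod q$, the logarithm of the Euler product with a bounded contribution from higher prime powers, the pole of $L(s,\chi_0)$ at $s=1$, and the reduction of everything to the nonvanishing $L(1,\chi)\neq 0$ for nonprincipal $\chi$ are all correctly arranged, and you rightly flag that the real (quadratic) character case needs its own argument (class number formula, Landau-type positivity, or the $\sum_n \chi(n)/\sqrt{n}$ estimate), since for a real character a simple zero would merely cancel the simple pole of the principal factor and the positivity argument that disposes of complex characters gives no contradiction. There is, however, nothing in the paper to compare this against: the paper states Dirichlet's Theorem purely as a classical preliminary (Theorem~\ref{Dr}), used only to guarantee the existence of suitable moduli $m$ with $2m+1$ prime, and refers the reader to Hardy and Wright~\cite{Hardy[80]} rather than proving it. So your proposal is a correct sketch of the classical argument, not a deviation from any in-paper proof; the only mild imprecision is the aside that $\prod_{\chi} L(s,\chi)$ ``is'' the Dedekind zeta function of $\mathbb{Q}(\zeta_q)$ --- this holds only up to Euler factors at primes dividing $q$ when imprimitive characters are involved --- but your alternative direct verification of nonnegativity of the coefficients from the Euler product is the cleaner route and makes the argument self-contained.
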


\begin{theorem}[Fermat's Little Theorem]\label{Fermat} 
	If $p$ is a prime and $c$ is any number coprime to $p$, then $c^{p-1} = 1 \bmod p$.
\end{theorem}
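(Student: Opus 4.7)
The plan is to prove Fermat's Little Theorem via the classical \emph{residue-permutation} argument, which works directly from the definition of congruence and avoids any appeal to group-theoretic machinery (such as Lagrange's theorem) that the paper has not introduced. The central idea is to show that multiplication by $c$ shuffles the nonzero residues modulo $p$, and then to compare products.

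First I would fix the set $S = \{1,2,\ldots,p-1\}$ of nonzero residues modulo $p$ and consider the map $\phi \colon S \to \mathbb{Z}/p\mathbb{Z}$ defined by $\phi(x) = cx \bmod p$. Step one: verify $\phi(S)\subseteq S$. Since $p$ is prime, any $x \in S$ satisfies $\gcd(x,p)=1$; combined with the hypothesis $\gcd(c,p)=1$ this gives $\gcd(cx,p)=1$, hence $cx \not\equiv 0 \pmod p$, so $\phi(x) \in S$. Step two: verify $\phi$ is injective on $S$. If $cx \equiv cy \pmod p$, then $p \mid c(x-y)$; since $\gcd(c,p)=1$, primality of $p$ (Euclid's lemma) forces $p \mid (x-y)$, and because $x,y \in \{1,\ldots,p-1\}$ this means $x=y$. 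Step three: since $S$ is finite and $\phi$ is an injection from $S$ into itself, $\phi$ is a bijection on $S$.

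Step four is the key computation. Because $\phi$ permutes $S$, reordering the factors gives $\prod_{x\in S}\phi(x) = \prod_{x\in S} x = (p-1)!$, while by definition of $\phi$ we also have $\prod_{x\in S}\phi(x) \equiv \prod_{x\in S}(cx) = c^{\,p-1}(p-1)! \pmod p$. Equating these two expressions yields $(p-1)! \equiv c^{\,p-1}(p-1)! \pmod p$. Step five: cancel $(p-1)!$ from both sides. This is permissible because each factor $1,2,\ldots,p-1$ is coprime to $p$ (again using primality), so $(p-1)!$ is invertible modulo $p$. Cancellation gives $c^{\,p-1} \equiv 1 \pmod p$, which is the stated conclusion.

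The argument is entirely routine and presents no real obstacle; the one subtlety worth flagging in the writeup is that the primality of $p$ is consumed in two places, namely in applying Euclid's lemma to establish injectivity of $\phi$ and in asserting the invertibility of $(p-1)!$ modulo $p$ for the final cancellation step. Both uses should be stated explicitly so it is clear where the hypothesis that $p$ is prime, rather than an arbitrary modulus, is essential.
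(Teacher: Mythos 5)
Your proof is correct and complete: the residue-permutation argument (multiplication by $c$ permutes the nonzero residues mod $p$, compare the products, cancel $(p-1)!$ using its invertibility) is the standard elementary proof of Fermat's Little Theorem, and you correctly flag both places where primality of $p$ is used. Note that the paper itself gives no proof of this statement—it is quoted as a classical fact in the preliminaries with a reference to Hardy and Wright—so there is no in-paper argument to compare against; your self-contained derivation is perfectly adequate.
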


\begin{theorem}[Euler's Theorem]\label{Euler}
	Let $y$ be a positive integer and $\mathbb{Z}_y^*$ denote the multiplicative group modulo $y$. Then for every integer $c$ that is coprime to $y$, it holds that: $c^{\varphi(y)} = 1 \bmod y,$ where $\varphi(y) = |\mathbb{Z}_y^*|$ denotes Euler's totient function.
\end{theorem}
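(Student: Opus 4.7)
The plan is to prove Euler's Theorem via the classical bijection argument in the finite multiplicative group $\mathbb{Z}_y^*$. Since $c$ is coprime to $y$, the residue $c \bmod y$ is a unit, and hence an element of $\mathbb{Z}_y^*$. This reduces the claim to showing that every element of $\mathbb{Z}_y^*$ raised to the order of the group yields the identity.

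First, I would show that the map $\mu_c : \mathbb{Z}_y^* \to \mathbb{Z}_y^*$ defined by $x \mapsto cx \bmod y$ is a bijection. Injectivity follows from the existence of a multiplicative inverse $c^{-1}$ modulo $y$: if $cx \equiv cx' \bmod y$, then multiplying both sides by $c^{-1}$ yields $x \equiv x' \bmod y$. Surjectivity is then immediate from the finiteness of $\mathbb{Z}_y^*$, and moreover the image lies in $\mathbb{Z}_y^*$ because the product of two units is again a unit.

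Next, I would evaluate the product $\Pi = \prod_{x \in \mathbb{Z}_y^*} x$ in two ways. By definition the product equals $\Pi$; re-enumerating the same set via the bijection $\mu_c$ gives $\Pi \equiv \prod_{x \in \mathbb{Z}_y^*} (cx) \equiv c^{\varphi(y)}\Pi \bmod y$, since there are exactly $\varphi(y) = |\mathbb{Z}_y^*|$ factors of $c$. As $\Pi$ is a product of units, it is itself a unit modulo $y$, hence invertible; multiplying both sides by $\Pi^{-1}$ yields $c^{\varphi(y)} \equiv 1 \bmod y$, which is the desired conclusion.

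Since Euler's Theorem is a textbook result, I do not anticipate any substantive obstacle; the argument is essentially bookkeeping, and the only point requiring care is the verification that $\mu_c$ lands inside $\mathbb{Z}_y^*$ (so that $\Pi$ and its re-enumeration make sense). A slicker alternative would be to invoke Lagrange's theorem: the order of $c$ in $\mathbb{Z}_y^*$ divides $\varphi(y)$, whence $c^{\varphi(y)} = 1 \bmod y$ is immediate. I would present the bijection proof for self-containedness, since it avoids any prior appeal to finite group theory beyond the definition of a group and specializes cleanly, upon setting $y = p$ prime, to recover Fermat's Little Theorem (\Cref{Fermat}) as stated in the preceding theorem.
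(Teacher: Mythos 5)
Your proof is correct: the bijection/product argument (equivalently, the Lagrange's-theorem shortcut you mention) is the standard derivation of Euler's Theorem, and every step — the map $x \mapsto cx$ landing in $\mathbb{Z}_y^*$, its injectivity via $c^{-1}$, and the cancellation of the unit $\Pi$ — is sound. Note that the paper itself offers no proof of this statement; it is quoted as a classical preliminary with a pointer to the literature (Hardy and Wright), so there is no in-paper argument to compare against, and your self-contained write-up is exactly the textbook treatment one would expect.
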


For a detailed background on \Cref{Dr,Fermat,Euler}, we refer the interested reader to~\cite{Hardy[80]}.

\begin{definition}[Hadamard/Schur product]\label{Hada}
	\emph{Hadamard/Schur product of two vectors $\textbf{u}, \textbf{v} \in \mathcal{R}^n$, denoted by $\textbf{u} \circ \textbf{v}$, returns a vector in the same linear space whose $i$-th element is defined as: $(\textbf{u} \circ \textbf{v})[i] = \textbf{u}[i] \cdot \textbf{v}[i],$ for all $i \in [n].$}
\end{definition}

\begin{definition}[Negligible Function]\label{Neg}
	\emph{For security parameter $\omega$, a function $\epsilon(\omega)$ is called \textit{negligible} if for all $c > 0$, there exists a $\omega_0$ such that $\epsilon(\omega) < 1/\omega^c$ for all $\omega > \omega_0$.}
\end{definition}

\begin{definition}[Computational Indistinguishability~\cite{Gold[82]}]
	\emph{Let $X = \{X_\lambda\}_{\lambda \in \mathbb{N}}$ and $Y = \{Y_\lambda\}_{\lambda \in \mathbb{N}}$ be ensembles, where $X_\lambda$'s and $Y_\lambda$'s are probability distributions over $\{0,1\}^{\kappa(\lambda)}$ for some polynomial $\kappa(\lambda)$. We say that $\{X_\lambda\}_{\lambda \in \mathbb{N}}$ and $\{Y_\lambda\}_{\lambda \in \mathbb{N}}$ are polynomially/computationally indistinguishable if the following holds for every (probabilistic) polynomial-time algorithm $\mathcal{D}$ and all $\lambda \in \mathbb{N}$:
	\[\Big| \Pr[t \leftarrow X_\lambda: \mathcal{D}(t) = 1] - \Pr[t \leftarrow Y_\lambda: \mathcal{D}(t) = 1] \Big| \leq \epsilon(\lambda),\]
	where $\epsilon$ is a negligible function.}
\end{definition}	

\begin{definition}[Access Structure] 
	\emph{Let $\mathcal{P} = \{P_1, \dots, P_\ell\}$ be a set of parties. A collection $\Gamma \subseteq 2^{\mathcal{P}}$ is monotone if $\mathcal{A} \in \Gamma$ and $\mathcal{A} \subseteq \mathcal{B}$ imply that $\mathcal{B} \in \Gamma$. An access structure $\Gamma \subseteq 2^{\mathcal{P}}$ is a monotone collection of non-empty subsets of $\mathcal{P}$. Sets in $\Gamma$ are called authorized, and sets not in $\Gamma$ are called unauthorized.}
\end{definition}

If $\Gamma$ consists of all subsets of $\mathcal{P}$ with size greater than or equal to a fixed threshold $t$ $(1 \leq t \leq \ell)$, then $\Gamma$ is called a $t$-threshold access structure. In its most general form, an access structure can be any monotone NP language. This was first observed by Steven Rudich in private communications with Moni Naor \cite{Beimel[11],MoniNaor[06]}.

\begin{definition}[Closure]
	    \emph{Let $\mathcal{P}$ be a set of participants and $\mathcal{A} \in 2^{\mathcal{P}}$ . The closure of $\mathcal{A}$, denoted by cl$(\mathcal{A})$, is the set
		\[\text{cl}(\mathcal{A}) = \{\mathcal{C}: \mathcal{C}^* \subseteq \mathcal{C} \subseteq \mathcal{P} \text{ for some } \mathcal{C}^* \in \mathcal{A}\}.\]} 
\end{definition}

\begin{definition}[Minimal Authorized Subset]\label{GammaDef}
	\emph{For an access structure $\Gamma$, a family of minimal authorized subsets $\Gamma_0 \in \Gamma$ is defined as:
	\[
		\Gamma_0 = \{\mathcal{A} \in \Gamma: \mathcal{B} \not\subset \mathcal{A} \text{ for all } \mathcal{B} \in \Gamma \setminus \{ \mathcal{A} \}\}.
	\]   }
\end{definition}

Hence, the family of minimal access subsets $\Gamma_0$ uniquely determines the access structure $\Gamma$, and it holds that: $\Gamma =$ cl$(\Gamma_0)$, where cl denotes closure.

\begin{definition}[Computational Secret Sharing~\cite{Hugo[93]}]\label{def.1}
	\emph{A computational secret sharing scheme with respect to an access structure $\Gamma$, security parameter $\omega$, a set of $\ell$ polynomial-time parties $\mathcal{P} = \{P_1, \dots, P_\ell \}$, and a set of secrets $\mathcal{K}$, consists of a pair of polynomial-time algorithms, {\fontfamily{cmtt}\selectfont (Share,Recon)}, where: 
	\begin{itemize}
		\item {\fontfamily{cmtt}\selectfont Share} is a randomized algorithm that gets a secret $k \in \mathcal{K}$ and access structure $\Gamma$ as inputs, and outputs $\ell$ shares, $\{\mathrm{\Pi}^{(k)}_1, \dots, \mathrm{\Pi}^{(k)}_\ell\},$ of $k$,
		\item {\fontfamily{cmtt}\selectfont Recon} is a deterministic algorithm that gets as input the shares of a subset $\mathcal{A} \subseteq \mathcal{P}$, denoted by $\{\mathrm{\Pi}^{(k)}_i\}_{i \in \mathcal{A}}$, and outputs a string in $\mathcal{K}$,
	\end{itemize}
	such that, the following two requirements are satisfied:
	\begin{enumerate}
		\item \textit{Perfect Correctness:} for all secrets $k \in \mathcal{K}$ and every authorized subset $\mathcal{A} \in \Gamma$, it holds that:\\ Pr[{\fontfamily{cmtt}\selectfont Recon}$(\{\mathrm{\Pi}^{(k)}_i\}_{i \in \mathcal{A}}, \mathcal{A}) = k] = 1,$ 
		\item \textit{Computational Secrecy:} for every unauthorized subset $\mathcal{B} \notin \Gamma$ and all different secrets $k_1, k_2 \in \mathcal{K}$, it holds that the distributions $\{\mathrm{\Pi}_i^{(k_1)}\}_{i \in \mathcal{B}}$ and $\{\mathrm{\Pi}_i^{(k_2)}\}_{i \in \mathcal{B}}$ are computationally indistinguishable (w.r.t. $\omega)$.
	\end{enumerate}}
\end{definition}

\begin{remark}[Perfect Secrecy]\label{remark}
	If $\forall k_1, k_2 \in \mathcal{K}$ with $k_1 \neq k_2$, the distributions $\{\mathrm{\Pi}_i^{(k_1)}\}_{i \in \mathcal{B}}$ and $\{\mathrm{\Pi}_i^{(k_2)}\}_{i \in \mathcal{B}}$ are identical, then the scheme is called a perfect secret sharing scheme.
\end{remark}

Steven Rudich proved that if NP $\neq$ coNP, then efficient (i.e., polynomial-time) perfect secret sharing is impossible for Hamiltonian and monotone NP access structures, and efficient computational secret sharing is the best that we can do \cite{KomaYog[14]}.

\begin{definition}[Disjoint Union]\label{Dis}
	\emph{Let $n \geq 2$ be an integer and $\mathcal{H} = \{H_i: i \in [n]\}$ be a family of sets. Then, disjoint union of $\mathcal{H}$ is given as:
	\[\bigsqcup_{i \in [n]} H_i = \bigcup_{i \in [n]} \left\{(h,i): h \in H_i\right\}.\]}
\end{definition}

\subsubsection{\textbf{The Hybrid Argument}.}
The hybrid argument \cite{GoldMic[84]}, which is essentially the triangle inequality, is one of the most fundamental tools used in security proofs \cite{ArnoMarc[21]}. In cryptography, the canonical application of the hybrid argument is towards constructing the (inductive) arguments underlying various pseudorandom generators~\cite{BlumMicali[84],Yao[82],GoldLevin[89],HasRus[99],Naom[91],NaomAvi[92],Naom[92],RusNisan[94]}. Here, we give an informal introduction to the hybrid argument. For a formal account, we refer the interested reader to \cite{MarcArno[21]}. 

The hybrid argument is a technique to bound the closeness of two distributions, $D_0$ and $D_n$, via a polynomially long sequence of ``hybrids'', $D_0, D_1, \ldots, D_n$, which are constructed such that any two consecutive hybrids differ in exactly one feature. The central idea behind the hybrid argument is that if a (bounded or unbounded) distinguisher can distinguish the ``extreme hybrids'' $D_0$ and $D_n$, then it can also distinguish any adjacent hybrids $D_i$ and $D_{i+1}$, which it cannot do by the design of the hybrids. Therefore, the triangle inequality (for statistical or computational distance) can be used to obtain a bound on the distance between $D_0$ and $D_n$ by bounding the distance between neighboring distributions $D_i$ and $D_{i+1}$ for all $i \in \{0\} \cup [n - 1]$.

\subsubsection{\textbf{Set Systems with Restricted Intersections}.}
Extremal set theory is a field within combinatorics which deals with determining or estimating the size of set-systems, satisfying certain restrictions. The first result in extremal set theory was from Sperner~\cite{Sperner[28]} in 1928, establishing the maximum size of an antichain, i.e., a set-system where no member is a superset of another. But, it was Erd\H{o}s et al.'s pioneering work in 1961~\cite{Erdos[61]} that started systematic research on extremal set theory problems. Our work in this paper concerns a subfield of extremal set theory, called \textit{intersection theorems}, wherein set-systems under certain intersection restrictions are constructed, and bounds on their sizes are derived. We shall not give a full account of the known intersection theorems and mention only the results that are relevant to our set-system and its construction. For a broader account of intersection theorems over finite sets, we refer the interested reader to the comprehensive survey by Frankl and Tokushige~\cite{Frankl[16]}. For an introduction to intersecting and cross-intersecting families related to hypergraph coloring, please see~\cite{AMDD[2020]}. 

\begin{lemma}[\cite{Gro[00]}]\label{cor2}
	Let $m = \prod_{i=1}^{r} p_i^{\alpha_i}$ be a positive integer with $r > 1$ different prime divisors. Then there exists an explicitly constructible polynomial $Q$ with $n$ variables and degree $O(n^{1/r})$, which is equal to $0$ on $z = (1,1, \dots, 1) \in \{0,1\}^n$ but is nonzero $\bmod~ m$ on all other $z \in \{0,1\}^n$. Furthermore, $\forall z \in \{0,1\}^n$ and $\forall i \in \{1, \dots ,r\}$, it holds that: $Q(z) \in \{0,1\} \bmod p_i^{\alpha_i}$.
\end{lemma}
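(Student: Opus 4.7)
The plan is to build $Q$ via the Barrington--Beigel--Rudich (BBR) representation of the OR function modulo a composite, followed by a Chinese Remainder Theorem (CRT) assembly. First, I would substitute $w=\mathbf{1}-z$ (with $\mathbf{1}$ the all-ones vector and the subtraction taken component-wise over $\{0,1\}$), so that the target $z=\mathbf{1}$ becomes $w=\mathbf{0}$. The problem then reduces to finding a polynomial $P(w)$ of degree $O(n^{1/r})$ that satisfies $P(\mathbf{0})\equiv 0 \bmod m$, $P(w)\not\equiv 0 \bmod m$ for every nonzero $w\in\{0,1\}^n$, and $P(w)\in\{0,1\}\bmod p_i^{\alpha_i}$ for each prime power $p_i^{\alpha_i}\mid m$. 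In other words, $P$ must \emph{represent} the OR of the bits of $w$ modulo $m$, with the added boolean property modulo each prime-power factor.

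Next, for each $i\in\{1,\dots,r\}$ I would construct a symmetric polynomial $P_i(w)$ of degree $d_i=O(n^{1/r})$ with $P_i(w)\in\{0,1\}\bmod p_i^{\alpha_i}$, such that the set $A_i=\{\,s\in\{0,\dots,n\} : P_i(w)\equiv 1 \bmod p_i^{\alpha_i}\text{ when }|w|=s\,\}$ obeys $0\notin A_i$ and $\bigcup_{i=1}^{r} A_i=\{1,2,\dots,n\}$. Because $w\in\{0,1\}^n$, every symmetric polynomial in $w$ reduces to a polynomial in the Hamming weight $|w|$ via the elementary symmetric polynomials, and binomial forms $\binom{|w|}{k}$ can, by Lucas' theorem, be tuned to detect prescribed base-$p_i$ digit patterns of $|w|$ while remaining $\{0,1\}$-valued modulo $p_i^{\alpha_i}$. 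The BBR balancing idea partitions $\{1,\dots,n\}$ across the $r$ primes so that each $P_i$ need only detect $\approx n^{1/r}$ residue classes; a product of $O(n^{1/r})$ such binomial factors then yields $d_i=O(n^{1/r})$.

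Finally, I would CRT-combine the detectors: since the $p_i^{\alpha_i}$ are pairwise coprime, pick idempotents $e_i\in\mathbb{Z}_m$ with $e_i\equiv 1\bmod p_i^{\alpha_i}$ and $e_i\equiv 0\bmod p_j^{\alpha_j}$ for $j\neq i$, and set
\[ Q(z)\;=\;\sum_{i=1}^{r}e_i\,P_i(\mathbf{1}-z) \bmod m. \]
Modulo $p_i^{\alpha_i}$, $Q$ reduces to $P_i(\mathbf{1}-z)\in\{0,1\}$, giving the $\{0,1\}$-valued condition; and $Q(z)\equiv 0 \bmod m$ holds iff $P_i(\mathbf{1}-z)\equiv 0 \bmod p_i^{\alpha_i}$ for every $i$, which by the covering $\bigcup_i A_i=\{1,\dots,n\}$ forces $|\mathbf{1}-z|=0$, i.e.\ $z=\mathbf{1}$. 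The degree is $\max_i d_i=O(n^{1/r})$, and everything is explicit since the $P_i$ are prescribed symmetric polynomials and the $e_i$ are the standard CRT idempotents. The main obstacle is the degree-economical design of the detectors $P_i$: one must simultaneously enforce the $\{0,1\}$-valued property modulo $p_i^{\alpha_i}$ (immediate for a single binomial $\binom{|w|}{k}$ mod a prime, but subtle mod a prime power and after multiplication) and the residue-class covering needed to separate $w=\mathbf{0}$ from every nonzero $w$; the BBR partitioning, which assigns each $P_i$ only an $n^{1/r}$-sized slice of weights, is precisely what keeps the total degree at $O(n^{1/r})$ rather than polynomial in $n$.
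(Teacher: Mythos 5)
Your route is essentially the paper's: the paper obtains this lemma by citing the Barrington--Beigel--Rudich OR-representing polynomial modulo $m$ (restated later in the paper as Theorem 2.1 of Barrington et al.) and substituting $x\mapsto 1-x$, which is exactly your reduction to representing OR at $w=\mathbf{1}-z$; your CRT-idempotent assembly of per-prime-power weight detectors is precisely the internal structure of that cited construction, and your exponentiation idea (raising each detector to an exponent that is a multiple of $\varphi(p_i^{\alpha_i})$ and at least $\alpha_i$) is the right way to obtain the $\{0,1\}$ values modulo $p_i^{\alpha_i}$ claimed here. Two pieces of your sketch need tightening. First, the covering should be phrased through divisibility: choose $e_i$ minimal with $p_i^{e_i}>\lceil n^{1/r}\rceil$ and take $A_i=\{s: p_i^{e_i}\nmid s\}$; then $0\notin A_i$, and $\bigcup_i A_i\supseteq\{1,\dots,n\}$ because $\prod_i p_i^{e_i}>n$ --- it is this product condition, not a partition of $\{1,\dots,n\}$ among the primes, that forces $|w|=0$ when all detectors vanish. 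Second, the degree accounting ``a product of $O(n^{1/r})$ such binomial factors then yields $d_i=O(n^{1/r})$'' does not work as written: degrees add under multiplication, and the digit detectors $\binom{s}{p_i^{j}}$ (elementary symmetric polynomials) have degree $p_i^{j}$, so a product of $n^{1/r}$ of them would overshoot. The bound holds either because a single powered binomial such as $\binom{s-1}{p_i^{e_i}-1}$ already detects $p_i^{e_i}\mid s$ via Lucas and has degree $p_i^{e_i}-1=O(n^{1/r})$, or because the product runs over only the $e_i=O(\log n)$ digit positions $j<e_i$, whose degrees $p_i^{j}$ sum geometrically to $O(n^{1/r})$; in either case the prime-power correction multiplies the degree by a constant depending only on $m$. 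With those repairs your argument is a complete and correct proof, re-deriving the cited BBR/Grolmusz polynomial rather than giving a genuinely different route.
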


\begin{theorem}[\cite{Gro[00]}]\label{thm}
	Let $m$ be a positive integer, and suppose that $m$ has $r > 1$ different prime divisors: $m = \prod_{i=1}^{r} p_i^{\alpha_i}$. Then there exists $c = c(m) > 0$, such that for every integer $h > 0$, there exists an explicitly constructible uniform set-system $\mathcal{H}$ over a universe of $h$ elements such that:
	\begin{enumerate}
		\item $|\mathcal{H}| \geq \exp \left( c \dfrac{(\log h)^r}{(\log \log h)^{r-1}} \right)$,
		\item $\forall H \in \mathcal{H}:|H| = 0 \bmod m$,
		\item $\forall G, H \in \mathcal{H}, G \neq H:|G \cap H| \not= 0 \bmod m$.
	\end{enumerate}
\end{theorem}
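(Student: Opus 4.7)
The proof of Theorem \ref{mainThm} combines the uniform set-system from Theorem \ref{thm} with an $l$-fold product amplification. The hypothesis $l<\min_i p_i$ is used crucially: it guarantees $\gcd(l,m)=1$, so $l$ is invertible modulo $m$ and the sizes $s$ and $ls$ share the same divisibility pattern by $m$.

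\textbf{Step 1 (base family with $t$-wise intersections).} I apply Theorem \ref{thm} on a universe $U_0$ of size $h_0$, chosen so that $lh_0$ is at most $h$ up to a lower-order additive term. This yields a uniform set-system $\mathcal{H}_0$ of cardinality $M:=\exp\!\bigl(c_0(\log h)^r/(\log\log h)^{r-1}\bigr)$, with every set of size $s\equiv 0\pmod m$ and every pairwise intersection of distinct sets $\not\equiv 0\pmod m$. I then upgrade this to the stronger \emph{$t$-wise} property: for any $t'\leq t$ distinct members $H_{v_1},\dots,H_{v_{t'}}$, the cardinality $\bigl|\bigcap_k H_{v_k}\bigr|$ is $\not\equiv 0\pmod m$. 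The upgrade follows from the polynomial $Q$ of Lemma \ref{cor2}: in the Grolmusz indexing, the $t'$-wise intersection cardinality corresponds to $Q$ evaluated at the componentwise meet $v_1\wedge\cdots\wedge v_{t'}$, and non-degeneracy prevents this meet from being the all-ones vector, on which alone $Q$ vanishes modulo $m$.

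\textbf{Step 2 (two-sized amplification).} I partition a universe $U$ of $h$ elements into $l$ disjoint blocks $U_1,\dots,U_l$ together with a small shared anchor region $A$ (of size $O(m)$, absorbed in lower-order terms). I embed an isomorphic copy $\mathcal{H}_0^{(i)}$ of $\mathcal{H}_0$ into each $U_i$, and I choose anchor sets $A_1,\dots,A_l\subseteq A$ by an auxiliary invocation of Theorem \ref{thm} so that each $|A_i|\equiv 0\pmod m$ while every $t'$-wise intersection of distinct $A_i$'s is $\not\equiv 0\pmod m$. I then define
\begin{itemize}
\item \emph{small} sets $\tilde{H}^{(i)}=H^{(i)}\cup A_i$ for $H\in\mathcal{H}_0$, $i\in[l]$, all of common size $s$, giving $|\mathcal{H}_S|=lM$;
\item \emph{big} sets $\tilde{B}_{\vec{j}}=\bigcup_{i=1}^l\bigl(H_{j_i}^{(i)}\cup A_i\bigr)$ for $\vec{j}\in[M]^l$, all of common size $ls$, giving $|\mathcal{H}_B|=M^l$.
\end{itemize}
Thus $\mathcal{H}=\mathcal{H}_S\cup\mathcal{H}_B$ has $|\mathcal{H}|=lM+M^l$, matching condition (1), while condition (2) follows immediately from the size pattern $\{s,ls\}$.

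\textbf{Step 3 (verification of $t$-wise restricted intersections).} Given a non-degenerate collection $\{F_1,\dots,F_{t'}\}\subseteq\mathcal{H}$, I compute $|F_1\cap\cdots\cap F_{t'}|$ by summing contributions from $U_1,\dots,U_l$ and $A$. In each block $U_i$, the restriction $F_k\cap U_i$ is either empty (when $F_k$ is a small set living in a different block), a full copy of some $H\in\mathcal{H}_0$ (which contributes $s\equiv 0\pmod m$ in a trivial direction), or a member of $\mathcal{H}_0^{(i)}$ whose $t'$-wise restricted intersection is $\not\equiv 0\pmod m$ by Step 1. The anchor contribution is the intersection of the relevant $A_i$'s, nonzero modulo $m$ by construction. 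Assembling these contributions modulo $m$ and using $\gcd(l,m)=1$ to invert any $l$-multiples that appear, I obtain $|F_1\cap\cdots\cap F_{t'}|\not\equiv 0\pmod m$.

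\textbf{Main obstacle.} The central technical difficulty is precisely the cross-block case: a non-degenerate collection of small sets drawn from distinct blocks has block-wise empty intersections, which naively violates condition (3). This is exactly why the anchor structure in Step 2 is necessary; designing $A_1,\dots,A_l$ so that their $t'$-wise intersections carry the correct nonzero residues modulo $m$, while simultaneously ensuring that $|\tilde{H}^{(i)}|=s$ and $|\tilde{B}_{\vec{j}}|=ls$ remain divisible by $m$ and that mixed collections of small and big sets behave correctly, requires a careful case analysis. Managing the bookkeeping across all mixed configurations, rather than any single deep step, is what I expect to occupy most of the proof.
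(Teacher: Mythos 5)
Your proposal does not prove the statement at hand. The statement is Theorem~\ref{thm} itself --- Grolmusz's uniform set-system over a universe of $h$ elements with $|\mathcal{H}|\geq\exp\bigl(c(\log h)^r/(\log\log h)^{r-1}\bigr)$, all set sizes $\equiv 0\bmod m$, and all pairwise intersections $\not\equiv 0\bmod m$. Your Step~1 begins by ``applying Theorem~\ref{thm}'' and your Step~2 makes ``an auxiliary invocation of Theorem~\ref{thm}'' to build the anchor sets, so the object you are asked to construct is used as a black box; as a proof of Theorem~\ref{thm} the argument is circular. What you have actually sketched is (a variant of) the paper's proof of Theorem~\ref{mainThm}: the $l$-fold amplification producing a \emph{non-uniform} family with two set sizes $s$ and $ls$ and $t$-wise restricted intersections, which the paper carries out in Proposition~\ref{main_construction} and Theorem~\ref{superpolynomial_set_systems}. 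That is a different (and in fact non-uniform) object, so it cannot serve as the required uniform family even if every step were repaired, and incidentally your bookkeeping there also drifts from the paper's: the paper identifies the common cores of the $l$ copies into a single set $A$ and pads only the big sets with a fresh set $B$ of size $(l-1)|A|$, whereas your per-block anchors $A_i$ added to the small sets would change their size away from $s$ and leave the claimed ``common size $s$'' unjustified.

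A genuine proof of Theorem~\ref{thm} has to build the family from the Barrington--Beigel--Rudich polynomial, i.e.\ the route the paper itself reproduces (in generalized $t$-wise form) in Theorem~\ref{grolmusz_k_intersections}: take the degree-$d$ polynomial $Q$ with $d=O(n^{1/r})$ from Lemma~\ref{cor2}/Theorem~\ref{Thm5}, reduce its coefficients modulo $m$ to get $\tilde{Q}$, index the sets by vectors $y\in[0,n-1]^n$, take as universe elements the blocks of the partitions $\mathcal{P}_{i_1,\ldots,i_l}$ attached to the monomials of $\tilde{Q}$, and observe that the size of the set indexed by $y$ equals $\tilde{Q}(1,\ldots,1)\equiv 0\bmod m$ while the intersection size for $y_1\neq y_2$ equals $\tilde{Q}$ evaluated at the agreement indicator, which is nonzero modulo $m$. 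One then bounds the universe by $g<2(m-1)n^{2d}/d!$ and converts ``$n^n$ sets over $g$ elements'' into the stated $\exp\bigl(c(\log h)^r/(\log\log h)^{r-1}\bigr)$ lower bound by choosing $n$ as large as possible for the given $h$ (the paper does this asymptotic step, via the Lambert $W$ function, in the corollary following Theorem~\ref{superpolynomial_set_systems}). Your proposal alludes to this machinery only in passing (``the Grolmusz indexing'') but never constructs the polynomial-based set system or performs the parameter analysis, which is precisely the content of the theorem you were asked to prove.
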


\subsubsection{\textbf{Matching Vectors}.} 
A matching vector family is a combinatorial object that is defined as:

\begin{definition}[\cite{Zeev[11]}]
	\emph{Let $S \subseteq \mathbb{Z}_m \setminus \{0\}$, and $\langle \cdot, \cdot \rangle$ denote the inner product. We say that subsets $\mathcal{U} = \{\textbf{u}_i\}_{i=1}^N$ and $\mathcal{V} = \{\textbf{v}_i\}_{i=1}^N$ of vectors in $(\mathbb{Z}_m)^h$ form an $S$-matching family if the following two conditions are satisfied: 
		\begin{itemize}
			\item $\forall i \in [N],$ it holds that: $\langle \textbf{u}_i, \textbf{v}_i \rangle = 0 \bmod m$, 
			\item $\forall i,j \in [N]$ such that $i \neq j$, it holds that: $\langle \textbf{u}_i, \textbf{v}_j \rangle \bmod m \in S$.
	\end{itemize}}
\end{definition}

The question of bounding the size of matching vector families is closely related to the well-known extremal set theory problem of constructing set systems with restricted modular intersections. Matching vectors have found applications in the context of private information retrieval~\cite{Beimel[15],Beimel[12],Zeev[15],Zeev[11],Klim[09],Sergey[08],Liu[17]}, conditional disclosure of secrets~\cite{Liu[17]}, secret sharing~\cite{Liu[18]} and coding theory~\cite{Zeev[11]}. The first super-polynomial size matching vector family follows directly from the set-system constructed by Grolmusz~\cite{Gro[00]}. If each set $H$ in the set-system $\mathcal{H}$ defined by Theorem~\ref{thm} is represented by a vector $\textbf{u} \in (\mathbb{Z}_m)^h$, then it leads to the following family of $S$-matching vectors:
 
\begin{corollary}[to Theorem~\ref{thm}]
	For $h > 0$, suppose that a positive integer $m = \prod_{i=1}^{r} p_i^{\alpha_i}$ has $r > 1$ different prime divisors: $p_1, \ldots, p_r$. Then, there exists a set $S$ of size $2^{r}-1$ and a family of $S$-matching vectors \emph{$\{\textbf{u}_i\}$}${}^N_{i=1},$ \emph{$\textbf{u}_i$} $\in (\mathbb{Z}_m)^h$, such that, $N \geq \exp \left( c \dfrac{(\log h)^r}{(\log \log h)^{r-1}} \right)$.
\end{corollary}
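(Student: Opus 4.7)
The plan is to deduce the corollary as a near-immediate consequence of Theorem~\ref{thm} by identifying each member of the set-system $\mathcal{H}$ with its $\{0,1\}$-characteristic vector in $(\mathbb{Z}_m)^h$. Concretely, let $\mathcal{H}=\{H_1,\ldots,H_N\}$ be the set-system guaranteed by Theorem~\ref{thm}, so $N\geq\exp\!\left(c(\log h)^r/(\log\log h)^{r-1}\right)$. For each $H_i\in\mathcal{H}$, define $\textbf{u}_i\in(\mathbb{Z}_m)^h$ by $\textbf{u}_i[j]=1$ if the $j$-th universe element lies in $H_i$ and $\textbf{u}_i[j]=0$ otherwise. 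I would then take $\mathcal{U}=\mathcal{V}=\{\textbf{u}_i\}_{i=1}^N$; this single family plays the roles of both matching sides, which is allowed by the definition.

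The verification of the two matching conditions then reduces to the arithmetic of set sizes. First, $\langle\textbf{u}_i,\textbf{u}_i\rangle=|H_i|\equiv 0\pmod m$ by property~(2) of Theorem~\ref{thm}. Second, for $i\neq j$, $\langle\textbf{u}_i,\textbf{u}_j\rangle=|H_i\cap H_j|\not\equiv 0\pmod m$ by property~(3). Hence $\langle\textbf{u}_i,\textbf{u}_j\rangle\bmod m$ always lands in $\mathbb{Z}_m\setminus\{0\}$ for $i\neq j$, yielding a set $S\subseteq\mathbb{Z}_m\setminus\{0\}$ of admissible off-diagonal values.

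The only non-routine step is pinning down $|S|=2^r-1$. For this I would appeal to the structure of Grolmusz's construction via Lemma~\ref{cor2}: the set-system underlying Theorem~\ref{thm} is built from the polynomial $Q$, and the pairwise intersection sizes are realized as evaluations of $Q$ on $\{0,1\}$-inputs. By Lemma~\ref{cor2}, $Q(z)\in\{0,1\}\bmod p_i^{\alpha_i}$ for every $i\in[r]$ and every $z\in\{0,1\}^n$. Applying the Chinese Remainder Theorem to $\mathbb{Z}_m\cong\prod_{i=1}^{r}\mathbb{Z}_{p_i^{\alpha_i}}$, the possible residues $|H_i\cap H_j|\bmod m$ lie in the image of $\{0,1\}^r$ under CRT, a set of cardinality $2^r$; excluding the all-zero tuple (forbidden by condition~(3)) leaves at most $2^r-1$ values, so $S$ can be taken to be exactly this set.

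The only point that needs care is the claim that the pairwise intersection sizes are indeed governed by evaluations of $Q$ in the way described; this is implicit in how Grolmusz's construction turns $Q$ into set sizes, and I expect to discharge it by a brief reference to that construction rather than by reproducing it. Apart from this bookkeeping about where $S$ lives, no further estimates are required: the lower bound on $N$ is inherited verbatim from Theorem~\ref{thm}, and the inner-product computations are one-liners because the vectors are $\{0,1\}$-valued.
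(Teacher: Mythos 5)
Your proposal is correct and matches the paper's approach: the paper likewise obtains the corollary by representing each set of Grolmusz's set-system (Theorem~\ref{thm}) by a vector in $(\mathbb{Z}_m)^h$, with the diagonal/off-diagonal conditions following from the divisibility properties of the set sizes and pairwise intersections, and the bound $|S|=2^r-1$ coming from the fact that intersection sizes are $\{0,1\}$ modulo each prime power (Lemma~\ref{cor2}) combined with CRT, excluding the all-zero residue.
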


\subsubsection{\textbf{Lattices}.}
A lattice $\mathrm{\Lambda}$ of $\mathbb{R}^w$ is defined as a discrete subgroup of $\mathbb{R}^w$. In cryptography, we are interested in integer lattices, i.e., $\mathrm{\Lambda} \subseteq \mathbb{Z}^w$. Given $w$-linearly independent vectors $\textbf{b}_1,\dots,\textbf{b}_w \in \mathbb{R}^w$, a basis of the lattice generated by them can be represented as the matrix $\mathbf{B} = (\textbf{b}_1,\dots,\textbf{b}_w) \in \mathbb{R}^{w \times w}$. The lattice generated by $\mathbf{B}$ is the following set of vectors:
\[\mathrm{\Lambda} = \mathcal{L}(\textbf{B}) = \left\{ \sum\limits_{i=1}^w c_i \textbf{b}_i: c_i \in \mathbb{Z} \right\}.\]
The lattices that are of particular interest in lattice-based cryptography are called \textit{q-ary} lattices, and they satisfy the following condition: $$q \mathbb{Z}^w \subseteq \mathrm{\Lambda} \subseteq \mathbb{Z}^w,$$ for some (possibly prime) integer $q$. In other words, the membership of a vector $\textbf{x}$ in $\mathrm{\Lambda}$ is determined by $\textbf{x}\bmod q$. Given a matrix $\textbf{A} \in \mathbb{Z}^{w \times n}_q$ for some integers $q, w, n,$ we can define the following two $n$-dimensional \textit{q-ary} lattices,

\[\mathrm{\Lambda}_q(\textbf{A}) = \{\textbf{y} \in \mathbb{Z}^n: \textbf{y} = \textbf{A}^T\textbf{s} \bmod q \text{ for some } \textbf{s} \in \mathbb{Z}^w \}, \]
\[\hspace{-28mm} \mathrm{\Lambda}_q^{\perp}(\textbf{A}) = \{\textbf{y} \in \mathbb{Z}^n: \textbf{Ay} = \textbf{0} \bmod q \}.\]

The first \textit{q-ary} lattice is generated by the rows of $\textbf{A}$; the second contains all vectors that are orthogonal (modulo $q$) to the rows of $\textbf{A}$. Hence, the first \textit{q-ary} lattice, $\mathrm{\Lambda}_q(\textbf{A})$, corresponds to the code generated by the rows of $\textbf{A}$ whereas the second, $\mathrm{\Lambda}_q^{\perp}(\textbf{A})$, corresponds to the code whose parity check matrix is $\textbf{A}$. For a complete introduction to lattices, we refer the interested reader to the monographs by Gr\"{a}tzer~\cite{Gratzer[03],Gratzer[09]}.

\subsubsection{\textbf{Lattices and Cryptography}.}
Problems in lattices have been of interest to cryptographers for decades with the earliest work dating back to 1997 when Ajtai and Dwork~\cite{Ajtai[97]} proposed a lattice-based public key cryptosystem following Ajtai's~\cite{Ajtai[96]} seminal worst-case to average-case reductions for lattice problems, wherein he showed that if there is no efficient algorithm that approximates the decision version of the Shortest Vector Problem (SVP) with a polynomial approximation factor, then it is hard to solve the associated search problem exactly over a random choice of the underlying lattice~\cite{Shafi[02]}. This reduction gave us the first cryptographically meaningful lattice-based hardness assumption, which became an essential component in proving the security of numerous lattice-based cryptographic constructions. For a detailed introduction to lattice-based cryptography, we refer the interested reader to \cite{KatzVadim[21],JiaZhen[20],MicciGold[02],DanMicci[09],PhongStern[01]}.

\subsubsection{\textbf{Learning with Errors}.}
The learning with errors (LWE) problem~\cite{Reg[05]} has emerged as the most popular hard problem for constructing lattice-based cryptographic solutions. The majority of practical LWE-based cryptosystems are derived from its variants such as ring LWE~\cite{Reg[10]}, module LWE~\cite{Ade[15]}, cyclic LWE~\cite{Charles[20]}, continuous LWE~\cite{Bruna[20]}, middle-product LWE~\cite{Miruna[17]}, group LWE~\cite{NicMal[16]}, entropic LWE \cite{ZviVin[16]} and polynomial-ring LWE~\cite{Damien[09]}. Many cryptosystems have been constructed whose security can be proved under the hardness of the LWE problem, including (identity-based, attribute-based, leakage-resilient, fully homomorphic, functional, public-key/key-encapsulation) encryption~\cite{AnaFan[19],KimSam[19],WangFan[19],Reg[05],Gen[08],Adi[09],Reg[10],Shweta[11],Vinod[11],Gold[13],Jan[18],Hayo[19],Bos[18],Bos[16],WBos[15],Brak[14],Fan[12],Joppe[13],Adriana[12],Lu[18]}, oblivious transfer~\cite{Pei[08],Dott[18],Quach[20]}, (blind) signatures~\cite{Gen[08],Vad[09],Markus[10],Vad[12],Tesla[20],Dili[17],FALCON[20]}, pseudorandom functions with special algebraic properties~\cite{Ban[12],Boneh[13],Ban[14],Ban[15],Zvika[15],Vipin[19],KimDan[17],RotBra[17],RanChen[17],KimWu[17],KimWu[19],Qua[18]}, hash functions~\cite{Katz[09],Pei[06]}, secure matrix multiplication~\cite{Dung[16],Wang[17]}, classically verifiable quantum computation~\cite{Urmila[18]}, noninteractive zero-knowledge proof system for (any) NP language~\cite{Sina[19]}, obfuscation~\cite{Huijia[16],Gentry[15],Hal[17],ZviVin[16],AnanJai[16],CousinDi[18]}, multilinear maps \cite{Grg[13],Gentry[15],Gu[17]}, lossy-trapdoor functions \cite{BellKil[12],PeiW[08],HoWee[12]}, and many more~\cite{Peikert[16],Hamid[19]}. 

\begin{definition}[Decision-LWE~\cite{Reg[05]}]\label{decisionLWE}
	\emph{For positive integers $n$ and $q \geq 2$, and an error (probability) distribution $\chi = \chi(n)$ over $\mathbb{Z}_q$, the decision-LWE${}_{n, q, \chi}$ problem is to distinguish between the following pairs of distributions: 
		\[(\textbf{A}, \textbf{A} \textbf{s} + \textbf{e}) \quad \text{and} \quad (\textbf{A}, \textbf{u}),\] 
		where $\textbf{A} \xleftarrow{\; \$ \;} \mathbb{Z}^{w \times n}_q$, $w = \poly(n)$, $\textbf{s} \in \mathbb{Z}^n_q$, $\textbf{e} \xleftarrow{\; \$ \;} \chi^w$ and $\textbf{u} \xleftarrow{\; \$ \;} \mathbb{Z}^w_q$.}
\end{definition} 

\begin{definition}[Search-LWE~\cite{Reg[05]}]\label{searchLWE}
	\emph{For positive integers $n$ and $q \geq 2$, and an error (probability) distribution $\chi = \chi(n)$ over $\mathbb{Z}_q$, the search-LWE${}_{n, q, \chi}$ problem is to recover $\textbf{s} \in \mathbb{Z}^n_q$, given $(\textbf{A}, \textbf{A} \textbf{s} + \textbf{e})$, where $\textbf{A} \xleftarrow{\; \$ \;} \mathbb{Z}^{w \times n}_q$, $\textbf{s} \in \mathbb{Z}^n_q$, $\textbf{e} \xleftarrow{\; \$ \;} \chi^w$ and $w = \poly(n)$.}
\end{definition}

Regev~\cite{Reg[05]} showed that for certain noise distributions and a sufficiently large $q$, the LWE problem is as hard as the worst-case SIVP and GapSVP under a quantum reduction (see~\cite{Pei[09],Zvika[13]} for classical hardness arguments). Regev's results were extended to establish that the fixed vector $\textbf{s}$ can be sampled from a low norm distribution (in particular, from the noise distribution $\chi)$ and the resulting problem is as hard as the original LWE problem~\cite{Benny[09]}. Later, it was discovered that $\chi$ can also be a simple low-norm distribution~\cite{Micci[13]}. Therefore, a standard hybrid argument can be used to get to \textit{multi-secret} LWE, which asks to distinguish $(\textbf{A}, \textbf{B} = \textbf{A} \textbf{S} + \textbf{E})$ from $(\textbf{A}, \textbf{U})$ for $\textbf{A} \xleftarrow{\; \$ \;} \mathbb{Z}^{w \times n}_q$, $\textbf{S} \in \mathbb{Z}_q^{n \times n} \text{ or } \textbf{S} \in \chi^{n \times n}$, $\textbf{E} \xleftarrow{\; \$ \;} \chi^{w \times n}$, and a uniformly sampled $\textbf{U} \in \mathbb{Z}^{w \times n}_q$. It is easy to verify that up to a $w$ factor loss in the distinguishing advantage, multi-secret LWE is equivalent to plain (single-secret) decision-LWE. Lattice reduction algorithms, which are the most powerful tools against LWE, remain (practically) inefficient in solving LWE~\cite{Ajtai[01],Fincke[85],Gama[06],Gama[13],Gama[10],LLL[82],Ngu[10],Vid[08],DanPan[10],DaniPan[10],Poha[81],Schnorr[87],Schnorr[94],Schnorr[95],Nguyen[09]}. 

\subsubsection{\textbf{Trapdoors for Lattices}.}\label{SecTrap}
Trapdoors for lattices have been studied in~\cite{Ajtai[99],Micci[12],Gen[08],Chen[19],Boyen[17],Hof[12],PeiW[08],Lyub[15]}. We recall the definition from~\cite{Micci[12]} as that is the algorithm used in our scheme.
\begin{definition}
	\emph{Let $n \geq wd$ be an integer and $\bar{n} = n - wd$. For $\textbf{A} \in \mathbb{Z}^{w \times n}_q$, we say that $\textbf{R} \in \mathbb{Z}^{\bar{n} \times wd}_q$ is a trapdoor for $\textbf{A}$ with tag $\textbf{H} \in \mathbb{Z}^{w \times w}_q$ if $\textbf{A}\begin{bsmallmatrix} {\scriptstyle\textbf{R}} \\ {\scriptstyle \textbf{I}} \end{bsmallmatrix} = \textbf{H} \cdot \textbf{G}$, where $\textbf{G} \in \mathbb{Z}^{w \times wd}_q$ is a primitive matrix.}
\end{definition}

Given a trapdoor $\textbf{R}$ for $\textbf{A}$, and an LWE instance $\textbf{B} = \textbf{A} \textbf{S} + \textbf{E} \bmod q$ for some ``short'' error matrix $\textbf{E}$, the LWE inversion algorithm from~\cite{Micci[12]} successfully recovers $\textbf{S}$ (and $\textbf{E}$) with overwhelming probability.

\subsubsection{\textbf{STCON}.}
STCON (s-t connectivity) in a directed graph can be defined as the following function: the input is a directed graph $G$. The graph contains two designated nodes, $s$ and $t$. The function outputs $1$ if and only if $G$ has a directed path from $s$ to $t$. Karchmer and Wigderson~\cite{Karch[93]} showed that there exists an efficient linear secret sharing scheme for the analogous function where the graph is undirected. In a linear secret sharing scheme~\cite{KarninGreene[83]}, share generation and secret reconstruction are performed by evaluating linear maps and solving linear systems of equations. Later, Beimel and Paskin~\cite{Beimel[08]} extended those results to linear secret sharing schemes for STCON in directed graphs. It is known that (linear) secret sharing schemes based on undirected STCON have strictly smaller share size than those based on directed STCON~\cite{AjtaiFagin[90],Karch[93],Beimel[08]}.

\section{Access Structure Hiding Verifiable Secret Sharing}\label{Sec3}
In this section, we give a formal definition of an access structure hiding verifiable (computational) secret sharing scheme. 
\begin{definition}\label{MainDef}
	 \emph{An access structure hiding verifiable (computational) secret sharing scheme with respect to an access structure $\Gamma$, a set of $\ell$ polynomial-time parties $\mathcal{P} = \{P_1, \dots, P_\ell\}$, a set of secrets $\mathcal{K}$ and a security parameter $\omega$, consists of two sets of polynomial-time algorithms, {\fontfamily{cmtt}\selectfont(HsGen, HsVer)} and {\fontfamily{cmtt}\selectfont(VerShr, Recon, Ver)}, which are defined as:
		\begin{enumerate}
			\item {\fontfamily{cmtt}\selectfont VerShr} is a randomized algorithm that gets a secret $k \in \mathcal{K}$ and access structure $\Gamma$ as inputs, and outputs $\ell$ shares, $\{\mathrm{\Psi}^{(k)}_1, \dots, \mathrm{\Psi}^{(k)}_\ell\},$ of $k$,
			\item {\fontfamily{cmtt}\selectfont Recon} is a deterministic algorithm that gets as input the shares of a subset $\mathcal{A} \subseteq \mathcal{P}$, denoted by $\{\mathrm{\Psi}^{(k)}_i\}_{i \in \mathcal{A}}$, and outputs a string in $\mathcal{K}$,
			\item {\fontfamily{cmtt}\selectfont Ver} is a deterministic Boolean algorithm that gets $\{\mathrm{\Psi}^{(k)}_i\}_{i \in \mathcal{A}}$ and a secret $k' \in \mathcal{K}$ as inputs and outputs $b \in \{0,1\}$, 
		\end{enumerate}
		such that the following three requirements are satisfied:
		\begin{enumerate}[label=(\alph*)]
			\item \textit{Perfect Correctness:} for all secrets $k \in \mathcal{K}$ and every authorized subset $\mathcal{A} \in \Gamma$, it holds that:\\ Pr[{\fontfamily{cmtt}\selectfont Recon}$(\{\mathrm{\Psi}^{(k)}_i\}_{i \in \mathcal{A}}, \mathcal{A}) = k] = 1,$ 
			\item \textit{Computational Secrecy:} for every unauthorized subset $\mathcal{B} \notin \Gamma$ and all different secrets $k_1, k_2 \in \mathcal{K}$, it holds that the distributions $\{\mathrm{\Psi}_i^{(k_1)}\}_{i \in \mathcal{B}}$ and $\{\mathrm{\Psi}_i^{(k_2)}\}_{i \in \mathcal{B}}$ are computationally indistinguishable (w.r.t. $\omega)$,
			\item \textit{Computational Verifiability:} every authorized subset $\mathcal{A} \in \Gamma$ can use {\fontfamily{cmtt}\selectfont Ver} to verify whether its set of shares $\{\mathrm{\Psi}^{(k)}_i\}_{i \in \mathcal{A}}$ is consistent with a given secret $k \in \mathcal{K}$. Formally, for a negligible function $\epsilon$, it holds that: 
			\begin{itemize}
				\item Pr[{\fontfamily{cmtt}\selectfont Ver}$(k, \{\mathrm{\Psi}^{(k)}_i\}_{i \in \mathcal{A}}) = 1] = 1 - \epsilon(\omega)$ if all shares $\mathrm{\Psi}^{(k)}_i \in \{\mathrm{\Psi}^{(k)}_i\}_{i \in \mathcal{A}}$ are consistent with the secret $k$, 
				\item else, if any share $\mathrm{\Psi}^{(k)}_i \in \{\mathrm{\Psi}^{(k)}_i\}_{i \in \mathcal{A}}$ is inconsistent with the secret $k$, then it holds that:\\ Pr[{\fontfamily{cmtt}\selectfont Ver}$(k, \{\mathrm{\Psi}^{(k)}_i\}_{i \in \mathcal{A}}) = 0] = 1 - \epsilon(\omega)$,
			\end{itemize}
		\end{enumerate}
		\begin{enumerate}
		 \setcounter{enumi}{3}			
			\item {\fontfamily{cmtt}\selectfont HsGen} is a randomized algorithm that gets $\mathcal{P}$ and $\Gamma$ as inputs, and outputs $\ell$ \textit{access structure tokens} $\{\mathrm{\mho}^{(\Gamma)}_1, \dots, \mathrm{\mho}^{(\Gamma)}_\ell\},$ 
			\item {\fontfamily{cmtt}\selectfont HsVer} is a deterministic algorithm that gets as input the \textit{access structure tokens} of a subset $\mathcal{A} \subseteq \mathcal{P}$, denoted by $\{\mathrm{\mho}_i^{(\Gamma)}\}_{i \in \mathcal{A}}$, and outputs $b \in \{0,1\}$,
		\end{enumerate}
		such that, the following three requirements are satisfied:
		\begin{enumerate}[label=(\alph*)]
			\item \textit{Perfect Completeness:} every authorized subset of parties $\mathcal{A} \in \Gamma$ can identify itself as a member of the access structure $\Gamma$, i.e., it holds that: Pr[{\fontfamily{cmtt}\selectfont HsVer}$(\{\mathrm{\mho}_i^{(\Gamma)}\}_{i \in \mathcal{A}}) = 1] = 1,$
			\item \textit{Perfect Soundness:} every unauthorized subset of parties $\mathcal{B} \notin \Gamma$ can identify itself to be outside of the access structure $\Gamma$, i.e., it holds that: Pr[{\fontfamily{cmtt}\selectfont HsVer}$(\{\mathrm{\mho}_i^{(\Gamma)}\}_{i \in \mathcal{B}}) = 0] = 1,$
			\item \textit{Statistical Hiding:} for all access structures $\Gamma, \Gamma' \subseteq 2^{\mathcal{P}}$, where $\Gamma \neq \Gamma'$, and each subset of parties $\mathcal{B} \notin \Gamma, \Gamma'$ that is unauthorized in both $\Gamma$ and $\Gamma'$, it holds that:
			\[\left| \Pr[\Gamma~|~ \{\mathrm{\mho}_i^{(\Gamma)}\}_{i \in \mathcal{B}}, \{\mathrm{\Psi}_i^{(k)}\}_{i \in \mathcal{B}}] - \Pr[\Gamma'~|~ \{\mathrm{\mho}_i^{(\Gamma)}\}_{i \in \mathcal{B}}, \{\mathrm{\Psi}_i^{(k)}\}_{i \in \mathcal{B}}] \right| = 2^{-\omega}.\]
	\end{enumerate}}
\end{definition}

\section{Novel Set-Systems and Vector Families}\label{construction}
In this section, we prove Theorem~\ref{mainThm} by constructing a novel set-system.
	\begin{proposition}
		\label{main_construction}
		Let $l \geq 2$ be an integer, and $m = \prod_{i=1}^{r} p_i^{\alpha_i}$ be a positive integer with $r > 1$ different prime divisors such that $\forall i \in \{1, \dots, r\}: p_i > l$. Suppose there exists an integer $t\geq 2$ and a uniform set-system $\mathcal{G}$ satisfying the conditions:
		\begin{enumerate}
			\item $\forall G\in\mathcal{G}:\vert G\vert = 0\bmod m$,
			\item $\forall t^\prime$ such that $2\leq t^\prime\leq t$, and for all distinct $G_1,\,G_2,\,\ldots,\,G_{t^\prime}\in\mathcal{G}$, it holds that:
			$$\left\vert\bigcap_{\tau=1}^{t^\prime} G_\tau\right\vert = \mu\bmod m,$$
			where $\mu \neq 0\bmod m$ and $\forall i \in \{1, \dots, r\}: \mu \in \{0,1\} \bmod p_i$,
			\item $\left\vert\bigcap_{G\in\mathcal{G}}G\right\vert\neq 0\bmod m$.
		\end{enumerate}
		Then, there exists a set-system $\mathcal{H}$ that is explicitly constructible from the set-system $\mathcal{G}$ such that:
		\begin{enumerate}[label=(\roman*)]			
			\item\label{L1I1} $\forall H_1, H_2\in\mathcal{H}$, either $\vert H_1\vert=\vert H_2\vert$, $\vert H_1\vert=l\vert H_2\vert$ or $l\vert H_1\vert=\vert H_2\vert$,
			\item\label{L1I2} $\mathcal{H}$ has $t$-wise restricted intersections modulo $m$ (see Definition~\ref{def2}).
		\end{enumerate}
	\end{proposition}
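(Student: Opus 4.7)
The plan is to build $\mathcal{H}$ by extending the universe of $\mathcal{G}$ with pairwise disjoint ``tag blocks''. For each $G \in \mathcal{G}$, I introduce a fresh set $T_G$ of $(l-1)|G|$ new elements, chosen so that $\{T_G\}_{G \in \mathcal{G}}$ is pairwise disjoint and every $T_G$ is disjoint from the original universe of $\mathcal{G}$. I then define
\[
    \mathcal{H} \;=\; \{\, G : G \in \mathcal{G}\,\} \;\cup\; \{\, L_G := G \sqcup T_G \,:\, G \in \mathcal{G}\,\},
\]
so every ``small'' set has size $|G|$ and every ``large'' set $L_G$ has size $l|G|$. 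The construction is explicit whenever $\mathcal{G}$ is.

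Property \ref{L1I1} is then immediate from the uniformity of $\mathcal{G}$: any two members of $\mathcal{H}$ either have equal size, or the larger has exactly $l$ times the size of the smaller. The divisibility-by-$m$ part of the $t$-wise restricted intersections condition is also immediate: $|G| \equiv 0 \pmod{m}$ by hypothesis on $\mathcal{G}$, and so $|L_G| = l|G| \equiv 0 \pmod{m}$.

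The crux is verifying the non-degenerate intersection condition. Fix a sub-family $\mathcal{F} \subseteq \mathcal{H}$ of size $t' \in [2,t]$, and write it as $\mathcal{F} = \{G_{i_1}, \dots, G_{i_a}\} \cup \{L_{H_{j_1}}, \dots, L_{H_{j_b}}\}$ with $a+b = t'$. Let $\mathcal{C}' \subseteq \mathcal{G}$ denote the set of distinct labels appearing in the multiset $\{G_{i_1}, \dots, G_{i_a}, H_{j_1}, \dots, H_{j_b}\}$. The key observation is that each tag $T_G$ is disjoint from every member of $\mathcal{H}$ other than $L_G$ itself, so no element of any tag can survive $\bigcap \mathcal{F}$ once $|\mathcal{F}| \geq 2$; hence
\[
    \bigcap_{H \in \mathcal{F}} H \;=\; \bigcap_{G \in \mathcal{C}'} G.
\]
If $|\mathcal{C}'| \geq 2$, then $|\mathcal{C}'| \leq t' \leq t$ permits invoking condition~2 of the proposition, giving $\bigl|\bigcap_{G \in \mathcal{C}'} G\bigr| \equiv \mu \not\equiv 0 \pmod{m}$. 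If $|\mathcal{C}'| = 1$, then the only way to produce two distinct members of $\mathcal{H}$ from a single label $G$ is $\mathcal{F} = \{G, L_G\}$, which is degenerate since $G \subseteq L_G$---contradicting the non-degeneracy of $\mathcal{F}$.

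The main obstacle is the bookkeeping to certify that whenever $|\mathcal{C}'| \geq 2$ the family $\mathcal{F}$ really is non-degenerate, so that the reduction above proves what we need. I would settle this by showing: uniformity of $\mathcal{G}$ forbids containment between distinct small sets; pairwise disjointness of tags forbids containment between distinct large sets; the size gap $|G| < |L_{G'}|$ prevents any large set from sitting inside a small one; and the sole degenerate mixed configuration $\{G, L_G\}$ has already been absorbed by the $|\mathcal{C}'| = 1$ case. Once these checks are in place, Properties \ref{L1I1} and \ref{L1I2} both hold, establishing the proposition.
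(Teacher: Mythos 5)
Your construction does establish Conditions~\ref{L1I1} and~\ref{L1I2} as literally stated, and the verification that no tag element survives a non-degenerate intersection of size $\geq 2$ is sound. But this is a genuinely different route from the paper's, and the difference is fatal for the role Proposition~\ref{main_construction} plays downstream.

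The paper does not tag each $G\in\mathcal{G}$ with a private block. It takes $l$ disjoint copies $\mathcal{H}_1,\dots,\mathcal{H}_l$ of $\mathcal{G}$ (pairwise disjoint except for an identified common core $A=\bigcap_{H^{(i)}\in\mathcal{H}_i}H^{(i)}$), introduces a single fresh block $B$ of $(l-1)\vert A\vert$ elements, and forms as ``large'' sets all cross-copy unions $\bigcup_{i=1}^l H^{(i)}\cup B$, one for every $l$-tuple $(H^{(1)},\dots,H^{(l)})\in\mathcal{H}_1\times\cdots\times\mathcal{H}_l$. Writing $s=\vert\mathcal{G}\vert$, $g$ for the universe size of $\mathcal{G}$, and $km$ for the common set size, this yields $\vert\mathcal{H}\vert=s^l+ls$ over a universe of only $lg$ elements, with the structural property recorded in Remark~\ref{remarkImp}: every $H\in\mathcal{H}$ is either a proper subset of exactly $s^{l-1}$ sets or a proper superset of exactly $l$ sets.

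Your $\mathcal{H}$ has only $2s$ members, yet the tags alone add $(l-1)kms$ fresh elements, so the universe size $h$ grows linearly in $s$ and $\vert\mathcal{H}\vert=O(h)$. Moreover, every small set $G$ has exactly one proper superset $L_G$, not $s^{l-1}$. Both facts break the argument in Theorem~\ref{superpolynomial_set_systems}, which feeds Proposition~\ref{main_construction} Grolmusz-type inputs and needs the $s\mapsto s^l$ blow-up within a universe of size $lg$ to obtain the lower bound $\vert\mathcal{H}\vert>\exp\bigl(c\,l(\log h)^r/(\log\log h)^{r-1}\bigr)+l\exp\bigl(c\,(\log h)^r/(\log\log h)^{r-1}\bigr)$ of Theorem~\ref{mainThm}; the $s^{l-1}$ superset count is also used explicitly in Sections~\ref{work} and~\ref{sec5}. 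Note also that, taken in isolation, the statement of Proposition~\ref{main_construction} is trivially satisfiable---even $\mathcal{H}=\mathcal{G}$ meets items~(i) and~(ii)---so the quantitative amplification, not the bare statement, is the content one must reproduce. A correct proof in the paper's spirit must cross-multiply the copies, not merely tag them.
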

	
	\begin{proof}
		We start with $l$ uniform\footnote{all member sets have equal size} set systems $\mathcal{H}_1,\,\mathcal{H}_2,\,\ldots,\,\mathcal{H}_l$ satisfying the following properties:
		\begin{enumerate}
			\item $\forall H^{(i)}\in\mathcal{H}_i:\vert H^{(i)}\vert = 0\bmod m$,
			\item $\forall t^\prime$ such that $2\leq t^\prime\leq t$, and for all distinct $H^{(i)}_1,\,H^{(i)}_2,\,\ldots,\,H^{(i)}_{t^\prime}\in\mathcal{H}_i$, it holds that:
			$$\left\vert\bigcap_{\tau=1}^{t^\prime} H^{(i)}_\tau\right\vert = \mu\bmod m,$$
			where $\mu\neq 0\bmod m$ and $\forall z \in \{1, \dots, r\}: \mu \in \{0,1\} \bmod p_z$,
			\item $\forall i \in \{1, \dots, l\}: \left\vert\bigcap_{H^{(i)}\in\mathcal{H}_i}H^{(i)}\right\vert\neq 0\bmod m$,
			\item $\left\vert H^{(i)}\right\vert=\left\vert H^{(j)}\right\vert$ for all $H^{(i)}\in\mathcal{H}_i$, $H^{(j)}\in\mathcal{H}_j$,
			\item $\forall i,j \in \{1, \dots, l\}: \left\vert\bigcap_{H^{(i)}\in\mathcal{H}_i}H^{(i)}\right\vert=\left\vert\bigcap_{H^{(j)}\in\mathcal{H}_j}H^{(j)}\right\vert$.
		\end{enumerate}
		
		We begin by fixing bijections:
		$$f_{i,j}:\bigcap_{H^{(i)}\in\mathcal{H}_i}H^{(i)}\to \bigcap_{H^{(j)}\in\mathcal{H}_j}H^{(j)},$$
		such that $f_{i,i}$ is the identity and $f_{i,j}\circ f_{j,k}=f_{i,k}$ for all $1\leq i,j,k\leq l$. Using these bijections, we can identify the sets $\bigcap_{H^{(i)}\in\mathcal{H}_i}H^{(i)}$ and $\bigcap_{H^{(j)}\in\mathcal{H}_i}H^{(j)}$ with each other. Let:
		$$A=\bigcap_{H^{(1)}\in\mathcal{H}_1}H^{(1)}=\bigcap_{H^{(2)}\in\mathcal{H}_2}H^{(2)}=\cdots=\bigcap_{H^{(l)}\in\mathcal{H}_l}H^{(l)}.$$
		We shall treat the elements of the sets in $\mathcal{H}_i$ as being distinct from the elements of the sets in $\mathcal{H}_j$, except for the above identification of elements in $\bigcap_{H^{(i)}\in\mathcal{H}_i}H^{(i)}$ with elements in $\bigcap_{H^{(j)}\in\mathcal{H}_j}H^{(j)}$. Let $a=\vert A\vert$, and let $\beta_1,\,\beta_2,\,\ldots,\,\beta_{(l-1)a}$ be elements that are distinct from all the elements in the sets in $\mathcal{H}_1,\,\mathcal{H}_2,\,\ldots\,\,\mathcal{H}_l$. Define the set:
		$$B=\{\beta_1,\,\beta_2,\,\ldots,\,\beta_{(l-1)a}\},$$
		and consider a set system $\mathcal{H}$ which contains the following sets:
		\begin{itemize}
			\item $H^{(i)}$, where $H^{(i)}\in\mathcal{H}_i$ for some $i \in [l]$,
			\item $\bigcup_{i=1}^lH^{(i)}\cup B$, where $H^{(i)}\in\mathcal{H}_i$ for all $i\in [l]$.
		\end{itemize}
		Write the common size of the sets in the uniform set systems $\mathcal{H}_i~(1 \leq i \leq l)$ as $km$ for some $k>0$. Then, the following holds for all $H^{(i)}\in\mathcal{H}_i$,
		\begin{align*}
		\left\vert\bigcup_{i=1}^lH^{(i)}\cup B\right\vert&=\left\vert\bigcup_{i=1}^lH^{(i)}\right\vert+\vert B\vert=\sum_{i=1}^l\vert H^{(i)}\vert-(l-1)\vert A\vert+\vert B\vert\\
		&=l(km)-(l-1)a+(l-1)a=lkm,
		\end{align*}
		where the second equality comes from the fact that $H^{(i)}\cap H^{(j)}=A$ for all $i\neq j$. This proves that Condition~\ref{L1I1} holds. Moving on to the Condition~\ref{L1I2}: let $t_1,\,t_2,\,\ldots,\,t_{l+1}\geq 0$ be such that $2\leq t^\prime(=t_1+t_2+\cdots+t_{l+1})\leq t$. We shall consider the intersection of the sets:
		\begin{itemize}
			\item $H^{(i)}_\tau$ where $1\leq i\leq l$, $1\leq\tau\leq t_i$ and $H^{(i)}_\tau\in\mathcal{H}_i$,
			\item $\bigcup_{i=1}^l H_\tau^{\prime(i)}\cup B$ where $1\leq\tau\leq t_{l+1}$ and $H_\tau^{\prime(i)}\in\mathcal{H}_i$.
		\end{itemize}
		Assume that these sets form a non-degenerate family. Let:
		\begin{align*}
		\sigma=&\left\vert\bigcap_{i=1}^l\bigcap_{\tau=1}^{t_i} H^{(i)}_\tau\cap\bigcap_{\tau=1}^{t_{l+1}}(H_{\tau}^{\prime (1)}\cup H_\tau^{\prime (2)}\cup\cdots\cup H_\tau^{\prime (l)}\cup B)\right\vert\\
		=&\left\vert\bigcap_{i=1}^l\bigcap_{\tau=1}^{t_i}H^{(i)}_\tau\cap\bigcap_{\tau=1}^{t_{l+1}}(H_\tau^{\prime (1)}\cup H_\tau^{\prime (2)}\cup\cdots\cup H_\tau^{\prime (l)})\right\vert+\epsilon\vert B\vert,
		\end{align*}
		where $\epsilon=1$ if $t_1=t_2=\cdots=t_l=0$, and $\epsilon=0$ otherwise. If two or more of $t_1,\,t_2,\,\ldots,\,t_l$ are non-zero, then: $\sigma=\vert A\vert=a \neq 0\bmod m$. On the other hand, if exactly one of $t_1,\,t_2,\,\ldots,\,t_l$ is non-zero, then:
		$$\sigma=\left\vert\bigcap_{\tau=1}^{t_i}H^{(i)}_\tau\cap\bigcap_{\tau=1}^{t_{l+1}}H_{\tau}^{\prime (i)}\right\vert\neq 0\bmod m$$
		since $H^{(i)}_\tau$ (for $1\leq\tau\leq t_i$) and $H_\tau^{\prime (i)}$ (for $1\leq\tau\leq t_{l+1}$) are not all the same by the assumption of non-degeneracy. If $t_1=t_2=\cdots=t_l=0$, then we get:
		\begin{align*}
		\sigma&=\left\vert\bigcap_{\tau=1}^{t_{l+1}}(H_\tau^{\prime (1)}\cup H_\tau^{\prime (2)}\cup\cdots\cup H_\tau^{\prime (l)})\right\vert+\vert B\vert\\
		&=\sum_{i=1}^{l}\left\vert\bigcap_{\tau=1}^{t_{l+1}}H_\tau^{\prime (i)}\right\vert-(l-1)\vert A\vert+\vert B\vert = \sum_{i=1}^{l^\prime}\mu_i\bmod m,
		\end{align*}
		for some integer $l'$ such that $1\leq l^\prime\leq l$, and some set $\{\mu_i\}_{i=1}^{l'}$ such that for each $\mu_i$ and all primes $p$ such that $p ~|~ m$, it holds that: $\mu_i \in \{0,1\} \bmod p$. Since $\mu_i \neq 0\bmod m$ for all $1\leq i\leq l^\prime$, there must be some prime factor $p$ of $m$ for which at least one of the $\mu_i$'s satisfy $\mu_i = 1\bmod p$. Since $p$ is a prime factor of $m$, it satisfies: $p>l\geq l^\prime$. Hence, for $p$, we get:
		$$\sigma = \sum_{i=1}^{l^\prime}\mu_i \neq 0\bmod p.$$
		This proves Condition~\ref{L1I2}, and hence completes the proof. $\qed$
	\end{proof}
	
	\begin{remark}\label{remarkImp}
			Suppose that $\vert\mathcal{G}\vert=s$ and that the number of elements in the universe of $\mathcal{G}$ is $g$. Then, there are $ls$ sets of size $km$ and $s^l$ sets of size $lkm$ in $\mathcal{H}$. Therefore, we get: $\vert\mathcal{H}\vert=s^l+ls$. The universe of $\mathcal{H}$ has $lg$ elements, and for each $H\in\mathcal{H}$, exactly one of the following is true:
			\begin{itemize}
				\item $H$ is a proper subset of exactly $s^{l-1}$ sets and not a proper superset of any sets in $\mathcal{H}$,
				\item $H$ is a proper superset of exactly $l$ sets and not a proper subset of any sets in $\mathcal{H}$.
			\end{itemize}
	\end{remark}
	
	In order to explicitly construct set systems which, in addition to having the properties in Proposition \ref{main_construction}, have sizes superpolynomial in the number of elements, we first recall a result of Barrington et al. \cite{Barr[94]}, which Grolmusz \cite{Gro[00]} used to construct a superpolynomial uniform set-system.
	
	\begin{theorem}[\cite{Barr[94]}, Theorem 2.1]\label{Thm5}
		\label{BBR_theorem}
		Let $\{\alpha_i\}_{i=1}^r$ be $r > 1$ positive integers and $m = \prod_{i=1}^{r} p_i^{\alpha_i}$ be a positive integer with $r$ different prime divisors: $p_1, \dots, p_r$. For every integer $n\geq 1$, there exists an explicitly constructible polynomial $P$ in $n$ variables such that
		\begin{enumerate}
			\item $P(0,\,0,\,\ldots,\,0) = 0\bmod m$,
			\item $P(x)\neq 0\bmod m$ for all $x\in\{0,1\}^n$ such that $x\neq(0,\,0,\,\ldots,\,0)$,
			\item $\forall i \in [r]$ and $\forall x\in\{0,1\}^n$ such that $x\neq(0,\,0,\,\ldots,\,0)$, it holds that: $P(x) \in \{0,1\} \bmod p_i$. 
		\end{enumerate}
		The polynomial $P$ has degree $d=\max(p_1^{e_1},\,\ldots,\,p_r^{e_r})-1$ where $e_i~(\forall i \in [r])$ is the smallest integer that satisfies $p_i^{e_i}>\lceil n^{1/r}\rceil$.
	\end{theorem}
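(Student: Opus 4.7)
The plan is a two-stage reduction: first a Chinese Remainder decomposition splitting the construction into one polynomial per prime-power factor of $m$, and then, for each prime power, a symmetric-function reduction that turns the problem into building a low-degree univariate polynomial whose values are controlled by Lucas-type identities. The quantitative driver is the choice of the $e_i$: the condition $p_i^{e_i} > \lceil n^{1/r} \rceil$ forces $\prod_{i=1}^r p_i^{e_i} > n$, which is exactly what will allow the univariate separation below to succeed at the required degree. Concretely, for each $i \in [r]$ I would first construct $P_i \in \mathbb{Z}[x_1, \ldots, x_n]$ of degree at most $p_i^{e_i} - 1$ satisfying $P_i(\bm{0}) \equiv 0 \bmod p_i^{e_i}$, $P_i(x) \not\equiv 0 \bmod p_i^{e_i}$ for every $x \in \{0,1\}^n \setminus \{\bm{0}\}$, and $P_i(x) \in \{0,1\} \bmod p_i$ throughout. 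Then, picking integers $u_1, \ldots, u_r$ with $u_i \equiv 1 \bmod p_i^{e_i}$ and $u_i \equiv 0 \bmod p_j^{e_j}$ for $j \neq i$, I would define
\[
P(x) \;=\; \sum_{i=1}^r u_i \, P_i(x) \bmod m.
\]
Each of the three conclusions then follows by checking one prime at a time: condition~(1) holds since $u_i P_i(\bm{0}) \equiv 0 \bmod p_i^{e_i}$ for every $i$; condition~(3) transfers directly from $P_i$ mod $p_i$, because only the $i$-th summand contributes modulo $p_i$; and condition~(2) holds since at least one CRT coordinate of $P(x)$ is nonzero whenever some $P_i(x)$ is nonzero mod $p_i^{e_i}$.

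For a fixed prime power $p_i^{e_i}$, I would pass to the symmetric variable $\sigma(x) = \sum_{j=1}^{n} x_j$. Since $x_j \in \{0,1\}$, the identity $\binom{\sigma(x)}{k} = e_k(x_1, \ldots, x_n)$ shows that every polynomial of degree $d$ in $\sigma$ lifts to a polynomial of degree $d$ in the $x_j$'s (via elementary symmetric functions). It therefore suffices to produce a univariate $g_i(s) \in \mathbb{Z}[s]$ of degree at most $p_i^{e_i} - 1$ that vanishes mod $p_i^{e_i}$ at $s = 0$ while being nonzero mod $p_i^{e_i}$ and in $\{0,1\}$ mod $p_i$ at every $s \in \{1, \ldots, n\}$. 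Using Lucas's theorem and its mod-$p_i^{e_i}$ refinement, the indicator of ``the $j$-th length-$e_i$ digit block of $s$ in base $p_i$ vanishes'' can be realized as a polynomial of degree at most $p_i^{e_i} - 1$ in $s$ built from binomial coefficients $\binom{s}{k}$. Because $p_i^{e_i} > n^{1/r}$, every $s \leq n$ has at most $r$ such blocks, so forming $g_i(s) = 1 - \prod_{j} \bigl(1 - Q_j(s)\bigr)$ across these blocks, followed by a Newton-style amplification that lifts a $\{0,1\}$-valued function mod $p_i$ to a nonzero value mod $p_i^{e_i}$ without exceeding the degree bound, yields the desired $g_i$ and, via the symmetric identity, $P_i$.

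The delicate part is this last step: producing the univariate separator $g_i$ at the tight degree $p_i^{e_i} - 1$, subject to the simultaneous constraints of nonvanishing mod $p_i^{e_i}$ and range $\{0,1\}$ mod $p_i$ at every $s \in \{1, \ldots, n\}$. The mod-$p_i$ level is handled cleanly by Lucas, but lifting while preserving the degree requires a careful inductive amplification rather than a naive $e_i$-fold product, and engineering this lift—while bookkeeping the degrees of the binomial building blocks so they stay within $p_i^{e_i} - 1$—is where I expect the main technical effort to lie. Once Step~3 is in hand, everything else (the CRT gluing, verification of the three conditions, and the explicit constructibility) is essentially bookkeeping.
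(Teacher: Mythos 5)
The paper states this result without proof (it is quoted from Barrington, Beigel and Rudich), so I am comparing your sketch against their construction. The fatal step in your plan is the per-prime requirement itself: you ask each $P_i$ (equivalently, each univariate $g_i$) to separate $0$ from \emph{every} nonzero Boolean point modulo $p_i^{e_i}$ at degree at most $p_i^{e_i}-1$. That is both unnecessary and unachievable, and it is precisely what a single prime power cannot do --- the reason the theorem needs $r>1$ distinct primes. In the actual construction the $i$-th ingredient only detects whether $\sigma(x)=\sum_j x_j$ is divisible by $p_i^{e_i}$: one takes $f_i(s)=1-\prod_{j=0}^{e_i-1}\bigl(1-\binom{s}{p_i^{j}}^{p_i-1}\bigr)$, which by Lucas and Fermat is $\equiv 0 \bmod p_i$ iff $p_i^{e_i}\mid s$ and $\equiv 1 \bmod p_i$ otherwise, has degree $(p_i-1)(1+p_i+\cdots+p_i^{e_i-1})=p_i^{e_i}-1$, and satisfies $f_i(0)=0$ exactly as an integer. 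No single $f_i$ is nonvanishing on all of $\{1,\dots,n\}$ (it vanishes mod $p_i$ at every multiple of $p_i^{e_i}\leq n$); it is only the combination $P(x)=\sum_i u_i f_i(\sigma(x))$ with CRT multipliers taken modulo $m$, together with $\prod_i p_i^{e_i}>n$, that guarantees that for every $s\in\{1,\dots,n\}$ some $i$ has $p_i^{e_i}\nmid s$, hence $P(x)\equiv 1\bmod p_i$ and so $P(x)\neq 0\bmod m$. This division of labor among the primes is the heart of the proof and is absent from your decomposition.

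Concretely, your route to $g_i$ cannot meet the degree bound: the indicator of the $j$-th length-$e_i$ digit block of $s$ needs the digits at positions $je_i,\dots,(j+1)e_i-1$, i.e.\ binomials $\binom{s}{p_i^{t}}$ of degree up to about $p_i^{(j+1)e_i}$, so already the second block exceeds $p_i^{e_i}-1$ by a huge margin, and no ``Newton-style amplification'' can repair this. In fact your intermediate target is provably impossible in general: writing an integer-valued $g$ of degree $<p_i^{e_i}$ in the binomial basis $\binom{s}{k}$ and applying Kummer's theorem, every $\binom{p_i^{E}}{k}$ with $1\leq k<p_i^{e_i}$ and $E\geq 2e_i-1$ is divisible by $p_i^{e_i}$, so $g(0)\equiv 0\bmod p_i^{e_i}$ forces $g(p_i^{E})\equiv 0\bmod p_i^{e_i}$; since $p_i^{2e_i-1}\leq n$ for $r\geq 3$ and large $n$, the required $g_i$ does not exist. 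There are also smaller mismatches: your multipliers $u_i$ are chosen modulo $p_i^{e_i}$ although the target modulus is $m=\prod_i p_i^{\alpha_i}$, and condition (1) needs $P(\mathbf{0})\equiv 0\bmod p_i^{\alpha_i}$, which $P_i(\mathbf{0})\equiv 0\bmod p_i^{e_i}$ does not give when $\alpha_i>e_i$ (in the real construction this is moot because $f_i(0)=0$ identically). Your symmetric-function lift via $\binom{\sigma(x)}{k}=e_k(x_1,\dots,x_n)$ and your observation that $\prod_i p_i^{e_i}>n$ is the quantitative driver are both correct ingredients, but the argument as structured does not go through.
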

	
	Define $Q(x_1,\,x_2,\,\ldots,\,x_n)=P(1-x_1,\,1-x_2,\,\ldots,\,1-x_n)$. Then:
	\begin{enumerate}
		\item $Q(1,\,1,\,\ldots,\,1) = 0\bmod m$,
		\item $Q(x)\neq 0\bmod m$ for all $x\in\{0,1\}^n$ such that $x\neq(1,\,1,\,\ldots,\,1)$.
		\item $\forall i \in [r]$ and $\forall x\in\{0,1\}^n$ such that $x\neq(1,\,1,\,\ldots,\,1)$, it holds that: $Q(x) \in \{0,1\} \bmod p_i$.
	\end{enumerate}
	
	\begin{theorem}[\cite{Gro[00]}, Theorem 1.4, Lemma 3.1]
		\label{grolmusz}
		Let $\{\alpha_i\}_{i=1}^r$ be $r > 1$ positive integers and $m = \prod_{i=1}^{r} p_i^{\alpha_i}$ be a positive integer with $r$ different prime divisors: $p_1, \dots, p_r$. For every integer $n\geq 1$, there exists a uniform set system $\mathcal{G}$ over a universe of $g$ elements which is explicitly constructible from the polynomial $Q$ of degree $d$ such that
		\begin{enumerate}
			\item $g<\frac{2(m-1)n^{2d}}{d!}$ if $n\geq 2d$,
			\item $\vert\mathcal{G}\vert=n^n$,
			\item $\forall G\in\mathcal{G}$, $\vert G\vert = 0\bmod m$,
			\item $\forall G, H\in\mathcal{G}$ such that $G\neq H$, it holds that: $\vert G\cap H\vert = \mu\bmod m$, where $\mu\neq 0\bmod m$ and $\mu \in \{0,1\} \bmod p_i$ for all $i \in [r]$,
			\item\label{T5} $\left\vert\bigcap_{G\in\mathcal{G}}G\right\vert\neq 0\bmod m$.
		\end{enumerate}
	\end{theorem}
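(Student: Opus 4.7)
The plan is to build $\mathcal{G}$ through Grolmusz's polynomial-to-set-system recipe, applied to a lifted version of $Q$. First I would introduce $n^2$ fresh variables $\{z_{i,j}\}_{i,j \in [n]}$ and form $Q'(z) = Q\bigl(\sum_{j} z_{1,j},\, \ldots,\, \sum_{j} z_{n,j}\bigr)$, which still has degree $d$. Reducing $Q'$ modulo $z_{i,j}^2 - z_{i,j}$, I would write it in its unique multilinear representation $\sum_{I \subseteq [n] \times [n],\, |I| \le d} c_I \prod_{(i,j) \in I} z_{i,j}$, with coefficients $c_I \in \{0, 1, \ldots, m-1\}$ (reducing each residue class to its non-negative representative). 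To each function $f \colon [n] \to [n]$ I then associate the indicator $y^f \in \{0,1\}^{n \times n}$ defined by $y^f_{i,j} = 1 \Leftrightarrow f(i) = j$, so that $\mathrm{supp}(y^f) = \{(i, f(i)) : i \in [n]\}$ is a transversal of the $n$ rows of $[n] \times [n]$.

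Taking the universe to be $U = \{(I, \kappa) : I \subseteq [n] \times [n],\, |I| \le d,\, 1 \le \kappa \le c_I\}$, I would define the set associated to $f$ as $G_f = \{(I, \kappa) \in U : I \subseteq \mathrm{supp}(y^f)\}$. A direct count yields the two governing identities
\[
|G_f| \;=\; \sum_{I \subseteq \mathrm{supp}(y^f)} c_I \;=\; Q'(y^f), \qquad |G_f \cap G_g| \;=\; Q'(y^f \wedge y^g),
\]
where $\wedge$ is coordinatewise AND and both equalities hold in $\mathbb{Z}_{\geq 0}$. Everything else is then reduced to polynomial evaluation.

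Conditions 2--5 fall out uniformly. Since $\sum_j y^f_{i,j} = 1$ for every $i$, we get $Q'(y^f) = Q(1, \ldots, 1)$, a fixed non-negative integer congruent to $0 \bmod m$; this simultaneously establishes uniformity of the set system and condition~3. For $f \ne g$, $(y^f \wedge y^g)_{i,j} = 1 \Leftrightarrow f(i) = g(i) = j$, so $Q'(y^f \wedge y^g) = Q(\mathbf{1}_S)$ with $S = \{i : f(i) = g(i)\} \subsetneq [n]$, which by the properties of $Q$ recorded just after Theorem~\ref{BBR_theorem} is $\not\equiv 0 \bmod m$ and lies in $\{0,1\}$ modulo each $p_i$ — that is condition~4. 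The contrast $|G_f| \equiv 0$ versus $|G_f \cap G_g| \not\equiv 0$ modulo $m$ also forces $G_f \ne G_g$, so the $n^n$ sets are pairwise distinct and condition~2 holds. For condition~5, note $\bigcap_f \mathrm{supp}(y^f) = \emptyset$, so $\bigcap_f G_f$ consists precisely of the pairs $(\emptyset, \kappa)$; its cardinality is $c_\emptyset = Q'(0, \ldots, 0) = Q(0, \ldots, 0) = P(1, \ldots, 1)$, which is nonzero modulo $m$ by Theorem~\ref{BBR_theorem}.

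The remaining ingredient is the size bound $g = |U| \le (m-1) \sum_{k=0}^{d} \binom{n^2}{k}$. Using the elementary estimate that $\sum_{k=0}^d \binom{N}{k} \le 2\binom{N}{d}$ whenever $N \ge 3d$ (and $n \ge 2d$ implies $n^2 \ge 3d$ for $d \ge 1$), together with $\binom{n^2}{d} \le n^{2d}/d!$, yields $g < 2(m-1) n^{2d}/d!$, which is condition~1. The main obstacle I anticipate is the bookkeeping around multilinearisation: one must use the \emph{reduced} multilinear form of $Q'$ (so that the identity $|G_f| = Q'(y^f)$ really counts elements) and treat its coefficients as honest integers in $\{0, \ldots, m-1\}$ rather than residues (so that the universe has its intended combinatorial meaning). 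Once that is fixed, every cardinality in $\mathcal{G}$ is literally a polynomial evaluation of $Q$ at some $\mathbf{1}_S$, and all five conditions are simultaneous consequences of the defining properties of $Q$.
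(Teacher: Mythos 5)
Your construction is correct and is essentially the same as the one the paper relies on (Grolmusz's construction, reproduced in the proof of Theorem~\ref{grolmusz_k_intersections}): your universe elements $(I,\kappa)$ with $I$ contained in the graph of $f$ are exactly the ``blocks covering $y$'' counted with multiplicity $\tilde{a}_{i_1,\ldots,i_l}$, and your identities $|G_f|\equiv Q(1,\ldots,1)\bmod m$ and $|G_f\cap G_g|\equiv Q(\mathbf{1}_S)\bmod m$ are the quantities $a^{y,\ldots,y}$ and $a^{y_1,y_2,\ldots}$ there, with the same size bound under $n\geq 2d$. The multilinearisation bookkeeping you flag (working with reduced coefficients in $\{0,\ldots,m-1\}$, which also yields uniformity since each monomial of $Q$ contributes exactly one sub-monomial inside any graph) is handled correctly, so all five conditions follow just as in the paper.
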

	
	Note that Condition~\ref{T5} follows from the fact that the following holds in Grolmusz's construction of superpolynomial set-systems:
	$$\left\vert\bigcap_{G\in\mathcal{G}}G\right\vert = Q(0,\,0,\,\ldots,\,0)\neq 0\bmod m.$$
	
	In fact, a straightforward generalization of the arguments in \cite{Gro[00]} proves the following theorem:
	\begin{theorem}
		\label{grolmusz_k_intersections}
		Let $\{\alpha_i\}_{i=1}^r$ be $r > 1$ positive integers and $m = \prod_{i=1}^{r} p_i^{\alpha_i}$ be a positive integer with $r$ different prime divisors: $p_1, \dots, p_r$. For all integers $t\geq 2$ and $n\geq 1$, there exists a uniform set system $\mathcal{G}$ over a universe of $g$ elements which is explicitly constructible from the polynomial $Q$ of degree $d$ such that
		\begin{enumerate}
			\item $g<\frac{2(m-1)n^{2d}}{d!}$ if $n\geq 2d$,
			\item $\vert\mathcal{G}\vert=n^n$,
			\item $\forall G\in\mathcal{G}$, $\vert G\vert = 0\bmod m$,
			\item $\forall t^\prime$ such that $2\leq t^\prime\leq t,$ and for all distinct $G_1,\,G_2,\,\ldots,\,G_{t^\prime}\in\mathcal{G}$, it holds that:
			$$\left\vert\bigcap_{\tau=1}^{t^\prime}G_\tau\right\vert = \mu\bmod m,$$
			where $\mu\neq 0\bmod m$ and $\mu \in \{0,1\} \bmod p_i$ for all $i \in [r]$,
			\item $\left\vert\bigcap_{G\in\mathcal{G}}G\right\vert\neq 0\bmod m$.
		\end{enumerate}
	\end{theorem}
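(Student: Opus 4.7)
The plan is to run Grolmusz's construction from~\cite{Gro[00]} essentially verbatim, since the key identity that turns the size of an intersection into a single evaluation of the polynomial $Q$ already extends from $t'=2$ to any $2 \leq t' \leq t$. The only bookkeeping that changes is tracking the coordinate set on which the functions indexing the sets simultaneously agree.

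First I would take the polynomial $Q(x_1,\ldots,x_n) = P(1-x_1,\ldots,1-x_n)$ of degree $d$ derived from Theorem~\ref{Thm5}, so that $Q(1,\ldots,1) \equiv 0 \bmod m$, $Q(x) \not\equiv 0 \bmod m$ for every other $x \in \{0,1\}^n$, and $Q(x) \in \{0,1\} \bmod p_i$ for each $i \in [r]$. Expanding $Q$ multilinearly as $Q(x) = \sum_{S \subseteq [n],\, |S| \leq d} c_S \prod_{i \in S} x_i$ with $0 \leq c_S < m$, I build the universe $U$ consisting of one element for each triple $(S,\phi,k)$ with $|S| \leq d$, $\phi: S \to [n]$, and $1 \leq k \leq c_S$; then
\[
g = |U| \leq (m-1)\sum_{j=0}^{d}\binom{n}{j} n^{j} < \frac{2(m-1)n^{2d}}{d!} \quad \text{for } n \geq 2d.
\]
For each function $f: [n] \to [n]$, put $G_f = \{(S,\phi,k) \in U : \phi(i) = f(i) \text{ for all } i \in S\}$, so that $|\mathcal{G}| = n^n$ (distinctness follows because $Q$ has nonzero singleton coefficients and hence $G_f$ recovers $f$ pointwise).

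The heart of the proof is the identity
\[
\Bigl|\bigcap_{\tau = 1}^{t'} G_{f_\tau}\Bigr| = \sum_{\substack{S \subseteq I \\ |S| \leq d}} c_S = Q(\mathbf{1}_I), \qquad I := \{i \in [n] : f_1(i) = \cdots = f_{t'}(i)\},
\]
which holds because the simultaneous constraints $\phi(i) = f_\tau(i)$ for every $\tau$ force $\phi(i)$ to the common value, requiring $S \subseteq I$ and then determining $\phi$ uniquely. Specialising to $t' = 1$ gives $I = [n]$, hence $|G_f| = Q(1,\ldots,1) \equiv 0 \bmod m$; for distinct $f_1,\ldots,f_{t'}$ we have $I \subsetneq [n]$, hence $\mathbf{1}_I \neq (1,\ldots,1)$, so the stated properties of $Q$ yield $\mu := Q(\mathbf{1}_I) \not\equiv 0 \bmod m$ with $\mu \in \{0,1\} \bmod p_i$ for each $p_i \mid m$. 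The full intersection $\bigcap_{f \in [n]^{[n]}} G_f$ corresponds to $I = \emptyset$ (since any two distinct functions disagree somewhere), and therefore equals $Q(0,\ldots,0) \not\equiv 0 \bmod m$.

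The only step where the pairwise reasoning of~\cite{Gro[00]} needs genuine generalisation, and hence the main (mild) obstacle, is verifying the boxed identity for $t' > 2$. This reduces to the single observation that the coordinate-wise AND of the $t'$ characteristic vectors of the graphs of $f_1, \ldots, f_{t'}$ is supported exactly on $\{(i, f_1(i)) : i \in I\}$, so substituting this AND-vector into the lifted polynomial $R(z) = \sum_S c_S \prod_{i \in S}\bigl(\sum_j z_{ij}\bigr)$ collapses each inner sum $\sum_j z_{ij}$ to the indicator of $i \in I$, producing $Q(\mathbf{1}_I)$. Once this identity is in hand, all five conclusions of Theorem~\ref{grolmusz_k_intersections} follow by the substitutions above, completing the proof.
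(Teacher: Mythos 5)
Your proposal is correct and is essentially the paper's own proof: it is Grolmusz's construction with coefficients reduced modulo $m$, a universe of (monomial, assignment, multiplicity) triples --- exactly the blocks of the paper's partitions $\mathcal{P}_{i_1,\ldots,i_l}$ --- and the key identity that any $t'$-wise intersection has size $\tilde{Q}(\mathbf{1}_I)$ for the coordinate-agreement set $I$, which the paper encodes via the $t$-ary function $\delta$ (padding with repeated arguments when $t'<t$), together with the same universe-size estimate. One small repair: your parenthetical claim that $\vert\mathcal{G}\vert=n^n$ because $Q$ has nonzero singleton coefficients is not guaranteed by Theorem~\ref{Thm5}; distinctness of the sets $G_f$ instead follows from your own identity, since $G_{f_1}=G_{f_2}$ with $f_1\neq f_2$ would force $Q(\mathbf{1}_I)\equiv Q(1,\ldots,1)\equiv 0\bmod m$, contradicting $Q(\mathbf{1}_I)\not\equiv 0\bmod m$.
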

	
	\begin{proof}
		We will follow the proof of Theorem 1.4 in \cite{Gro[00]}, but with a few minor changes. Write the polynomial $Q$ as
		$$Q(x_1,\,x_2,\,\ldots,\,x_n)=\sum_{i_1<i_2<\cdots<i_l}a_{i_1,\,i_2,\,\ldots,\,i_l}x_{i_1}x_{i_2}\cdots x_{i_l}$$
		Define
		$$\tilde{Q}(x_1,\,x_2,\,\ldots,\,x_n)=\sum_{i_1<i_2<\cdots<i_l}\tilde{a}_{i_1,\,i_2,\,\ldots,\,i_l}x_{i_1}x_{i_2}\cdots x_{i_l}$$
		where $\tilde{a}_{i_1,\,i_2,\,\ldots,\,i_l}$ is the remainder when $a_{i_1,\,i_2,\,\ldots,\,i_l}$ is divided by $m$.
		
		Let $[0,\,n-1]=\{0,\,1,\,\ldots,\,n-1\}$. Define the function $\delta:[0,\,n-1]^t\to\{0,1\}$ as
		$$\delta(u_1,\,u_2,\,\ldots,\,u_t)=
		\begin{cases}
		1 & \text{if }u_1=u_2=\cdots=u_t,\\
		0 & \text{otherwise}.
		\end{cases}
		$$
		
		For $y_1,\,y_2,\,\ldots,\,y_t\in [0,\,n-1]^n$, let
		$$a^{y_1,\,y_2,\,\ldots,\,y_t}=\tilde{Q}\left(\delta(y_{1,1},\,y_{2,1},\,\ldots,\,y_{t,1}),\,\ldots,\,\delta(y_{1,n},\,y_{2,n},\,\ldots,\,y_{t,n})\right)\bmod m.$$
		Then
		$$a^{y_1,\,y_2,\,\ldots,\,y_t}=\sum b^{y_1,\,y_2,\,\ldots,\,y_t}_{i_1,\,i_2,\,\ldots,\,i_l}$$
		where
		$$b^{y_1,\,y_2,\,\ldots,\,y_t}_{i_1,\,i_2,\,\ldots,\,i_l}=\prod_{j=1}^{l}\delta(y_{1,i_j},\,y_{2,i_j},\,\ldots,\,y_{t,i_j}).$$
		Each summand $b^{y_1,\,y_2,\,\ldots,\,y_t}_{i_1,\,i_2,\,\ldots,\,i_l}$ corresponds to a monomial of $\tilde{Q}$ and occurs with multiplicity $\tilde{a}_{i_1,\,i_2,\,\ldots,\,i_l}$ in the above sum.
		
		It is easy to check that there exists partitions $\mathcal{P}_{i_1,\,i_2,\,\ldots,\,i_l}$ of $[0,\,n-1]^n$ such that for all $y_1,\,y_2,\,\ldots,\,y_t\in [0,\,n-1]^n$,
		$$b^{y_1,\,y_2,\,\ldots,\,y_t}_{i_1,\,i_2,\,\ldots,\,i_l}=
		\begin{cases}
		1 & \text{if }y_1,\,y_2,\,\ldots,\,y_t\text{ belong to the same block of }\mathcal{P}_{i_1,\,i_2,\,\ldots,\,i_l},\\
		0 & \text{otherwise},
		\end{cases}$$
		and that the equivalence classes defined by the partition $\mathcal{P}_{i_1,\,i_2,\,\ldots,\,i_l}$ each has size $n^{n-l}$. We say that a block in the partition $\mathcal{P}_{i_1,\,i_2,\,\ldots,\,i_l}$ covers $y\in [0,\,n-1]^n$ if $y$ is an element of the block.
		
		We define a set system $\mathcal{G}$ as follows: the sets in $\mathcal{G}$ correspond to $y$ for $y\in [0,\, n-1]^n$, and the set corresponding to $y$ has elements given by the blocks that cover $y$.
		
		The set $y$ in the set system $\mathcal{G}$ has size equal to the number of blocks that cover $y$, which is equal to
		$$a^{y,\,y,\,\ldots,\,y}=\tilde{Q}(1,\,1,\,\ldots,\,1) = 0\bmod m.$$
		For any $2\leq t^\prime\leq t$, and $y_1,\,y_2,\,\ldots,\,y_{t^\prime}\in [0,\,n-1]^n$ distinct, some block of $\mathcal{P}_{i_1,\,i_2,\,\ldots,\,i_l}$ covers all of $y_1,\,y_2,\,\ldots,\,y_{t^\prime}$ if and only if $b^{y_1,\,y_2,\,\ldots,\,y_{t^\prime},\,\ldots,\,y_{t^\prime}}_{i_1,\,i_2,\,\ldots,\,i_l}=1$ (note that $y_{t^\prime}$ occurs in the superscript $t-t^\prime+1$ times). Hence, the number of such blocks is equal to:
		$$a^{y_1,\,y_2,\,\ldots,\,y_{t^\prime},\,\ldots,\,y_{t^\prime}}\neq 0\bmod m.$$
		
		Finally, we would like to have a bound on $g$, the number of elements in the universe of $\mathcal{G}$. By our construction, this is equal to the number of blocks. Since the partition $\mathcal{P}_{i_1,\,i_2,\,\ldots,\,i_l}$ defines $n^l$ equivalence classes, the number of blocks is given by
		\begin{align*}
		g=\sum_{i_1<i_2<\cdots<i_l}\tilde{a}_{i_1,\,i_2,\,\ldots,\,i_l}n^l&\leq\sum_{l=0}^d\binom{n}{l}(m-1)n^l<(m-1)\sum_{l=0}^d\frac{n^{2l}}{l!}\\
		&<\frac{2(m-1)n^{2d}}{d!},
		\end{align*}
		provided that $n\geq 2d$. $\qed$
	\end{proof}
	
	\begin{theorem}
		\label{superpolynomial_set_systems}
		Let $\{\alpha_i\}_{i=1}^r$ be $r > 1$ positive integers and $m = \prod_{i=1}^{r} p_i^{\alpha_i}$ be a positive integer with $r$ different odd prime divisors: $p_1, \dots, p_r$, and $l\geq 2$ be an integer such that $l<\min(p_1,\,\ldots,\,p_r)$. Then, for all integers $t\geq 2$ and $n\geq 1$, there exists an explicitly constructible non-uniform set-system $\mathcal{H}$, defined over a universe of $h$ elements, such that
		\begin{enumerate}
			\item $h<2l(m-1)n^{4mn^\frac{1}{r}}$ if $n\geq (4m)^{1+\frac{1}{r-1}}$,
			\item $\vert\mathcal{H}\vert=n^{ln}+ln^n$,
			\item\label{T5C3} $\forall H_1, H_2\in\mathcal{H}$, either $\vert H_1\vert=\vert H_2\vert$, $\vert H_1\vert=l\vert H_2\vert$ or $l\vert H_1\vert=\vert H_2\vert$,
			\item\label{T5C4} $\mathcal{H}$ has $t$-wise restricted intersections modulo $m$.
		\end{enumerate}
	\end{theorem}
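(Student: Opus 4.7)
The plan is to combine Theorem~\ref{grolmusz_k_intersections} with Proposition~\ref{main_construction}. First I would invoke Theorem~\ref{grolmusz_k_intersections} with parameters $m$, $t$, $n$ to obtain an explicitly constructible uniform set-system $\mathcal{G}$ of cardinality $n^n$, defined over a universe of $g<\tfrac{2(m-1)n^{2d}}{d!}$ elements (valid once $n\geq 2d$), whose $t'$-wise intersections (for $2\leq t'\leq t$) are congruent to some $\mu\bmod m$ with $\mu\not\equiv 0\bmod m$ and $\mu\in\{0,1\}\bmod p_i$ for every $i\in[r]$, and whose full intersection is nonzero mod $m$.

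Next I would produce $l$ uniform set-systems $\mathcal{H}_1,\ldots,\mathcal{H}_l$ by relabeling the elements of $\mathcal{G}$ in $l$ different ways, so that each $\mathcal{H}_i$ lives on its own universe of $g$ elements. This immediately gives the five hypotheses of Proposition~\ref{main_construction}: all set sizes are divisible by $m$; the $t'$-wise intersection congruence is inherited verbatim from $\mathcal{G}$; the full intersection within each $\mathcal{H}_i$ is nonzero mod $m$; the sets across the copies have the same cardinality $km$; and the full intersections within each $\mathcal{H}_i$ have a common size $a$ (by construction). Since the $p_i$ are odd and $l<\min_i p_i$, we have $p_i>l$ for all $i$, so Proposition~\ref{main_construction} applies and produces a set-system $\mathcal{H}$ satisfying Conditions~\ref{T5C3} and~\ref{T5C4} of the theorem.

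Applying Remark~\ref{remarkImp} with $s=|\mathcal{G}|=n^n$ gives $|\mathcal{H}|=s^l+ls=n^{ln}+ln^n$, which is Condition~2; it also yields a universe of $h=lg$ elements for $\mathcal{H}$. To verify Condition~1, I would bound the degree $d$ from Theorem~\ref{Thm5}: since $d+1=\max_i p_i^{e_i}$ with $p_i^{e_i-1}\leq n^{1/r}$, we get $d+1\leq p_i\cdot n^{1/r}\leq m\cdot n^{1/r}$, so $d<mn^{1/r}$. Hence
\[
h \;=\; lg \;<\; \frac{2l(m-1)n^{2d}}{d!} \;\leq\; 2l(m-1)n^{2mn^{1/r}} \;<\; 2l(m-1)n^{4mn^{1/r}}.
\]
The hypothesis $n\geq(4m)^{1+1/(r-1)}$ gives $n^{1-1/r}\geq 4m$, whence $n\geq 4mn^{1/r}>2d$, which justifies the use of the bound in Theorem~\ref{grolmusz_k_intersections}.

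Because Proposition~\ref{main_construction} and Theorem~\ref{grolmusz_k_intersections} do the real combinatorial work, the only obstacle here is the bookkeeping: ensuring that the five hypotheses of Proposition~\ref{main_construction} can indeed be arranged from a single $\mathcal{G}$ via relabeling (trivial), matching the looseness of the degree bound to the form $n^{4mn^{1/r}}$ stated in the theorem, and verifying that $(4m)^{1+1/(r-1)}$ is strong enough to imply $n\geq 2d$.
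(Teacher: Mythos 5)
Your proposal is correct and follows essentially the same route as the paper: invoke Theorem~\ref{grolmusz_k_intersections} to obtain $\mathcal{G}$, feed it (via $l$ relabeled copies) into Proposition~\ref{main_construction}, read off $\vert\mathcal{H}\vert=n^{ln}+ln^n$ and $h=lg$ from Remark~\ref{remarkImp}, and bound $d$ and $g$ exactly as the paper does, with the hypothesis $n\geq(4m)^{1+\frac{1}{r-1}}$ ensuring $n\geq 4mn^{1/r}>2d$. The only nitpick is your intermediate bound $p_i^{e_i-1}\leq n^{1/r}$, which should read $p_i^{e_i-1}\leq\lceil n^{1/r}\rceil$ (the paper uses $d<2mn^{1/r}$ to absorb the ceiling), but this does not affect the final inequality $h<2l(m-1)n^{4mn^{1/r}}$.
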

	
	\begin{proof}
		By Theorem \ref{grolmusz_k_intersections}, there exists a uniform set-system $\mathcal{G}$ that satisfies conditions 1--3 of Proposition \ref{main_construction}, and is defined over a universe of $g$ elements, such that $\vert\mathcal{G}\vert=n^n$. Furthermore, we know that $g<\frac{2(m-1)n^{2d}}{d!}$ provided the condition $n\geq 2d$ is satisfied. From Theorem \ref{BBR_theorem}, $d=\max(p_1^{e_1},\,\ldots,\,p_r^{e_r})-1$ where $e_i$ is the smallest integer that satisfies $p_i^{e_i}>\lceil n^{1/r}\rceil$, from which we obtain the following inequality:
		$$d<\max(p_1,\,\ldots,\,p_r)\lceil n^{1/r}\rceil<2m n^{1/r}.$$
		Hence if $n\geq (4m)^{1+\frac{1}{r-1}}$, then $n^\frac{r-1}{r}\geq 4m\implies n\geq 4m n^{1/r}>2d$, and thus we have:
		$$g<\frac{2(m-1)n^{2d}}{d!}<2(m-1)n^{2d}<2(m-1)n^{4mn^\frac{1}{r}}.$$
		Applying Proposition \ref{main_construction} with the set-system $\mathcal{G}$, we obtain a set-system $\mathcal{H}$ satisfying Conditions~\ref{T5C3} and~\ref{T5C4}. It follows from Remark~\ref{remarkImp}, that the size of $\mathcal{H}$ is:
		$$\vert\mathcal{H}\vert=(n^n)^l+l(n^n)=n^{ln}+ln^n,$$
		and the number of elements in the universe of $\mathcal{H}$ is $h=l g<2l(m-1)n^{4mn^\frac{1}{r}}$ for $n\geq (4m)^{1+\frac{1}{r-1}}$. $\qed$
	\end{proof}
	\begin{corollary}[Same as Theorem~\ref{mainThm}]
		Let $\{\alpha_i\}_{i=1}^r$ be $r > 1$ positive integers and $m = \prod_{i=1}^{r} p_i^{\alpha_i}$ be a positive integer with $r$ different odd prime divisors: $p_1, \dots, p_r$, and $l\geq 2$ be an integer such that $l<\min(p_1,\,\ldots,\,p_r)$. Then, there exists $c>0$ such that for all integers $t\geq 2$ and $h\geq lm$, there exists an explicitly constructible non-uniform\footnote{member	sets do not	all have equal size} set-system $\mathcal{H}$, defined over a universe of $h$ elements, such that
		\begin{enumerate}
			\item $\vert\mathcal{H}\vert>\exp\left(c\dfrac{l(\log h)^r}{(\log\log h)^{r-1}}\right)+l\exp\left(c\dfrac{(\log h)^r}{(\log\log h)^{r-1}}\right)$,
			\item $\forall H_1, H_2\in\mathcal{H}$, either $\vert H_1\vert=\vert H_2\vert$, $\vert H_1\vert=l\vert H_2\vert$ or $l\vert H_1\vert=\vert H_2\vert$,
			\item $\mathcal{H}$ has $t$-wise restricted intersections modulo $m$.
		\end{enumerate}
	\end{corollary}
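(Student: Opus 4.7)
The plan is to derive the Corollary directly from Theorem~\ref{superpolynomial_set_systems} by inverting the upper bound on the universe size to express the lower bound on $|\mathcal{H}|$ in terms of $h$ rather than the free parameter $n$. Conditions 2 and 3 of the Corollary are identical to Conditions~\ref{T5C3} and~\ref{T5C4} of Theorem~\ref{superpolynomial_set_systems}, so they transfer verbatim; only Condition 1, the cardinality bound, requires work. To build the set-system for a given $h$, I would choose $n$ to be the largest integer satisfying both $n\geq(4m)^{1+1/(r-1)}$ and $2l(m-1)\,n^{4mn^{1/r}}\leq h$, then apply Theorem~\ref{superpolynomial_set_systems} to get a set-system on a universe of $g<h$ elements, and finally pad with $h-g$ dummy elements that belong to no member of $\mathcal{H}$ in order to reach exactly $h$ universe elements.

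Next I would invert the bound $h<2l(m-1)\,n^{4mn^{1/r}}$. Taking logarithms and absorbing the $O(1)$ additive term gives $\log h \leq C_1(m,l)\,n^{1/r}\log n$ for all sufficiently large $h$. Setting $y=n^{1/r}$, this reads $y\log y \geq L$ where $L = (\log h)/(C_1 r)$, and the standard analysis of the function $y\log y$ yields $y\geq L/\log L$, whence
\[ n \;\geq\; \left(\frac{\log h}{C_2(m,l,r)\,\log\log h}\right)^{\!r}. \]
Since $\log n = r\log y = \Theta(\log\log h)$ in this regime, multiplying by $\log n$ gives
\[ n\log n \;\geq\; c'\,\frac{(\log h)^r}{(\log\log h)^{r-1}} \]
for some $c' = c'(m,l,r)>0$. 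Exponentiating yields $n^n\geq\exp(c'(\log h)^r/(\log\log h)^{r-1})$, and $n^{ln}=(n^n)^l$ satisfies the analogous bound with $lc'$ in the exponent. Plugging into the identity $|\mathcal{H}|=n^{ln}+l\,n^n$ from Theorem~\ref{superpolynomial_set_systems} then delivers Condition 1 with $c=c'$.

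The main obstacle I foresee is purely bookkeeping: making the constant $c>0$ uniform across the full range $h\geq lm$. The asymptotic analysis above supplies such a $c$ only for $h$ beyond some threshold $h_0=h_0(m,l,r)$. For the finite window $lm\leq h<h_0$, the right-hand side $\exp(cl(\log h)^r/(\log\log h)^{r-1})+l\exp(c(\log h)^r/(\log\log h)^{r-1})$ is bounded by a constant depending only on $m,l,r$, so a sufficiently small choice of $c$ makes it no larger than the size of an explicit small set-system constructed by hand on that range (for instance, a short chain of nested sets, for which the non-degeneracy hypothesis in Definition~\ref{def2} makes Condition~3 vacuous). Taking the minimum of this small-$h$ constant with the asymptotic $c'$ gives the single constant required by the statement, completing the proof.
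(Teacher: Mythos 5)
Your main asymptotic argument is essentially the paper's own: choose $n$ maximal subject to $2l(m-1)n^{4mn^{1/r}}\leq h$, apply Theorem~\ref{superpolynomial_set_systems} (padding the universe up to $h$ elements, which indeed leaves Conditions 2 and 3 untouched), and invert the universe bound to get $n\gtrsim\bigl(\log h/\log\log h\bigr)^r$ and hence $n^n\geq\exp\bigl(c'(\log h)^r/(\log\log h)^{r-1}\bigr)$, with $n^{ln}=(n^n)^l$ giving the $l c'$ term. The paper performs this inversion via the Lambert $W_0$ asymptotics of Corless et al., whereas your elementary $y\log y\geq L\Rightarrow y\geq L/\log L$ inversion (with $\log n=\Theta(\log\log h)$) is an equally valid and arguably more self-contained route to the same estimate; that part of your proposal is sound.

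The genuine gap is in your small-$h$ fallback. For every $c>0$ the right-hand side of Condition 1 exceeds $1+l$, so the hand-built system on the window $lm\leq h<h_0$ must have at least $l+2$ members. But a chain of nested (hence distinct, hence strictly increasing in size) sets cannot have more than two members compatible with Condition 2 of the statement: pairwise size ratios must all be $1$ or $l$, and three nested sets would force a ratio of $l^2$, impossible for $l\geq 2$. A two-element chain has size $2\leq 1+l$, so your fallback never beats the bound, and the appeal to vacuousness of Condition 3 via degeneracy is precisely what confines you to chains. (Condition 1 of Definition~\ref{def2}, that every member has size $\equiv 0\bmod m$, is also not vacuous for a chain, though that is easy to arrange.) The paper instead takes the explicit non-chain system $\bigl\{[m-1]\cup\{m\},\,[m-1]\cup\{m+1\},\,\ldots,\,[m-1]\cup\{m+l\},\,[lm]\bigr\}$: it has exactly $l+2$ members, sizes $m$ and $lm$ only, every non-degenerate subfamily intersects in $[m-1]$ of size $m-1\not\equiv 0\bmod m$, and the subfamilies consisting of one small set together with $[lm]$ are degenerate and thus exempt. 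Replacing your chain with a construction of this shape closes the gap; the remainder of your argument then goes through.
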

	\begin{proof}
		For small values of $h$, we can simply take $\mathcal{H}$ to be the set system
		\begin{align*}
			\big\{[m-1]\cup \{m\},\ [m-1]\cup \{m+1\},\ \ldots,\ [m-1]\cup \{m+l\},\ [lm]\big\},
		\end{align*}
		so it is enough to prove the statement for sufficiently large $h$. Choose $n$ as large as possible subject to the restriction $2l(m-1)n^{4mn^\frac{1}{r}}\leq h$. We may assume that $h$ is sufficiently large so that the condition $n\geq (4m)^{1+\frac{1}{r-1}}$ is satisfied. For $N=n+1$, it holds that:
		$$h<2l(m-1)N^{4mN^\frac{1}{r}}\implies N>e^{rW_0\left(\frac{1}{4rm}\log\frac{h}{2l(m-1)}\right)},$$
		where $W_0$ is the principal branch of the Lambert $W$ function~\cite{LambertOrg[58]}. Fix any $c_1$ such that $0<c_1<\frac{1}{4rm}$. Then, for $h$ sufficiently large, $n>e^{rW_0\left(c_1\log h\right)}$. Corless et al. \cite{Corless[96]} proved the following:
		$$W_0(x)=\log x-\log\log x+o(1),$$
		hence, it follows that there exists some $c_2$ such that for all sufficiently large $h$, it holds that:
		\begin{align*}
		n&>\exp\left(r\log\log h-r\log\log\log h+c_2\right)\\
		&=\frac{e^{c_2}(\log h)^r}{(\log\log h)^r}.
		\end{align*}
		This shows that there exists $c_3>0$ such that for sufficiently large $h$, we get:
		\begin{equation}\label{EqAbove}
			n^n>\exp\left(\frac{c_3(\log h)^r}{(\log\log h)^{r-1}}\right).
		\end{equation}
		Since the size of $\mathcal{H}$ is $\vert\mathcal{H}\vert=n^{ln}+ln^n$, it follows from \Cref{EqAbove} that:
		\[\vert\mathcal{H}\vert>\exp\left(c\dfrac{l(\log h)^r}{(\log\log h)^{r-1}}\right)+l\exp\left(c\dfrac{(\log h)^r}{(\log\log h)^{r-1}}\right). \eqno \qed\] 
	\end{proof}

\begin{definition}[Covering Vectors \cite{Vipin[20]}]\label{def22}
	\emph{Let $m, h > 0$ be positive integers, $S \subseteq \mathbb{Z}_m \setminus \{0\}$, and w$(\cdot)$ and $\langle \cdot, \cdot \rangle$ denote Hamming weight and inner product, respectively. We say that a subset $\mathcal{V} = \{\textbf{v}_i\}_{i=1}^N$ of vectors in $(\mathbb{Z}_m)^h$ forms an $S$-covering family of vectors if the following two conditions are satisfied: 
		\begin{itemize}
			\item $\forall i \in [N]$, it holds that: $\langle \textbf{v}_i, \textbf{v}_i \rangle = 0 \bmod m$,
			\item $\forall i,j \in [N]$, where $i \neq j$, it holds that: 
			\begin{align*}
				\langle \textbf{v}_i, \textbf{v}_j \rangle \bmod m &= 
				\begin{cases}
					0 \qquad \qquad \quad \text{if w}(\textbf{v}_i \circ \textbf{v}_j \bmod m) = 0 \bmod m, \\
					\in S \qquad \quad \quad \text{otherwise},
				\end{cases}
			\end{align*}
		\end{itemize}
	where $\circ$ denotes Hadamard/Schur product (see Definition~\ref{Hada}).}
\end{definition}

Recall from Theorem~\ref{mainThm} that $h, m, l$ are positive integers such that $2 \leq l<\min(p_1,\,\ldots,\,p_r)$ and $m = \prod_{i=1}^r p_i^{\alpha_i}$ has $r > 1$ different prime divisors: $p_1, \ldots, p_r$. Further, it follows trivially that the sizes of the pairwise intersections of the sets in $\mathcal{H}$ occupy at most $m-1$ residue classes modulo $m$. If each set $H_i \in \mathcal{H}$ is represented by a representative vector $\textbf{v}_i \in (\mathbb{Z}_m)^h$, then for the resulting subset $\mathcal{V}$ of vectors in $(\mathbb{Z}_m)^h$, the following result follows from Theorem~\ref{mainThm}.   
\begin{corollary}[to Theorem~\ref{mainThm}]\label{corImp}
	For the set-system $\mathcal{H}$ defined in Theorem~\ref{mainThm}, if each set $H_i \in \mathcal{H}$ is represented by a unique vector \emph{$\textbf{v}_i$} $\in (\mathbb{Z}_m)^h$, then for a set $S$ of size $m-1,$ the set of vectors $\mathcal{V} =$ \emph{$\{\textbf{v}_i\}$}${}^N_{i=1}$, formed by the representative vectors of all sets in $\mathcal{H}$, forms an $S$-covering family such that $$N > \exp\left(c\dfrac{l(\log h)^r}{(\log\log h)^{r-1}}\right)+l\exp\left(c\dfrac{(\log h)^r}{(\log\log h)^{r-1}}\right)$$ and $\forall i,j \in [N]$ it holds that \emph{$\langle \textbf{v}_i, \textbf{v}_j \rangle$}$ = |H_i \cap H_j|\bmod m$.
\end{corollary}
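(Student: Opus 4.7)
The plan is to represent each set by its $\{0,1\}$-valued characteristic vector. Fix an enumeration $u_1,\ldots,u_h$ of the $h$-element universe of $\mathcal{H}$ and, for each $H_i\in\mathcal{H}$, define $\textbf{v}_i\in(\mathbb{Z}_m)^h$ by $(\textbf{v}_i)_k=1$ if $u_k\in H_i$ and $0$ otherwise. Then the inner product $\langle\textbf{v}_i,\textbf{v}_j\rangle$ counts precisely the indices $k$ with $u_k\in H_i\cap H_j$, so over $\mathbb{Z}$ (and hence modulo $m$) we obtain $\langle\textbf{v}_i,\textbf{v}_j\rangle=|H_i\cap H_j|$, which immediately yields the second claim. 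Distinct sets give distinct characteristic vectors, so $N=|\mathcal{V}|=|\mathcal{H}|$, and the lower bound on $N$ is exactly the first item of Theorem~\ref{mainThm}.

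Next, I verify the two conditions of Definition~\ref{def22}. The self inner product satisfies $\langle\textbf{v}_i,\textbf{v}_i\rangle=|H_i|\equiv 0\pmod m$ by the first part of the $t$-wise restricted intersections property built into Theorem~\ref{mainThm}. For $i\neq j$, the Hadamard product $\textbf{v}_i\circ\textbf{v}_j$ is itself the characteristic vector of $H_i\cap H_j$, hence its Hamming weight equals $|H_i\cap H_j|$. The side condition ``$\mathrm{w}(\textbf{v}_i\circ\textbf{v}_j\bmod m)\equiv 0\pmod m$'' therefore translates exactly to ``$|H_i\cap H_j|\equiv 0\pmod m$''. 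Choosing $S=\mathbb{Z}_m\setminus\{0\}$ (the claimed set of size $m-1$) then reduces the remainder to a case analysis on whether the pair $\{H_i,H_j\}$ is degenerate.

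If $\{H_i,H_j\}$ is non-degenerate, Theorem~\ref{mainThm} (applied with $t'=2$) gives $|H_i\cap H_j|\not\equiv 0\pmod m$, so both the Hamming-weight condition fails and $\langle\textbf{v}_i,\textbf{v}_j\rangle\bmod m\in\mathbb{Z}_m\setminus\{0\}=S$; the second branch of Definition~\ref{def22} holds. Otherwise one of the two sets contains the other, say $H_i\subseteq H_j$, and then $|H_i\cap H_j|=|H_i|\equiv 0\pmod m$, so the Hamming-weight side condition holds and $\langle\textbf{v}_i,\textbf{v}_j\rangle\equiv 0\pmod m$ automatically; the first branch of Definition~\ref{def22} holds. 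This exhausts all pairs, establishing that $\mathcal{V}$ is an $S$-covering family of the claimed size. There is no real obstacle here: the only point requiring attention is the observation that, under the characteristic-vector encoding, the Hamming-weight condition of Definition~\ref{def22} coincides with ``$|H_i\cap H_j|\equiv 0\pmod m$'', which by Theorem~\ref{mainThm} is in turn equivalent to degeneracy of the pair $\{H_i,H_j\}$.
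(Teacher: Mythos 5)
Your proof is correct and follows essentially the same route the paper intends: the paper states the corollary as an immediate consequence of Theorem~\ref{mainThm} by taking the representative vector of each $H_i$ to be its characteristic vector in $(\mathbb{Z}_m)^h$, so that $\langle \textbf{v}_i,\textbf{v}_j\rangle = |H_i\cap H_j| \bmod m$, the self-products vanish modulo $m$ by the size condition, the pairwise condition of Definition~\ref{def22} with $S=\mathbb{Z}_m\setminus\{0\}$ follows from the $t$-wise restricted intersections property (degenerate pairs giving $0$, non-degenerate pairs a nonzero residue), and the bound on $N=|\mathcal{H}|$ is item 1 of Theorem~\ref{mainThm}. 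Your explicit case analysis tying the Hamming-weight branch to degeneracy is exactly the intended justification, so there is nothing to correct.
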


\section{Working Over Set-Systems via Vector Families}\label{work}
In this section, we explain how vector families and special inner products can be used to work with sets from different set-systems. We begin by recalling the following two properties (from Remark~\ref{remarkImp}) that hold for all sets in any set-system $\mathcal{H}$ that is defined by Theorem~\ref{mainThm}. 
\begin{itemize}
	\item $H$ is a proper subset of exactly $s^{l-1}$ sets and not a proper superset of any sets in $\mathcal{H}$,
	\item $H$ is a proper superset of exactly $l$ sets and not a proper subset of any sets in $\mathcal{H}$,
\end{itemize}
where $s \geq \exp \left( c \dfrac{(\log h)^r}{(\log \log h)^{r-1}} \right)$.\\

Let $\mathcal{V}\subseteq(\mathbb{Z}_m)^h$ be a family of covering vectors, consisting of representative vectors for the sets in a set-system $\mathcal{H}$. For all $i \in |\mathcal{H}|(=|\mathcal{V}|)$, let $\textbf{v}_i \in \mathcal{V}$ denote the representative vector for the set $H_i \in \mathcal{H}$. Recall from Corollary~\ref{corImp} that the following holds:
\[\langle \textbf{v}_i, \textbf{v}_j \rangle = |H_i \cap H_j| \bmod m.\]
We define a $k$-multilinear form on $\mathcal{V}^k$ as:
\begin{align*}
\langle \textbf{v}_1,\,\textbf{v}_2,\,\ldots,\,\textbf{v}_k\rangle_k &=\sum_{i=1}^h \textbf{v}_{1}[i]\textbf{v}_{2}[i]\cdots \textbf{v}_{k}[i]\\ &= \Big| \bigcap\limits_{i=1}^{k} H_i \Big|.
\end{align*}
We fix a representative vector $\textbf{v}\in\mathcal{V}$ for a fixed set $H \in \mathcal{H}$. For the rest of the sets $H_i \in \mathcal{H}$, we denote their respective representative vectors by $\textbf{v}_i \in \mathcal{V}$. Let $\textbf{v},\,\textbf{v}_1,\,\textbf{v}_2\in\mathcal{V}$, and $\textbf{v}_{i \cup j} \in \mathcal{V}$ denote the representative vector for the set $H_{i \cup j} = H_i \cup H_j$. Then, the following holds:
\begin{align}
\langle \textbf{v}, \textbf{v}_{1 \cup 2}\rangle&=\vert H \cap (H_1\cup H_2)\vert=\vert(H \cap H_1)\cup (H \cap H_2)\vert \nonumber \\
&=\vert H \cap H_1 \vert+\vert H \cap H_2 \vert-\vert H \cap H_1\cap H_2\vert \nonumber \\
&=\langle \textbf{v}, \textbf{v}_1\rangle + \langle \textbf{v}, \textbf{v}_2\rangle - \langle \textbf{v}, \textbf{v}_1, \textbf{v}_2\rangle_3.
\label{Eq1}
\end{align}
Define $F$ as:
$$F(x,y,z)=x+y-z,$$
i.e., the following holds:
$$F(\langle \textbf{v}, \textbf{v}_1\rangle,\langle \textbf{v}, \textbf{v}_2\rangle,\langle \textbf{v}, \textbf{v}_1, \textbf{v}_2\rangle_3)=\langle \textbf{v}, \textbf{v}_{1\cup 2}\rangle.$$
Note that the following also holds:
\begin{align*}
\vert H \cap (H_1\cap H_2)\vert &= \langle \textbf{v}, \textbf{v}_1\rangle + \langle \textbf{v}, \textbf{v}_2\rangle - \langle \textbf{v}, \textbf{v}_{1 \cup 2}\rangle \\ 
&= \vert H \cap H_1 \vert+\vert H \cap H_2\vert - |H \cap (H_1 \cup H_2)|.
\end{align*}
Consider the following simple extension of \Cref{Eq1}:
\begin{align*}
\langle \textbf{v}, \textbf{v}_1, \textbf{v}_{2 \cup 3}\rangle_3&=\vert H \cap H_1\cap(H_2\cup H_3)\vert=\vert(H\cap H_1\cap H_2)\cup (H\cap H_1\cap H_3)\vert\\
&=\vert H \cap H_1\cap H_2\vert+\vert H \cap H_1\cap H_3 \vert-\vert H\cap H_1\cap H_2\cap H_3\vert\\
&=\langle \textbf{v}, \textbf{v}_1, \textbf{v}_2\rangle_3 + \langle \textbf{v}, \textbf{v}_1, \textbf{v}_3\rangle_3 - \langle \textbf{v}, \textbf{v}_1, \textbf{v}_2, \textbf{v}_3\rangle_4.
\end{align*}
Therefore, we get:
$$F\left(\langle \textbf{v}, \textbf{v}_1,\textbf{v}_2\rangle_3,\langle \textbf{v}, \textbf{v}_1,\textbf{v}_3\rangle_3,\langle \textbf{v}, \textbf{v}_1, \textbf{v}_2,\textbf{v}_3\rangle_4\right)=\langle \textbf{v}, \textbf{v}_1, \textbf{v}_{2\cup 3}\rangle_3.$$
Note that the following also holds:
\begin{align*}
\vert H \cap (H_1\cap H_2\:\cap&\:H_3)\vert = \langle \textbf{v}, \textbf{v}_1, \textbf{v}_2\rangle_3 + \langle \textbf{v}, \textbf{v}_1, \textbf{v}_3\rangle_3 - \langle \textbf{v}, \textbf{v}_1, \textbf{v}_{1 \cup 2}\rangle_4 \\ 
&= \vert H \cap H_1\cap H_2\vert+\vert H \cap H_1\cap H_3 \vert -\vert H \cap H_1\cap(H_2\cup H_3)\vert.
\end{align*}
It follows by extension that $\langle \textbf{v}, \textbf{v}_{1\cup 2\cup \cdots \cup w}\rangle_w,$ can be computed from the $k$-multilinear forms $\langle \textbf{v}_1,\, \textbf{v}_2,\, \ldots,\, \textbf{v}_k\rangle_k$, for all $k \in [w+1]$ and all $\textbf{v}_i\in\mathcal{V}$. Hence, $\langle \textbf{v}_i, \textbf{v}_j \rangle = |H_i \cap H_j| \bmod m$ allows us to compute intersection of any sets $H_i,H_j \in \mathcal{H}$, and being able to compute the aforementioned function $F(x,y,z)$ allows us to perform unions and intersections of any arbitrary number of sets from $\mathcal{H}$. 

\begin{figure}[t!]
	\centering
	\includegraphics[width=.7\textwidth]{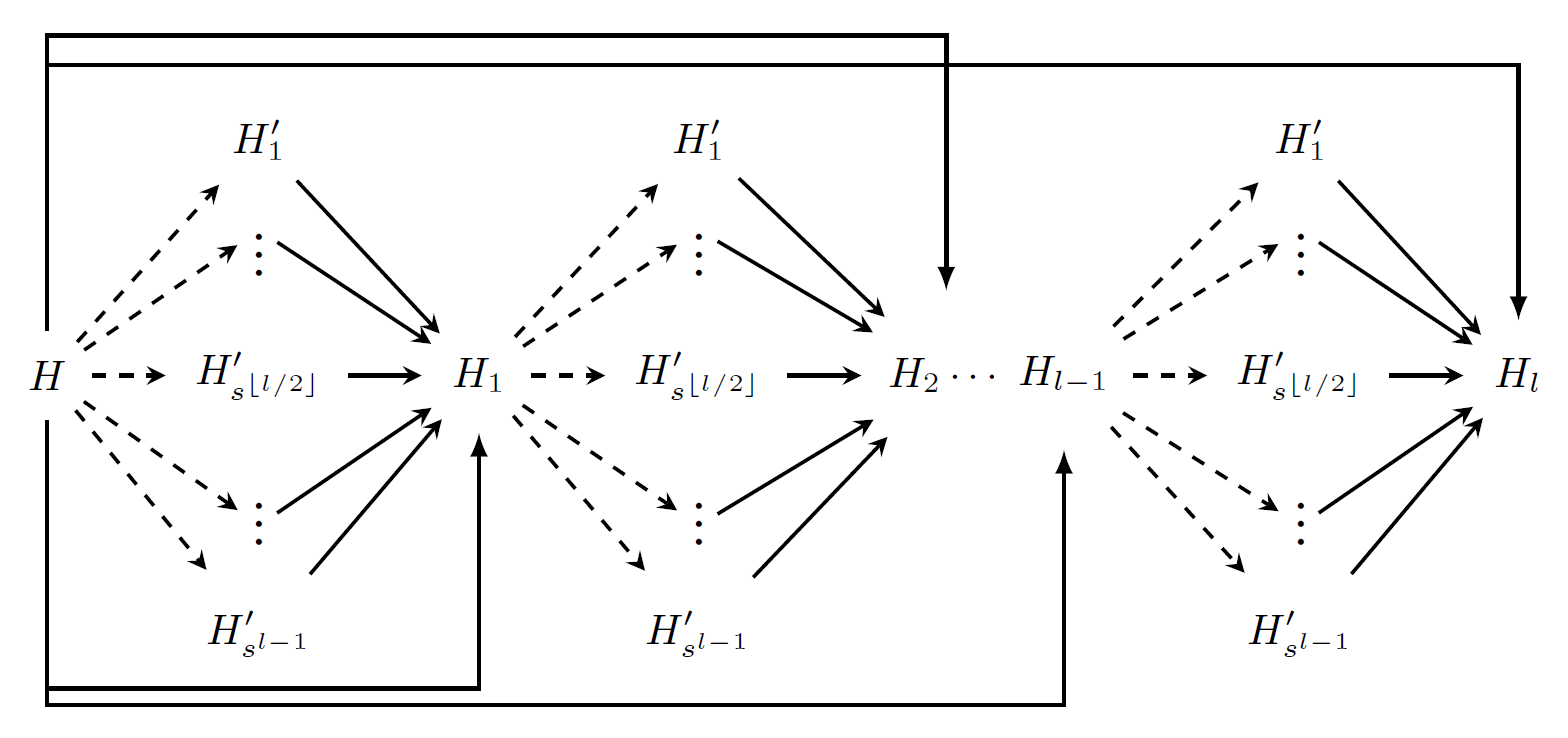}
	\caption{Supersets and subsets of a set $H \in \mathcal{H}, \mathcal{H}'$ within the two set-systems. $H_i$\sampleline{dashed,thick,->}$H_j$ denotes $H_i \subseteq H_j$, and $H_j$\sampleline{thick,->}$H_k$ denotes $H_j \supseteq H_k$. Since $\mathcal{H}$ and $\mathcal{H}'$ are defined over the identical universe of $h$ elements, superset-subset relations can hold even between the sets that exclusively belong to different set-systems.}\label{Fig1}
\end{figure}

Let $m = \prod_{i=1}^{r} p_i$ and $m' = \prod_{i=1}^{r'} p_i$ be positive integers, having $r$ and $r' > r$ different odd prime divisors, respectively. Recall from Theorem~\ref{mainThm}, that the universe of elements over which the set-system $\mathcal{H}$ is constructed is given by: $h \geq lm$, where $2 \leq l < \min(p_1, \dots, p_r)$. We construct two set-systems $\mathcal{H}$ and $\mathcal{H}'$ over $\mathbb{Z}_m$ and $\mathbb{Z}_{m'}$. Let the sets of parameters $\{h,l,m\}$ and $\{h',l',m'\}$ correspond to set-systems $\mathcal{H}$ and $\mathcal{H}'$, respectively. In order to ensure that the number of elements is same for both $\mathcal{H}$ and $\mathcal{H}'$, we set $h = h' =\max(lm,l'm')$. Since $m$ is a factor of $m'$, the following holds for all $H \in \mathcal{H}'$:
\[|H| = 0 \bmod m' = 0 \bmod m. \]
Note that for appropriate choice of the underlying set-system $\mathcal{G}$ (see Proposition~\ref{main_construction}), it holds that $|\mathcal{H} \cap \mathcal{H}'| > 0$. It follows from Remark~\ref{remarkImp} that the following two conditions hold for $H \in \mathcal{H} \cap \mathcal{H'}$:
\begin{itemize}
	\item $H$ is a proper subset of exactly $s^{l-1}$ sets and not a proper superset of any sets in $\mathcal{H}'$,
	\item $H$ is a proper superset of exactly $l$ sets and not a proper subset of any sets in $\mathcal{H}$.
\end{itemize}

Therefore, it holds for the representative $\textbf{v}$ of $H$ that: $\textbf{v} \in \mathcal{V} \cap \mathcal{V}'$. \Cref{Fig1} gives graphical depiction of the various subset and superset relationships of $H$ in $\mathcal{H}$ and $\mathcal{H}'$. Specifically, it shows the $s^{l-1}$ proper supersets $\{H'_1, \ldots, H'_{s^{l-1}} \} \in \mathcal{H}'$ of $H$, along with its $l$ proper subsets $\{H_1, \dots, H_l\} \in \mathcal{H}$.  

\begin{figure}[t!]
	\centering
	\includegraphics[width=.7\textwidth]{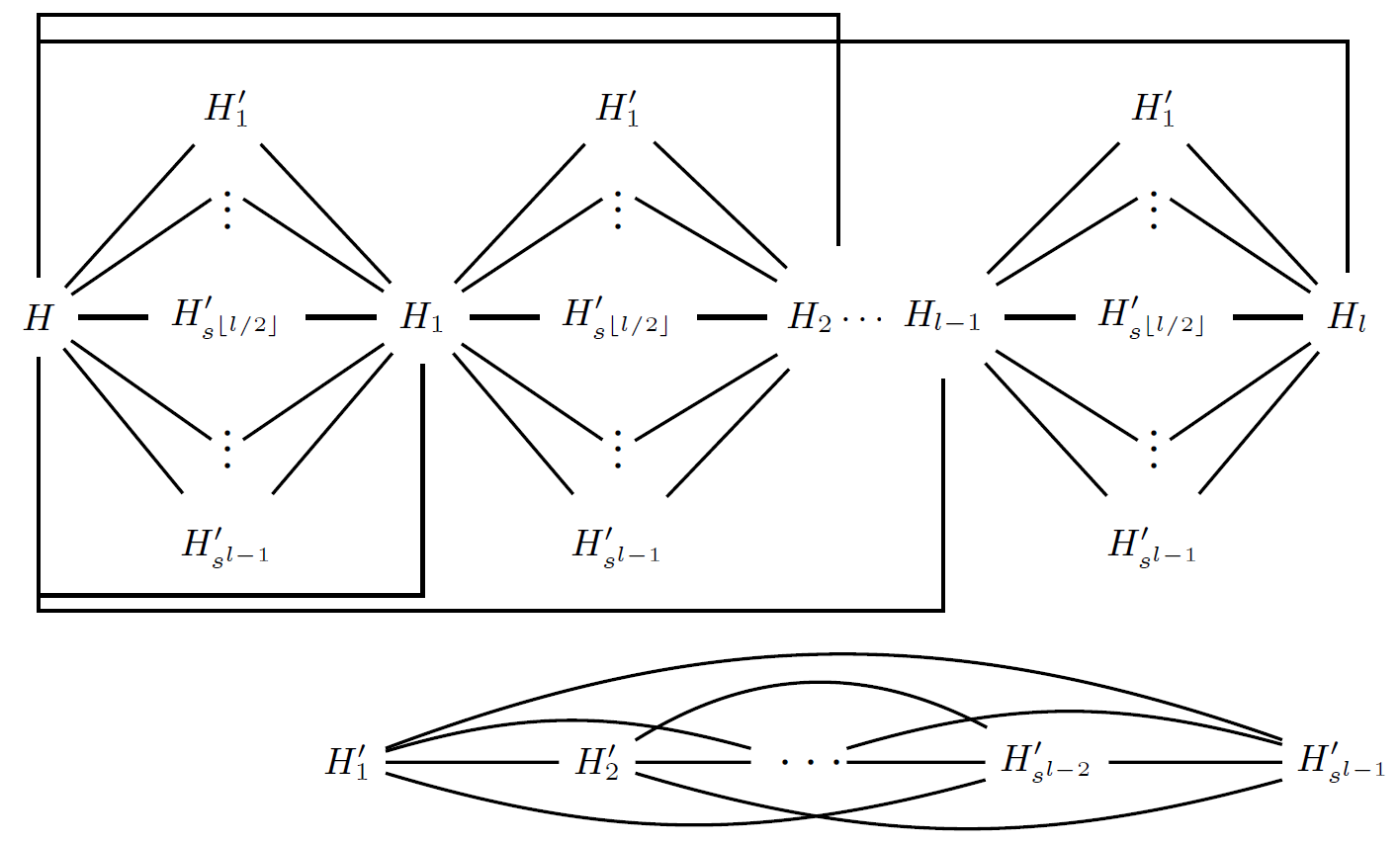}
	\caption{Hopping between supersets and subsets of a set $H \in \mathcal{H} \cap \mathcal{H}'$.}\label{Fig2}
\end{figure}

Let $\mathcal{V} \in \mathbb{Z}_m^h$ and $\mathcal{V}' \in \mathbb{Z}_{m'}^h$ be the covering vectors families (see Definition~\ref{def22}) that correspond to the set systems $\mathcal{H}$ and $\mathcal{H}'$, respectively. It is easy to see that for all $\textbf{v} \in \mathcal{V}$ and $\textbf{v}' \in \mathcal{V}'$, there exists some vector $\textbf{v}_\delta \in \mathbb{Z}^h$ such that $\textbf{v} + \textbf{v}_\delta = \textbf{v}'$. Since vector inner products are additive in the second argument, we can compute: $\langle \textbf{u}, \textbf{v} \rangle + \langle \textbf{u}, \textbf{v}_\delta \rangle = \langle \textbf{u}, \textbf{v}' \rangle$, and hence ``hop'' between the set-systems $\mathcal{H}$ and $\mathcal{H}'$. Having this ability along with our $k$-multilinear forms, allows us to ``hop'' within and between the set-systems $\mathcal{H}$ and $\mathcal{H}'$ via inner products of the corresponding vectors from covering vectors families $\mathcal{V}$ and $\mathcal{V}'$. 

For instance, given a set $H \in \mathcal{H} \cap \mathcal{H}'$, we can ``hop'' between the subsets and supersets of $H$ within the two set-systems. \Cref{Fig2} shows all such hops between all supersets and subsets of $H \in \mathcal{H} \cap \mathcal{H}'$. 

\section{Access Structure Encoding}\label{sec5}
In this section, we give an example procedure to encode any access structure. Let $\mathcal{P} = \{P_1, \ldots, P_\ell\}$ be a set of $\ell$ polynomial-time parties and $\mathrm{\Omega} \in \Gamma_0$ be any minimal authorized subset (see Definition~\ref{GammaDef}). Hence, each party $P_i \in \mathcal{P}$ can be identified as $P_i \in \mathrm{\Omega}$ or $P_i \in \mathcal{P} \setminus \mathrm{\Omega}$. We begin by giving an overview of the central idea of our scheme.

\subsection{\textbf{Central Idea}} 
If all parties in the minimal authorized subset $\mathrm{\Omega} \subseteq \mathcal{P}$ combine their access structure tokens $\{\mathrm{\mho}^{(\Gamma)}_i\}_{i \in \mathrm{\Omega}}$, they should arrive at a fixed set $H$, which represents $\mathrm{\Omega}$. From thereon, access structure token of each party $P_j \in \mathcal{P} \setminus \mathrm{\Omega}$ is generated such that no combination of their access structure tokens can reach $H$. Finally, the result of combining the access structure tokens, $\{\mathrm{\mho}^{(\Gamma)}_i\}_{i \in \mathcal{A}}$, of any authorized subset $\mathcal{A} \in \Gamma$, where $\Gamma = $ cl$(\mathrm{\Omega})$, takes us to some set $H_\phi \supseteq H$. As described in Section~\ref{work}, we can operate on the sets in our set-systems via their respective representative vectors, their inner products and $k$-multilinear forms. 

\subsection{\textbf{Example Procedure}}
Generate an integer $m = \prod_{i=1}^r p_r$ with $r>1$ prime divisors: $p_1, \ldots, p_r$ such that $|\mathrm{\Omega}| \ll \max(p_1, \ldots, p_r)$. Define a set-system $\mathcal{H}$ modulo $m$ (as described in Theorem~\ref{mainThm}) such that $l+|\mathrm{\Omega}| \ll \max(p_1, \ldots, p_r)$. Pick a set $H \xleftarrow{\; \$ \;} \mathcal{H}$ such that $H$ is a proper subset of exactly $s^{l-1}~(>\ell)$ sets and not a proper superset of any sets in $\mathcal{H}$. Let $\Im \subset \mathcal{H}$ denote the collection of sets in $\mathcal{H}$ that are supersets of $H$. Randomly generate a positive integer $\kappa$ such that $l+|\mathrm{\Omega}|+\kappa<\max(p_1, \ldots, p_r)$. Then, the following procedure is used to assign unique sets from $\mathcal{H}$ to the parties in $\mathcal{P}$. Without loss of generality, we assume that $\mathrm{\Omega} = \{P_1, P_2, \ldots, P_{|\mathrm{\Omega}|}\}.$

\begin{enumerate}
	\item The set for party $P_1$ is generated as: $$S_1 = H \sqcup ([|\mathrm{\Omega}|+\kappa] \setminus \{1\}).$$
	\item For each party $P_i ~(2 \leq i \leq |\mathrm{\Omega}|-1)$, generate its set as:
	\[S_i = H_i \sqcup ([|\mathrm{\Omega}|+\kappa] \setminus \{i\}),\] 
	where $H_i \xleftarrow{\; \$ \;} \Im$ is a superset of $H$.
	\item The set for party $P_{|\mathrm{\Omega}|}$ is generated as:
	\[S_{|\mathrm{\Omega}|} = H_{|\mathrm{\Omega}|} \sqcup ([|\mathrm{\Omega}| + \kappa] \setminus \{|\mathrm{\Omega}|, \ldots, |\mathrm{\Omega}|+\kappa \}) = H_{|\mathrm{\Omega}|} \sqcup [|\mathrm{\Omega}| - 1],\]
	where $H_{|\mathrm{\Omega}|} \xleftarrow{\; \$ \;} \Im$ is a superset of $H$.
	\item For each party $P_j \in \mathcal{P} \setminus \mathrm{\Omega}$, generate its set as:
	\[S_j = H_j \sqcup [|\mathrm{\Omega}|+\kappa], \]
	where $H_j \xleftarrow{\; \$ \;} \Im$ is a superset of $H$.
	\item Generate a ``special'' set $H_0 = H_\partial \sqcup [|\mathrm{\Omega}|+\kappa],$ where $H_\partial \xleftarrow{\; \$ \;} \Im$ and $H_\partial \neq H_i$ for all $i \in [|\mathrm{\Omega}|]$.
	\item For each $i \in [\ell]$, compute $H_0 \cap S_i$, and let $\textbf{s}_i$ denote the elements in $H_0 \cap S_i$. Let $\gamma$ a random permutation, then the access structure token for party $P_i \in \mathcal{P}$ is $\gamma(\textbf{s}_i)$.
\end{enumerate}   
Generating access structure tokens in this manner allows any subset of parties $\mathcal{A} \in \mathcal{P}$ to compute intersections of their respective sets $\{S_i\}_{i \in \mathcal{A}}$ by simply computing the inner products of $\{\textbf{s}\}_{i \in \mathcal{A}}$ modulo $m$. The way in which the sets $S_i$ are generated ensures that: $$\bigcap_{i \in \mathcal{A}} S_i = H$$ if and only if $\mathrm{\Omega} \subseteq \mathcal{A}$, i.e., $\mathcal{A} \in \Gamma$. By choosing a large enough maximum prime factor $\max(p_i)$, we can guarantee that the size of the intersection is never a multiple of $m$ unless $\bigcap\limits_{i \in \mathcal{A}} S_i = H$.

\begin{note}[From example procedure to a general procedure]\label{NoteImp}
	The space overhead of sending one unique vector to each party is $\mathrm{\Theta}(h)+\max(p_i) = \mathrm{\Theta}(h)$, where $h$ is the number of elements in the universe over which $\mathcal{H}$ is defined. We know from Section~\ref{work} that instead of vectors, the parties can be provided with inner products along with sizes of various unions and intersections. This allows the parties to compute the sizes of the intersections of their respective sets without revealing any information about the sets themselves. However, in order to perform
	unions (and the respective intersections) of $\ell$ sets, the parties need sizes of the intersections and unions of various combinations of sets $S_i~(1 \leq i \leq \ell)$, which increases the space overhead to $\approx 2^\ell$.
\end{note}

From hereon, we use $\mathrm{\mho}^{(\Gamma)}_i$ to denote the access structure token of party $P_i~(i \in [\ell])$. 

\begin{lemma}\label{lemma1}
	For every authorized subset of parties $\mathcal{A} \in \Gamma$, it holds that $|\bigcap\limits_{i \in \mathcal{A}} S_i| = 0 \bmod m$.
\end{lemma}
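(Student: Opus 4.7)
The plan is to exploit the disjoint-union structure of each assigned set $S_i = H_i' \sqcup T_i$ and show that the intersection splits cleanly as
\[\bigcap_{i \in \mathcal{A}} S_i \;=\; \Big(\bigcap_{i \in \mathcal{A}} H_i'\Big) \sqcup \Big(\bigcap_{i \in \mathcal{A}} T_i\Big),\]
so the analysis reduces to two independent component-wise intersections whose sizes add. Since $\Gamma = \text{cl}(\mathrm{\Omega})$ is generated by the minimal authorized subset $\mathrm{\Omega}$ used in the example procedure, every $\mathcal{A} \in \Gamma$ satisfies $\mathrm{\Omega} \subseteq \mathcal{A}$, and I will first analyse both components over $\mathcal{A} = \mathrm{\Omega}$ and then argue that enlarging $\mathcal{A}$ preserves the outcome.

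For the index component $T_i$, the construction gives $T_1 = [|\mathrm{\Omega}|+\kappa] \setminus \{1\}$, $T_i = [|\mathrm{\Omega}|+\kappa] \setminus \{i\}$ for $2 \le i \le |\mathrm{\Omega}|-1$, and $T_{|\mathrm{\Omega}|} = [|\mathrm{\Omega}|-1]$. Intersecting the first $|\mathrm{\Omega}|-1$ of these removes the indices $1, 2, \ldots, |\mathrm{\Omega}|-1$ from $[|\mathrm{\Omega}|+\kappa]$ and yields $\{|\mathrm{\Omega}|, |\mathrm{\Omega}|+1, \ldots, |\mathrm{\Omega}|+\kappa\}$, which is disjoint from $T_{|\mathrm{\Omega}|} = [|\mathrm{\Omega}|-1]$. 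Hence $\bigcap_{i \in \mathrm{\Omega}} T_i = \emptyset$. For any $j \in \mathcal{P} \setminus \mathrm{\Omega}$ the tag part $T_j = [|\mathrm{\Omega}|+\kappa]$ is a superset of every $T_i$ with $i \in \mathrm{\Omega}$, so including such $j$'s cannot revive any index, and $\bigcap_{i \in \mathcal{A}} T_i = \emptyset$ for every $\mathcal{A} \supseteq \mathrm{\Omega}$.

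For the set-system component $H_i'$, I note that $H_1' = H$ while every other $H_i'$ (both for $i \in \mathrm{\Omega} \setminus \{1\}$ and for $j \in \mathcal{P} \setminus \mathrm{\Omega}$) is drawn from $\Im$, the collection of supersets of $H$ in $\mathcal{H}$. Therefore every $H_i'$ with $i \in \mathcal{A}$ contains $H$, so the intersection collapses to $H$ itself. Combining the two components via the disjoint-union size formula yields $\bigl|\bigcap_{i \in \mathcal{A}} S_i\bigr| = |H| + 0 = |H|$. Because $H \in \mathcal{H}$ and the set-system $\mathcal{H}$ from Theorem~\ref{mainThm} has $t$-wise restricted intersections modulo $m$, every member of $\mathcal{H}$ satisfies $|H| = 0 \bmod m$, which closes the argument.

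I anticipate no serious obstacle; the proof is essentially bookkeeping around the disjoint-union decomposition. The only conceptual subtlety is the monotonicity observation that adding parties from $\mathcal{P} \setminus \mathrm{\Omega}$ can neither undo the emptiness of the tag-intersection nor disturb the equality $\bigcap H_i' = H$, which I isolate above via the two containments $T_j \supseteq T_i$ and $H_j \supseteq H$.
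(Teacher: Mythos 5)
Your proof is correct and follows essentially the same route as the paper: the paper's proof simply asserts that the generation of the $S_i$ forces $\bigcap_{i \in \mathcal{A}} S_i = H$ and concludes from $|H| = 0 \bmod m$, which is exactly your conclusion. Your write-up merely makes explicit the bookkeeping the paper leaves implicit, namely that the tag components intersect to the empty set (since $\mathrm{\Omega} \subseteq \mathcal{A}$ includes both $P_{|\mathrm{\Omega}|}$ and the parties $P_1,\dots,P_{|\mathrm{\Omega}|-1}$) while the $\mathcal{H}$-components intersect to $H$ itself because $P_1$'s component is $H$ and all others are supersets of $H$.
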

\begin{proof}
	It follows from the generation of $\{S_i\}_{i=1}^\ell$ that $\bigcap\limits_{i \in \mathcal{A}} S_i = H$. Hence, it follows that $|\bigcap\limits_{i \in \mathcal{A}} S_i| = 0 \bmod m. \qed$
\end{proof}

\begin{lemma}\label{lemma2}
	For every unauthorized subset of parties $\mathcal{B} \notin \Gamma$, it holds that $|\bigcap\limits_{i \in \mathcal{B}} S_i| \neq 0 \bmod m$.
\end{lemma}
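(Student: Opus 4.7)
My strategy is to leverage the disjoint-union structure $S_i = H_i \sqcup T_i$, where $H_i \in \mathcal{H}$ is the first coordinate ($H_1 = H$, and $H_i \in \Im$ for $i \neq 1$) and $T_i$ is the ``index'' coordinate contained in $[|\mathrm{\Omega}| + \kappa]$. Because the two coordinate universes are disjoint,
\[\left| \bigcap_{i \in \mathcal{B}} S_i \right| = \left| \bigcap_{i \in \mathcal{B}} H_i \right| + \left| \bigcap_{i \in \mathcal{B}} T_i \right|,\]
and the plan is to show this sum is nonzero modulo $p := \max(p_1, \ldots, p_r)$; since $p \mid m$, that suffices.

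First I would analyze the index contribution $c := |\bigcap_{i \in \mathcal{B}} T_i|$. Because $\mathcal{B} \notin \Gamma = \mathrm{cl}(\mathrm{\Omega})$, some $i_0 \in [|\mathrm{\Omega}|]$ satisfies $P_{i_0} \notin \mathcal{B}$, and I would split on whether $P_{|\mathrm{\Omega}|} \in \mathcal{B}$. If $P_{|\mathrm{\Omega}|} \in \mathcal{B}$, then $\bigcap T_i = [|\mathrm{\Omega}| - 1] \setminus (\mathcal{B} \cap [|\mathrm{\Omega}| - 1])$; the index $i_0$ (necessarily in $[|\mathrm{\Omega}| - 1]$) survives, so $c \geq 1$. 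If $P_{|\mathrm{\Omega}|} \notin \mathcal{B}$, every remaining $T_i$ contains the tail $\{|\mathrm{\Omega}|, \ldots, |\mathrm{\Omega}| + \kappa\}$, giving $c \geq \kappa + 1$. Either way $1 \leq c \leq |\mathrm{\Omega}| + \kappa$.

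Next I would examine $H_\phi := \bigcap_{i \in \mathcal{B}} H_i$. If $\{H_i : i \in \mathcal{B}\}$ is degenerate---which holds automatically whenever $P_1 \in \mathcal{B}$, because $H_1 = H$ lies below every $H_j \in \Im$---then $H_\phi$ is the minimum member of the family, lies in $\mathcal{H}$, and therefore $|H_\phi| \equiv 0 \bmod m$. Otherwise $P_1 \notin \mathcal{B}$ and every $H_i$ is a ``large'' set from $\Im$, which corresponds to the case $t_1 = \cdots = t_l = 0$ in the proof of Proposition~\ref{main_construction}: there $|H_\phi|$ is expressed as $\sum_{j=1}^{l'} \mu_j \bmod m$ with $1 \leq l' \leq l$ and each $\mu_j \in \{0, 1\} \bmod p$, whence $|H_\phi| \bmod p \in \{0, 1, \ldots, l\}$.

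To conclude, I would combine the two bounds: $|\bigcap S_i| \bmod p = (|H_\phi| + c) \bmod p$ lies in the interval $[c, c + l]$, which by the hypothesis $l + |\mathrm{\Omega}| + \kappa < p$ is contained in $[1, p-1]$ and is thus nonzero; the degenerate subcase collapses to $c \bmod p \in [1, p-1]$, also nonzero. The desired $|\bigcap S_i| \not\equiv 0 \bmod m$ follows. I expect the main obstacle to be rigorously extracting the residue bound $|H_\phi| \bmod p \leq l$ in the non-degenerate case, which requires carefully unpacking Proposition~\ref{main_construction}'s case analysis and tracking how contributions from the $l$ component set-systems $\mathcal{H}_1, \ldots, \mathcal{H}_l$ accumulate modulo $p$; the remainder of the argument is routine size-accounting.
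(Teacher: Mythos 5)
Your proposal is correct and takes essentially the same route as the paper's proof: both decompose $\bigcap_{i\in\mathcal{B}} S_i$ into a set-system part $K$ and an index part $K'\subseteq[|\mathrm{\Omega}|+\kappa]$, and both argue that $|K|\bmod p$ and $|K'|$ are small enough (with $|K'|\geq 1$) to keep the total in $[1,p-1]$ for $p=\max(p_1,\ldots,p_r)$. The paper states the bound $1\leq|\bigcap S_i|\bmod p\leq l+|\mathrm{\Omega}|+\kappa$ tersely; you have simply spelled out the two case analyses (why $K'$ is non-empty via $P_{|\mathrm{\Omega}|}$, and why $|K|\bmod p\leq l$ via Proposition~\ref{main_construction}) that the paper leaves implicit.
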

\begin{proof}
	Since $\mathcal{B}$ is unauthorized, there exists $1\leq j\leq |\mathrm{\Omega}|$ such that $P_j\not\in\mathcal{B}$. Then $\bigcap\limits_{i \in \mathcal{B}} S_i=K\sqcup K'$, where $K$ is an intersection of certain supersets of $H$ (which might include $H$ itself), and $K'$ is a non-empty subset of $[|\mathrm{\Omega}|+\kappa]$. It follows that:
	$$1\leq\left|\bigcap\limits_{i \in \mathcal{B}} S_i\right|\bmod p\leq l+|\mathrm{\Omega}|+\kappa$$
	for $p=\max(p_1, \ldots, p_r)$, from which we obtain $|\bigcap\limits_{i \in \mathcal{B}} S_i| \neq 0 \bmod m$. $\qed$
\end{proof}

\section{\textsf{PRIM-LWE}}\label{prim} 
In this section, we present a new variant of LWE, called \textsf{PRIM-LWE}. We begin by describing and deriving some relevant results. 

Recall that $M_n(\mathbb{Z}_p)$ denotes the space of $n \times n$ matrices over $\mathbb{Z}_p$.
\begin{proposition}
	\label{determinants_of_matrices}
	Let $p$ be prime. Then, there are
	$$p^{n^2}-\prod_{k=0}^{n-1}(p^n-p^k)$$
	matrices in $M_n(\mathbb{Z}_p)$ with determinant $0$. And, for any non-zero $\alpha\in\mathbb{Z}_p$, there are
	$$p^{n-1}\prod_{k=0}^{n-2}(p^n-p^k)$$
	matrices with determinant $\alpha$.
\end{proposition}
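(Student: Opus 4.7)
The plan is to reduce everything to counting $GL_n(\mathbb{Z}_p)$ and then exploit the fact that $\det\colon GL_n(\mathbb{Z}_p) \to \mathbb{Z}_p^*$ is a surjective group homomorphism, so that all nonzero-determinant fibers have equal cardinality.

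First I would count the invertible matrices directly, by the usual row-by-row argument. A matrix $M \in M_n(\mathbb{Z}_p)$ is invertible iff its rows form an $\mathbb{F}_p$-basis of $\mathbb{Z}_p^n$. The first row may be any nonzero vector, giving $p^n - 1 = p^n - p^0$ choices; the second row may be anything outside the span of the first, giving $p^n - p$ choices; and in general the $(k+1)$-st row must lie outside the $k$-dimensional span of the previous rows, contributing $p^n - p^k$ choices. Hence
\[
|GL_n(\mathbb{Z}_p)| \;=\; \prod_{k=0}^{n-1}(p^n - p^k).
\]
Since every matrix is either invertible or has determinant $0$ (and the total count is $p^{n^2}$), the first claim follows immediately:
\[
\#\{M \in M_n(\mathbb{Z}_p) : \det M = 0\} \;=\; p^{n^2} - \prod_{k=0}^{n-1}(p^n - p^k).
\]

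For the second claim, I would use that $\det\colon GL_n(\mathbb{Z}_p) \to \mathbb{Z}_p^*$ is surjective (e.g., using the diagonal matrix $\mathrm{diag}(\alpha,1,\ldots,1)$), with kernel $SL_n(\mathbb{Z}_p)$. Thus all $(p-1)$ fibers $\det^{-1}(\alpha)$, for $\alpha \in \mathbb{Z}_p^*$, are cosets of $SL_n(\mathbb{Z}_p)$ in $GL_n(\mathbb{Z}_p)$ and so share a common size $|GL_n(\mathbb{Z}_p)|/(p-1)$. It then remains to simplify this quotient. Splitting off the $k = n-1$ factor,
\[
\prod_{k=0}^{n-1}(p^n - p^k) \;=\; (p^n - p^{n-1})\prod_{k=0}^{n-2}(p^n - p^k) \;=\; p^{n-1}(p-1)\prod_{k=0}^{n-2}(p^n - p^k),
\]
so dividing by $p-1$ yields exactly $p^{n-1}\prod_{k=0}^{n-2}(p^n - p^k)$, as required.

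There is no real obstacle here: the argument is essentially bookkeeping once one has the row-counting formula for $|GL_n(\mathbb{Z}_p)|$ and the observation that $\det$ is a surjective homomorphism. The only tiny subtlety worth stating explicitly in the write-up is surjectivity of $\det$ (hence equicardinality of fibers over $\mathbb{Z}_p^*$), since without this one only gets that the fiber sizes sum to $|GL_n(\mathbb{Z}_p)|$ rather than that each equals $|GL_n(\mathbb{Z}_p)|/(p-1)$.
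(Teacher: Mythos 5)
Your proposal is correct and follows essentially the same route as the paper: count $GL_n(\mathbb{Z}_p)$ by the linear-independence argument (the paper uses columns, you use rows), subtract from $p^{n^2}$ for the singular count, and use the determinant homomorphism onto $\mathbb{Z}_p^*$ to conclude all nonzero-determinant fibers have size $\prod_{k=0}^{n-1}(p^n-p^k)/(p-1) = p^{n-1}\prod_{k=0}^{n-2}(p^n-p^k)$. Your explicit remark about surjectivity (via $\mathrm{diag}(\alpha,1,\ldots,1)$) is a small point the paper leaves implicit, but otherwise the two arguments coincide.
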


\begin{proof}
	Clearly, there are $p^{n^2}$ matrices in $M_n(\mathbb{Z}_p)$. A matrix $\textbf{M}\in M_n(\mathbb{Z}_p)$ has non-zero determinant if and only if it has linearly independent columns. Hence there are
	$$\gamma(p):=(p^n-1)(p^n-p)(p^n-p^2)\cdots(p^n-p^{n-1})=\prod_{k=0}^{n-1}(p^n-p^k)$$
	such matrices, which shows that there are
	$$p^{n^2}-\gamma(p)=p^{n^2}-\prod_{k=0}^{n-1}(p^n-p^k)$$
	matrices with determinant $0$.
	
	To prove the second statement, consider the determinant map
	$$\det: GL_n(\mathbb{Z}_p)\to\mathbb{Z}_p^\ast,$$
	where $GL_n(\cdot)$ denotes the general linear group (see \cite{ConCurt[85]}, Chapter 2.1, p. x) and $\det$ is a group homomorphism. Hence, it follows that for any $\alpha\in\mathbb{Z}_p^\ast$, there are exactly
	$$\frac{\gamma(p)}{p-1}=p^{n-1}\prod_{k=0}^{n-2}(p^n-p^k)$$
	matrices with determinant $\alpha$. $\qed$
\end{proof}

\begin{corollary}
	\label{primitive_root_determinants}
	The fraction of matrices in $M_n(\mathbb{Z}_p)$ whose determinant generates $\mathbb{Z}_p^\ast$ is:
	$$\frac{\varphi(p-1)\prod_{k=2}^{n}(p^k-1)}{p^{\frac{1}{2}n(n+1)}},$$
	where $\varphi$ is Euler's totient function (see Theorem~\ref{Euler}).
\end{corollary}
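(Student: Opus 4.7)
The strategy is immediate from Proposition~\ref{determinants_of_matrices}: once we know how many matrices have a fixed nonzero determinant, we only need to count how many elements of $\mathbb{Z}_p^\ast$ are generators and then multiply. Since $\mathbb{Z}_p^\ast$ is cyclic of order $p-1$, the number of generators is $\varphi(p-1)$. Hence the number of matrices in $M_n(\mathbb{Z}_p)$ whose determinant generates $\mathbb{Z}_p^\ast$ is
\[
\varphi(p-1)\cdot p^{n-1}\prod_{k=0}^{n-2}(p^n-p^k),
\]
and dividing by the total count $p^{n^2}$ yields the desired fraction. The remainder of the argument is purely an algebraic simplification.

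The plan is to pull out factors of $p$ from each term in the product. For each $k\in\{0,1,\ldots,n-2\}$, write $p^n-p^k=p^k(p^{n-k}-1)$, giving
\[
\prod_{k=0}^{n-2}(p^n-p^k)=p^{\sum_{k=0}^{n-2}k}\prod_{k=0}^{n-2}(p^{n-k}-1)=p^{(n-1)(n-2)/2}\prod_{j=2}^{n}(p^j-1),
\]
where the reindexing $j=n-k$ turns the product into $\prod_{j=2}^{n}(p^j-1)$. Substituting back gives the numerator
\[
\varphi(p-1)\,p^{n-1+(n-1)(n-2)/2}\prod_{k=2}^{n}(p^k-1)
=\varphi(p-1)\,p^{n(n-1)/2}\prod_{k=2}^{n}(p^k-1),
\]
using the identity $n-1+(n-1)(n-2)/2=(n-1)n/2$.

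Finally, dividing by $p^{n^2}$ and combining the powers of $p$ via $n^2-\tfrac{n(n-1)}{2}=\tfrac{n(n+1)}{2}$ yields exactly
\[
\frac{\varphi(p-1)\prod_{k=2}^{n}(p^k-1)}{p^{\frac{1}{2}n(n+1)}},
\]
as claimed. There is no genuine obstacle here; the only thing to watch is the bookkeeping of exponents of $p$ and the reindexing of the product, both of which are mechanical.
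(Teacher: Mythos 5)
Your proof is correct and follows exactly the same route as the paper: count the $\varphi(p-1)$ generators of the cyclic group $\mathbb{Z}_p^\ast$, multiply by the count per fixed nonzero determinant from Proposition~\ref{determinants_of_matrices}, and divide by $p^{n^2}$. The only difference is that you spell out the exponent bookkeeping that the paper leaves implicit, and your algebra checks out.
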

\begin{proof}
	Note that $\mathbb{Z}_p^\ast$ is cyclic of order $p-1$, so it has exactly $\varphi(p-1)$ different generators. By Proposition \ref{determinants_of_matrices}, the required fraction is
	$$\varphi(p-1)\times\frac{p^{n-1}\prod_{k=0}^{n-2}(p^n-p^k)}{p^{n^2}}=\varphi(p-1)\times\frac{\prod_{k=2}^{n}(p^k-1)}{p^{\frac{1}{2}n(n+1)}}.\eqno \qed$$
\end{proof}

Let us recall two standard results, given as Propositions~\ref{infinite_products_1} and~\ref{infinite_products_2}, from the theory of infinite products (see~\cite{Knopp[56]} for more details).

\begin{proposition}
	\label{infinite_products_1}
	The infinite product $\prod_{k=1}^{\infty}a_k$ converges to a non-zero limit if and only if $\sum_{k=1}^{\infty}\log a_n$ converges.
\end{proposition}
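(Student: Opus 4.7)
The plan is to exploit the identity $P_n := \prod_{k=1}^{n}a_k = \exp(S_n)$, where $S_n := \sum_{k=1}^{n}\log a_k$, so that convergence of the product is translated into convergence of the sum via the continuous bijection $\exp$ between $\mathbb{R}$ (or the relevant strip in $\mathbb{C}$) and $(0,\infty)$ (resp.\ $\mathbb{C}\setminus\{0\}$). Since the later applications only require positive real $a_k$, I would state the proof for that setting and note that the complex case follows by choosing the principal branch of $\log$ on a suitable neighbourhood of $1$.

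First I would handle the easy direction. Assume $S_n \to S$. Because $\exp$ is continuous everywhere, $P_n = \exp(S_n) \to \exp(S)$. Since $\exp$ never vanishes, the limit $\exp(S)$ is automatically non-zero, so $\prod_{k=1}^{\infty}a_k$ converges to a non-zero value.

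Next I would handle the converse. Assume $P_n \to P \neq 0$. Writing $a_n = P_n / P_{n-1}$, continuity of division (away from $0$) forces $a_n \to P/P = 1$. Hence there exists $N$ such that for all $n \geq N$, $a_n$ lies in a fixed neighbourhood of $1$ on which $\log$ is continuous. Taking logarithms on the tail, $\log(P_n/P_N) = \sum_{k=N+1}^{n}\log a_k$, and by continuity of $\log$ at the non-zero point $P/P_N$ the left-hand side converges to $\log(P/P_N)$. Adding back the finite partial sum $\sum_{k=1}^{N}\log a_k$ (which is well defined since each $a_k \neq 0$, as the product is non-zero) shows that $S_n$ converges.

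The main obstacle, if one wanted the fully general complex statement, is simply keeping track of branches of the logarithm: one must argue that the logs of the tail terms accumulate without jumping across the branch cut, which follows from $a_n \to 1$. In the real-positive regime in which Proposition \ref{infinite_products_1} is actually used (for the asymptotic estimates in Corollary \ref{primitive_root_determinants}), this issue does not arise, so the proof above is clean and essentially a one-line application of the continuity of $\exp$ and $\log$.
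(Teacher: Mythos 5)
Your proof is correct. For comparison: the paper does not prove this proposition at all --- it is recalled as a standard fact about infinite products with a citation to Knopp --- so there is no internal argument to measure against; what you have written is the classical textbook proof, and it is sound in the positive-real setting in which the proposition is actually invoked (the factors $\frac{p^k-1}{p^k}$ in Proposition~\ref{primitive_root_determinants_lower_bound} are positive). Two small remarks. First, in that setting the converse direction is even shorter than your tail argument: all partial products $P_n=\prod_{k=1}^n a_k$ are positive, so $S_n=\log P_n\to\log P$ directly by continuity of $\log$ on $(0,\infty)$; splitting off the tail after an index $N$ with $a_n$ near $1$ is only needed for the complex/branch-cut version. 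Second, your parenthetical ``well defined since each $a_k\neq 0$'' should really say $a_k>0$, since real logarithms require positivity; this is exactly the hypothesis you fixed at the outset, so it is a slip of wording rather than a gap. Your closing caveat about tracking branches of $\log$ in the fully complex statement (using $a_n\to 1$ to rule out jumps across the cut) correctly identifies the only delicate point of the general theorem, and since the paper never needs that generality, leaving it as a remark is acceptable.
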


\begin{proposition}
	\label{infinite_products_2}
	$\sum_{k=1}^{\infty}\log a_n$ converges absolutely if and only if $\sum_{k=1}^{\infty}(1-a_n)$ converges absolutely. Hence, if $\sum_{k=1}^{\infty}(1-a_n)$ converges absolutely, then $\prod_{k=1}^{\infty}a_k$ converges to a non-zero limit.
\end{proposition}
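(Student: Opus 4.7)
The plan is to establish the biconditional by exploiting the asymptotic equivalence $\log a_n \sim -(1-a_n)$ whenever $a_n\to 1$, and then derive the ``hence'' clause by chaining the result with Proposition~\ref{infinite_products_1}. The key analytic fact I will invoke is that $\lim_{x\to 0}\tfrac{\log(1+x)}{x}=1$, which gives, after substituting $x=-(1-a_n)$, the limit $\lim_{n\to\infty}\tfrac{\log a_n}{-(1-a_n)} = 1$ whenever $a_n\to 1$.

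The first step is to observe that absolute convergence of either $\sum(1-a_n)$ or $\sum\log a_n$ forces the corresponding summand to tend to $0$: in the former case $1-a_n\to 0$ so $a_n\to 1$; in the latter case $\log a_n\to 0$, which also forces $a_n\to 1$ (note that we may tacitly assume $a_n>0$ eventually, since otherwise $\log a_n$ is undefined and the hypothesis is vacuous). Once $a_n\to 1$ is established, the asymptotic ratio $\log a_n/(a_n-1)\to 1$ yields some $N$ such that for all $n\ge N$,
\[
\tfrac{1}{2}|1-a_n| \le |\log a_n| \le \tfrac{3}{2}|1-a_n|.
\]
The limit comparison test (or a direct tail comparison) then gives the biconditional: absolute convergence of one tail is equivalent to absolute convergence of the other, and absolute convergence of a series is unaffected by removing finitely many terms.

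For the ``hence'' clause, suppose $\sum(1-a_n)$ converges absolutely. By the biconditional just proved, $\sum\log a_n$ converges absolutely, and absolute convergence implies convergence in the usual sense. Proposition~\ref{infinite_products_1} then immediately yields that $\prod_{k=1}^{\infty} a_k$ converges to a non-zero limit, completing the proof.

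The only genuinely delicate point is the preliminary step of deducing $a_n\to 1$ from the absolute convergence hypothesis before the asymptotic comparison can be applied; everything else is a routine tail-comparison argument. I do not expect the proof to require more than the standard Taylor estimate for $\log(1+x)$ and the elementary divergence test, so there is no substantial obstacle beyond being careful with the domain of definition of $\log a_n$.
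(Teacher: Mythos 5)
Your proof is correct. Note, however, that the paper does not actually supply a proof of Proposition~\ref{infinite_products_2} at all: both Proposition~\ref{infinite_products_1} and Proposition~\ref{infinite_products_2} are stated as ``standard results \ldots from the theory of infinite products'' and referred to Knopp's monograph, so there is no in-paper argument to compare against. Your argument is the standard textbook one: deduce $a_n\to 1$ from the necessary condition for convergence (in either direction of the biconditional), then use the elementary estimate $\log a_n \sim -(1-a_n)$ near $a_n=1$ to get a two-sided tail comparison between $\lvert\log a_n\rvert$ and $\lvert 1-a_n\rvert$, and finally chain with Proposition~\ref{infinite_products_1} for the ``hence'' clause. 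Each step is sound; the one point worth making explicit when writing this up is that $a_n\to 1$ also guarantees $a_n>0$ for all large $n$, so $\log a_n$ is eventually defined and the factors of the tail product are eventually positive, which is what makes the invocation of Proposition~\ref{infinite_products_1} legitimate.
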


\begin{proposition}
	\label{primitive_root_determinants_lower_bound}
	There exists a constant $c=c(p)>0$, independent of $n$, such that the fraction of matrices in $M_n(\mathbb{Z}_p)$ whose determinant generates $\mathbb{Z}_p^\ast$ is bounded below by $c$ for all $n$.
\end{proposition}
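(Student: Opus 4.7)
The plan is to start from the exact formula for the fraction given by Corollary~\ref{primitive_root_determinants} and rewrite it so that the dependence on $n$ is isolated in a decreasing, bounded-below partial product. Concretely, I would factor $p^k$ out of each term in the numerator: since $\sum_{k=2}^{n} k = \tfrac{n(n+1)}{2}-1$, the fraction becomes
\[
\frac{\varphi(p-1)\prod_{k=2}^{n}(p^k-1)}{p^{\frac{1}{2}n(n+1)}} = \frac{\varphi(p-1)}{p}\prod_{k=2}^{n}\left(1-\frac{1}{p^k}\right).
\]
This reduction is the main algebraic move; everything afterwards is convergence of an infinite product.

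Next, I would argue that the infinite product $\prod_{k=2}^{\infty}\bigl(1-p^{-k}\bigr)$ converges to a strictly positive limit $L=L(p)>0$. By Proposition~\ref{infinite_products_2}, it suffices to show that $\sum_{k=2}^{\infty} p^{-k}$ converges absolutely, which is immediate because it is a geometric series with ratio $1/p < 1$ summing to $\tfrac{1}{p(p-1)}$. Hence $\prod_{k=2}^{\infty}(1-p^{-k})$ converges to a nonzero limit by Proposition~\ref{infinite_products_1}, and positivity follows because every factor $1-p^{-k}$ lies in $(0,1)$.

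Finally, since each factor $1-p^{-k}$ is in $(0,1)$, the sequence of partial products $\prod_{k=2}^{n}(1-p^{-k})$ is strictly decreasing in $n$ and therefore bounded below by its limit $L$. Setting
\[
c = c(p) := \frac{\varphi(p-1)}{p}\,L > 0,
\]
which is independent of $n$, yields the desired lower bound on the fraction for every $n$. There is no real obstacle here; the only mild subtlety is remembering that monotonicity of the partial products (all factors being in $(0,1)$) is what converts the limit $L$ into a uniform lower bound on all finite truncations, so that the constant $c$ works simultaneously for all $n$. \qed
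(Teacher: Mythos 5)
Your proposal is correct and follows essentially the same route as the paper: rewrite the fraction from Corollary~\ref{primitive_root_determinants} as a constant times the partial product $\prod_{k}(1-p^{-k})$, invoke Propositions~\ref{infinite_products_1} and~\ref{infinite_products_2} with the geometric series $\sum_k p^{-k}$ to get a nonzero limit, and use monotonicity of the partial products to turn that limit into a uniform lower bound. The only difference from the paper is cosmetic bookkeeping (the paper writes the product starting at $k=1$ with a prefactor $\tfrac{1}{p-1}$, you start at $k=2$ with $\tfrac{\varphi(p-1)}{p}$), which changes nothing of substance.
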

\begin{proof}
	Let
	$$f_p(n)=\frac{\prod_{k=2}^n(p^k-1)}{p^{\frac{1}{2}n(n+1)}}=\frac{1}{p-1}\prod_{k=1}^{n}\frac{p^k-1}{p^k}.$$
	Since the infinite series
	$$\sum_{k=1}^\infty\left(1-\frac{p^k-1}{p^k}\right)=\sum_{k=1}^\infty\frac{1}{p^k}=\frac{1}{p-1}$$
	converges absolutely, by Proposition \ref{infinite_products_2}, $\lim_{n\rightarrow\infty} f_p(n)$ exists and is non-zero. Let $c^\prime=\lim_{n\rightarrow\infty} f_p(n)>0$.
	
	Note, furthermore, that $f_p(n+1)<f_p(n)$, so that $f_p(n)>c^\prime$ for all $n$. By Proposition \ref{primitive_root_determinants}, at least a $c^\prime\varphi(p-1)>0$ fraction of the matrices in $M_n(\mathbb{Z}_p)$ have determinants which are primitive roots of unity modulo $p$. $\qed$
\end{proof}

\begin{table}
	\centering
	\setlength{\tabcolsep}{12pt}
	\begin{tabular}{c|cccccccc}
		$p$ & 2 & 3 & 5 & 7 & 11 & 13 & 17 & 19 \\
		\hline
		$\varphi(p-1)$ & 1 & 1 & 2 & 2 & 4 & 4 & 8 & 6 \\
		$c(p)$ & 0.289 & 0.280 & 0.380 & 0.279 & 0.360 & 0.306 & 0.469 & 0.315
	\end{tabular}
	\vspace{5mm}
	\caption{Approximate values of $c(p)$}
	\label{primitive_root_determinants_table}
\end{table}

\begin{remark}
	\begin{enumerate}[label=(\roman*)]
		\item While the exact values of $c(p)=\lim_{n\rightarrow\infty}f_p(n)\varphi(p-1)$ appear difficult to determine, we have calculated some approximate values of $c(p)$ as shown in Table \ref{primitive_root_determinants_table}.
		\item It might seem from Table \ref{primitive_root_determinants_table} that $c(p)$ does not vary much with $p$. Nevertheless, it might in fact be the case that $\inf_{p\text{ prime}}c(p)=0$. A primorial prime is a prime of the form $p=\prod_{i=1}^k p_i +1$, where $p_1<p_2<\cdots<p_k$ are the first $k$ primes. For such a prime $p=\prod_{i=1}^k p_i +1$,
		$$c(p)=\frac{\varphi(p-1)}{p-1}\prod_{j=1}^{\infty}\frac{p^j-1}{p^j}<\frac{\varphi(p-1)}{p-1}=\prod_{j=1}^{k}\frac{p_j-1}{p_j}.$$
		It is an open problem whether or not there are infinitely many primorial primes, but heuristic arguments suggest that this should be the case. Suppose there actually are infinitely many such primes. Then, since
		$$\sum_{j=1}^{\infty}\left(1-\frac{p_j-1}{p_j}\right)=\sum_{j=1}^{\infty}\frac{1}{p_j}$$
		diverges \cite{Eynden80}, $c(p)$ diverges to $0$ as $k\rightarrow\infty$, which shows that $\inf_{p}c(p)=0$.
	\end{enumerate}
\end{remark}

We are now ready to define \textsf{PRIM-LWE}. First, we define:
$$M_n^{prim}(\mathbb{Z}_p)=\{\textbf{M}\in M_n(\mathbb{Z}_p): \det(\textbf{M})\text{ is a generator of }\mathbb{Z}_p^\ast\}.$$
Recall that for a vector $\textbf{s}\in \mathbb{Z}_p^n$ and a noise distribution $\chi$ over $\mathbb{Z}_p$, \textsf{LWE} distribution $D^{\textsf{LWE}}_{n,p,\textbf{s},\chi}$ is defined as the distribution over $\mathbb{Z}_p^n\times\mathbb{Z}_p$ that is obtained by choosing $\textbf{a} \xleftarrow{\; \$ \;} \mathbb{Z}_p^n, e\xleftarrow{\; \$ \;} \chi$, and outputting $(\textbf{a},\,\langle \textbf{a},\textbf{s}\rangle+e)$.

\begin{definition}[\textsf{PRIM-LWE}]
	\emph{Let $n\geq 1$ and $p\geq 2$. Then, for a matrix $S\in M_n^{prim}(\mathbb{Z}_p)$ and a noise distribution $\chi$ over $M_n(\mathbb{Z}_p)$, define the \textsf{PRIM-LWE} distribution $D^{\textsf{PRIM-LWE}}_{n,p,\textbf{S},\chi}$ to be the distribution over $M_n(\mathbb{Z}_p)\times M_n(\mathbb{Z}_p)$ obtained by choosing a matrix $\textbf{A} \xleftarrow{\; \$ \;} M_n(\mathbb{Z}_p)$ uniformly at random, $\textbf{E} \xleftarrow{\; \$ \;} \chi$, and outputting $(\textbf{A},\,\textbf{AS}+\textbf{E})$.}
\end{definition}

For distributions $\psi$ over $\mathbb{Z}_p^n$ and $\chi$ over $\mathbb{Z}_p$, the decision-\textsf{LWE}$_{n,p,\psi,\chi}$ problem is to distinguish between $(a,\,b)\leftarrow D^{\textsf{LWE}}_{n,p,\textbf{s},\chi}$ and a sample drawn uniformly from $\mathbb{Z}_p^n\times\mathbb{Z}_p$, where $\textbf{s}\leftarrow\psi$. Similarly, for distributions $\psi$ over $M_n^{prim}(\mathbb{Z}_p)$ and $\chi$ over $M_n(\mathbb{Z}_p)$, the decision-\textsf{PRIM-LWE}$_{n,p,\psi,\chi}$ problem is to distinguish between $(\textbf{A},\,\textbf{B})\leftarrow D^{\textsf{PRIM-LWE}}_{n,p,\textbf{S},\chi}$ and a sample drawn uniformly from $M_n(\mathbb{Z}_p)\times M_n(\mathbb{Z}_p)$, where $\textbf{S}\leftarrow\psi$.

\begin{theorem}
	Let $\psi$ and $\psi^\prime$ be the uniform distributions over $M_n^{prim}(\mathbb{Z}_p)$ and $\mathbb{Z}_p^n$ respectively. Suppose $\chi$ is the distribution over $M_n(\mathbb{Z}_p)$ obtained by selecting each entry of the matrix independently from the distribution $\chi^\prime$ over $\mathbb{Z}_p$. Then, solving decision-\textsf{PRIM-LWE}$_{n,p,\psi,\chi}$ is at least as hard as decision-\textsf{LWE}$_{n,p,\psi^\prime,\chi^\prime}$, up to an $O(n^2)$ factor.
\end{theorem}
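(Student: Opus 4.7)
The overall plan is a two-step reduction, each step losing $O(n)$ in the distinguishing advantage so that composition yields the claimed $O(n^2)$ loss: first from single-secret decision-LWE$_{n,p,\psi',\chi'}$ to an $n$-secret (matrix) version with secret uniform over $M_n(\mathbb{Z}_p)$, and then from that matrix LWE variant to decision-PRIM-LWE$_{n,p,\psi,\chi}$.

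For the first step, I would apply the standard single-to-multi-secret LWE hybrid. Define $n+1$ distributions $H_0, H_1, \ldots, H_n$ on $M_n(\mathbb{Z}_p) \times M_n(\mathbb{Z}_p)$ where $H_0$ is uniform, $H_n$ is the multi-secret LWE distribution (with secret uniform over $M_n(\mathbb{Z}_p)$ and column-wise error from $\chi$), and $H_i$ samples $\textbf{A} \leftarrow M_n(\mathbb{Z}_p)$ uniformly and outputs $(\textbf{A}, \textbf{B})$ where the first $i$ columns of $\textbf{B}$ are $\textbf{A}\textbf{s}_j + \textbf{e}_j$ for independent $\textbf{s}_j \leftarrow \psi'$ and $\textbf{e}_j \leftarrow (\chi')^n$, and the remaining $n-i$ columns are drawn uniformly. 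Distinguishing $H_{i-1}$ from $H_i$ is exactly distinguishing one fresh single-secret LWE column from a uniform column (since the coordinate-wise structure of $\chi$ matches $\chi'$), so by the triangle inequality and averaging over the choice of $i$, any distinguisher between $H_0$ and $H_n$ with advantage $\delta$ yields a decision-LWE$_{n,p,\psi',\chi'}$ distinguisher of advantage $\delta/n$.

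For the second step, I would reduce the multi-secret LWE variant to decision-PRIM-LWE using Proposition~\ref{primitive_root_determinants_lower_bound}, which guarantees $|M_n^{prim}(\mathbb{Z}_p)|/|M_n(\mathbb{Z}_p)| \geq c(p) > 0$ with $c(p)$ independent of $n$. The plan is to use a column-wise hybrid that progressively transitions the secret's distribution from uniform over $M_n(\mathbb{Z}_p)$ to uniform over $M_n^{prim}(\mathbb{Z}_p)$, exploiting the fact that for $\textbf{S}$ uniform on $M_n^{prim}(\mathbb{Z}_p)$ the $GL_n(\mathbb{Z}_p)$-symmetry forces each single-column marginal to be within $O(1/p^n)$ total variation of uniform on $\mathbb{Z}_p^n$, so that each hybrid step is computationally close to the previous one under the multi-LWE hardness carried forward from step one. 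Combining with the triangle inequality introduces the second $O(n)$ factor.

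The main obstacle is exactly this second step. The naive attempt of feeding multi-LWE samples directly to the PRIM-LWE distinguisher fails: although the mixture decomposition $U(M_n(\mathbb{Z}_p)) = c \cdot U(M_n^{prim}(\mathbb{Z}_p)) + (1-c) \cdot U(M_n(\mathbb{Z}_p) \setminus M_n^{prim}(\mathbb{Z}_p))$ suggests a constant-factor loss, a PRIM-LWE adversary may behave adversarially on the non-PRIM complement in a way that exactly cancels its advantage on $M_n^{prim}$-secrets, yielding zero net advantage against the multi-LWE distribution. Surmounting this requires either (i) a column-wise hybrid exploiting the near-uniform marginals described above, or (ii) a rerandomization/rejection-sampling argument that uses the constant density of $M_n^{prim}$ guaranteed by Proposition~\ref{primitive_root_determinants_lower_bound} to ensure a noticeable advantage survives the transition. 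Proposition~\ref{primitive_root_determinants_lower_bound} is what prevents the loss in this step from being exponential in $n$, and together with the first hybrid it yields the claimed $O(n^2)$ overall loss.
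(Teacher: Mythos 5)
Your step~1 (the column-by-column hybrid from single-secret to matrix-secret LWE) is sound and coincides with one of the two factor-$n$ losses in the paper. The genuine gap is in your step~2, which is where the actual content of the theorem lies: passing from a secret uniform over $M_n(\mathbb{Z}_p)$ to a secret uniform over $M_n^{prim}(\mathbb{Z}_p)$. You correctly identify the cancellation obstacle, but neither of your proposed remedies works as described. The column-wise hybrid cannot be carried out: the condition ``$\det(\textbf{S})$ generates $\mathbb{Z}_p^\ast$'' is a global constraint coupling all columns of $\textbf{S}$, so the intermediate hybrids (some columns ``PRIM-marginal'', the rest uniform and independent) are not distributions for which you have any hardness statement --- LWE hardness is only available to you for independent (near-)uniform column secrets, so invoking ``multi-LWE hardness carried forward from step one'' at each hybrid step is circular. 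Nor does the $O(1/p^n)$-closeness of \emph{single-column} marginals help: the joint distribution of a uniform $M_n^{prim}$ secret is at \emph{constant} statistical distance from uniform over $M_n(\mathbb{Z}_p)$ (roughly $1-c(p)$), and even the conditional distribution of one column given the others is constant-far from uniform (given the other columns, the determinant is an affine function of the remaining column, so that column is confined to a set of density about $\varphi(p-1)/p$). Your alternative (ii), rejection sampling, is also not implementable by a reduction, since the reduction cannot test whether the unknown challenge secret lies in $M_n^{prim}(\mathbb{Z}_p)$. Finally, your accounting of the loss differs from what you would need: you charge a factor $n$ to this unproven step, and you silently take the base problem to already supply a full matrix sample $(\textbf{A},\textbf{A}\textbf{s}+\textbf{e})$ with $\textbf{A}\in M_n(\mathbb{Z}_p)$, whereas decision-LWE$_{n,p,\psi^\prime,\chi^\prime}$ as defined just before the theorem is single-sample, $(\textbf{a},\langle\textbf{a},\textbf{s}\rangle+e)$.

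The paper's route is different at exactly this point and costs only a constant factor for the PRIM restriction: both factors of $n$ come from hybrids (one over the $n$ rows/samples, one over the $n$ independent column secrets --- the latter is your step~1), and the uniform-to-PRIM transition is handled via Proposition~\ref{primitive_root_determinants_lower_bound}. Since $c=\inf_n \vert M_n^{prim}(\mathbb{Z}_p)\vert/\vert M_n(\mathbb{Z}_p)\vert>0$, among $m=\lceil 1/c\rceil$ independent uniform-secret instances at least one secret lies in $M_n^{prim}(\mathbb{Z}_p)$ with probability at least $1-e^{-1}$, and conditioned on that event the secret is exactly uniform on $M_n^{prim}(\mathbb{Z}_p)$; the paper concludes that the advantage against \textsf{PRIM-LWE} is at most $\frac{m}{1-e^{-1}}$ times the advantage against uniform-secret matrix LWE, giving $O(n^2)\varepsilon$ overall. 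To be fair, the subtlety you flag (an adversary whose behaviour on non-PRIM secrets cancels its advantage on PRIM secrets) is genuinely glossed over by the paper's terse repetition argument as well, so your diagnosis of the difficulty is valuable; but your proposal does not resolve it, and so does not yet constitute a proof of the stated theorem.
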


\begin{proof}
	Let $\varepsilon$ be the advantage of an adversary in solving decision-\textsf{LWE}$_{n,p,\psi^\prime,\chi^\prime}$. By a standard hybrid argument, the advantage of distinguishing $(\textbf{A}, \textbf{AS}+\textbf{E})$ from a sample uniformly drawn from $M_n(\mathbb{Z}_p)\times\mathbb{Z}_p^n$ is at most $n\varepsilon$.
	
	A sample $(\textbf{A},\,\textbf{AS}+\textbf{E})$ where $\textbf{A}, \textbf{S} \xleftarrow{\; \$ \;} M_n(\mathbb{Z}_p)$ is the same as $n$ samples $(\textbf{A},\,\textbf{A}\textbf{s}_i+\textbf{E}_i)$, with $n$ different secrets $\textbf{s}_i~(i \in [n])$. Hence, the advantage of an adversary in distinguishing $(\textbf{A},\,\textbf{AS}+\textbf{E})$ from uniformly random is at most $n^2\varepsilon$.
	
	By Proposition \ref{primitive_root_determinants_lower_bound}: $c=\inf_{n\geq 1}\vert M_n^{prim}(\mathbb{Z}_p)\vert/\vert M_n(\mathbb{Z}_p)\vert>0$. Given $m=\lceil 1/c\rceil$ samples $(\textbf{A}_i,\,\textbf{A}_i\textbf{S}_i+\textbf{E}_i)$, where $\textbf{S}_i\xleftarrow{\; \$ \;} M_n(\mathbb{Z}_p)$,
	$$\Pr[\textbf{S}_i \in M_n^{prim}(\mathbb{Z}_p)\text{ for some }i]\geq 1-(1-c)^m\geq 1-e^{-cm}\geq 1-e^{-1}.$$
	Therefore, if $\textbf{S} \xleftarrow{\; \$ \;} M_n^{prim}(\mathbb{Z}_p)$, then the advantage of an adversary in distinguishing $(\textbf{A},\,\textbf{AS}+\textbf{E})$ from uniformly random $(\textbf{A}, \textbf{B})$ is at most:
	$$\frac{mn^2}{1-e^{-1}}\varepsilon=O(n^2)\varepsilon. \eqno \qed$$
\end{proof}

\section{Graph-Based Access Structure Hiding Verifiable Secret Sharing}\label{sec6}
In this section, we present the first graph-based access structure hiding verifiable secret sharing scheme, which is also the first LWE-based secret sharing scheme for general access structures. 

\begin{note}
The (loose) description that follows gives a high-level overview of the actual scheme, and its only purpose is to facilitate better understanding of the full scheme. For the sake of simplicity, we assume that the access structure tokens are generated as inner products and unions (as described in Note~\ref{NoteImp}).
\end{note}

Unlike the regular definition of STCON in the context of secret sharing, wherein parties are represented by edges in the graph, we denote parties by nodes in the graph.

\subsection{\textbf{High-Level Overview}}\label{high-level}
Based on a minimal authorized set $\mathrm{\Omega} \in \Gamma_0$, the dealer generates a connected DAG, $G = (V,E)$, where $|V| = \ell$, such that $G$ contains a source node $s$ and a sink node $t$. Each vertex/node in $G$ houses exactly one party with the STCON housing the parties in $\mathrm{\Omega}$, i.e., the number of nodes in the STCON is $|\mathrm{\Omega}|$. Figure~\ref{Fig3} gives an example graph wherein node $v_2$ denotes $s$ and node $v_6$ represents $t$, with $\mathrm{\Omega} = \{P_1, P_2, P_3, P_4\}$.

\begin{figure}
	\centering
	\begin{tikzpicture}
	
	\begin{scope}[every node/.style={circle,thick,draw}]
	\node[label={left:$t, P_6$}] (F) at (4,2.5) {$v_6$};
	\node[label={above:$s,P_1$}] (A) at (6,4) {$v_2$};
	\node[label={below:\small{$P_3$}}] (C) at (8,0.5) {$v_4$};
	\node[label={above:\small{$P_2$}}] (B) at (8.7,3) {$v_3$} ;
	\node[label={below:\small{$P_4$}}] (E) at (5,0.5) {$v_5$} ;
	\node[label={above:\small{$P_5$}}] (G) at (2,4) {$v_1$};
	\node[label={below:\small{$P_7$}}] (H) at (.8,1.2) {$v_7$};
	\end{scope}
	
	\begin{scope}[>={Stealth[black]},
	every node/.style={fill=white,circle}]
	\path [->,very thick] (A) edge node {$\textbf{D}_2$} (B);
	\path [->,very thick] (B) edge node {$\textbf{D}_3$} (C);
	\path [->,very thick, dashed, bend left] (A) edge node {$\textbf{D}$} (F);
	\path [->,very thick] (C) edge node {$\textbf{D}_4$} (E);
	\path [->,very thick] (E) edge node {$\textbf{D}_5$} (F);
	\end{scope}
	
	\begin{scope}[>={Stealth[black]},
	every node/.style={fill=white,circle}]
	\path [->,very thick] (G) edge node {$\textbf{D}_1$} (A);
	\path [->,very thick] (F) edge node {$\textbf{D}_6$} (H);
	\end{scope}
	
	\end{tikzpicture}
	\caption{Example DAG with STCON representing $\mathrm{\Omega} = \{P_1, P_2, P_3, P_4\}$.}\label{Fig3}
\end{figure}
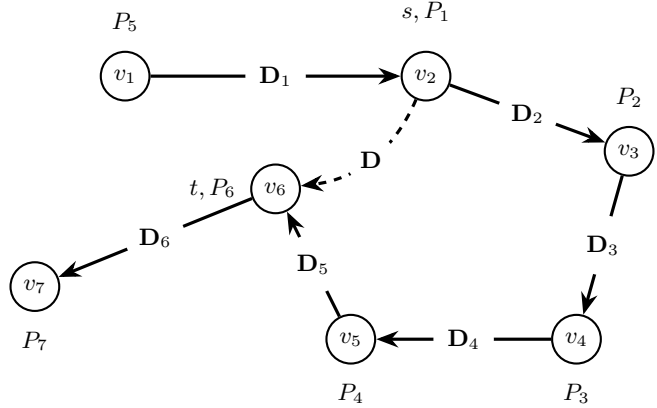

A unique matrix $\textbf{A}_v$ along with the corresponding `trapdoor information' $\tau_i$ is associated with each node $v \in V$ (i.e., party $P_v \in \mathcal{P})$, and ``encodings'' in the scheme are defined relative to the directed paths in $G$. Let $k$ be the secret to be shares. A ``small'' matrix $\textbf{S}$, such that $\det(\mathbf{S})=k$, is encoded with respect to a path $u \rightsquigarrow v$ via another ``small'' matrix $\textbf{D}_u$ such that $\textbf{D}_u \cdot \textbf{A}_u \approx \textbf{A}_v \cdot \textbf{S}_u$, where $\textbf{S}_u = \textbf{S}^{\mathrm{\mho}_u^{(\Gamma)}}$. Access structure token for party $P_u$ with respect to access structure $\Gamma$ is denoted by $\mathrm{\mho}^{(\Gamma)}_u$, and generated by following the procedures given in Section~\ref{sec5} and Note~\ref{NoteImp}. For the sake of simplicity, we assume that the access structure tokens are generated as inner products and unions (as described in Note~\ref{NoteImp}). For one randomly selected party $P_j \in \mathrm{\Omega}$, the share is generated as: $\textbf{S}_j = \textbf{S}^{\mathrm{\mho}_j^{(\Gamma)}+1}$. Given `trapdoor information' $\tau_u$ for $\textbf{A}_u$, encoding $\textbf{D}_u$ for share $\textbf{S}_u$ with respect to sink $v$ is generated such that: $$\textbf{D}_u \cdot \textbf{A}_u = \textbf{A}_v \cdot \textbf{S}_u + \textbf{E}_u,$$ where $\textbf{E}_u$ is a small LWE error matrix. It is easy to see that the LWE instance $\{\textbf{A}_u, \textbf{B}_u (= \textbf{A}_v \cdot \textbf{S}_u + \textbf{E}_u)\}$ remains hard for appropriate parameters and dimensions. Encodings relative to paths $v \rightsquigarrow w$ and $u \rightsquigarrow v$ can be multiplied to get an encoding relative to path $u \rightsquigarrow w$. Namely, given: 
$$\textbf{D}_v \cdot \textbf{A}_v = \textbf{A}_w \cdot \textbf{S}_v + \textbf{E}_v \quad  \text{ and } \quad \textbf{D}_u \cdot \textbf{A}_u = \textbf{A}_v \cdot \textbf{S}_u + \textbf{E}_u,$$ 
we obtain: 
\[\textbf{D}_v \cdot \textbf{D}_u \cdot \textbf{A}_u = \textbf{D}_v \cdot (\textbf{A}_v \cdot \textbf{S}_u+ \textbf{E}_u) = \textbf{A}_w \cdot \textbf{S}_v \cdot \textbf{S}_u+ \textbf{E}^\prime,\] such that the matrices, $\textbf{D}_v \cdot \textbf{D}_u$, $\textbf{S}_v \cdot \textbf{S}_u$ and $\textbf{E}^\prime$ remain small. Note that our procedures for generating and multiplying the encodings are similar to that of Gentry et al.~\cite{Gentry[15]}, but apart from that two schemes are completely different; their aim was to develop a multilinear map scheme and their tools do not include extremal set theory.

The final encoding for any set of parties is generated by combining their respective encodings according to the source-sink order of the nodes that house them. For example, in \Cref{Fig3}, the final encoding for the path from $s$ to $t$ is given by $\textbf{D} =\textbf{D}_2\textbf{D}_3\textbf{D}_4\textbf{D}_5$. Specifically, if $\textbf{A}_1$ is the matrix of party $P_1$, housed by $s$, and $\textbf{A}_6$ is the matrix assigned to party $P_6$, housed by $t$, then we can compute: 
\begin{align*}
	\textbf{D} \cdot \textbf{A}_1 \bmod N &= \textbf{A}_6 \cdot \textbf{S}^{1+\sum_{i\in \mathrm{\Omega}} \mathrm{\mho}_i^{(\Gamma)}} + \textbf{E}^\prime \bmod N\\ &= \textbf{A}_6 \cdot \textbf{S}^{1+cm} + \textbf{E}' \bmod N \qquad \text{ (using Lemma~\ref{lemma1})} \\ &= \textbf{A}_6 \cdot \textbf{S}^{1+c \varphi(N)} +\textbf{E}' \bmod N,
\end{align*}
where $N$ is some composite integer, $\varphi$ denotes Euler's totient function (see Theorem~\ref{Euler}) and $m = \varphi(N)$ is the modulo with respect to which the set-system (as described in Theorem~\ref{mainThm}) is defined. The scheme ensures that any authorized subset of parties $\mathcal{A} \supseteq \mathrm{\Omega}$ possesses the `trapdoor information' required to invert $\textbf{A}_6 \cdot \textbf{S}^{1+c \varphi(N)} + \textbf{E}' \bmod N$, and recover the secret $k=\det(\textbf{S})$. 

\subsection{\textbf{Detailed Scheme: Share Generation}} 
Let $m=\prod_{i=1}^r p_i~(\min(p_i) = 3)$ be a positive integer with $r>1$ odd prime divisors such that $2m+1$ is prime. Recall from Dirichlet's Theorem (see Theorem~\ref{Dr}) that there are infinitely many odd integers $m$ such that $2m+1$ is prime. As described by Theorem~\ref{mainThm}, define a set-system $\mathcal{H}$ modulo $m$ over a universe of $h$ elements such that for all $H_1, H_2 \in \mathcal{H}$, it holds that exactly one of the following three conditions is true
\begin{itemize}
	\item $|H_1| = |H_2| = \eta m$, where $\eta$ is some even integer,
	\item $|H_1| = l|H_2|$,
	\item $|H_2| = l|H_1|$,
\end{itemize}
where $l = 2$.

Let $m'=mp_{r'}$ be a positive integer, where $p_{r'}$ is an odd prime such that for all $i \in [r]$, it holds that: $p_{r'} \neq p_i$. According to Theorem~\ref{mainThm}, define a set-system $\mathcal{H}'$ modulo $m'$ over a universe of $h$ elements. Since $m$ is a factor of $m'$, the following holds for all $H \in \mathcal{H}'$:
\[|H| = 0 \bmod m' = 0 \bmod m. \]
Note that for appropriate choice of the underlying set-system $\mathcal{G}$ (see Proposition~\ref{main_construction}), it holds that $|\mathcal{H} \cap \mathcal{H}'| > 0$. Hence, we pick a set $H \in \mathcal{H} \cap \mathcal{H}'$ to generate access structure tokens. We know that the following holds for some $H \in \mathcal{H} \cap \mathcal{H}'$:
\begin{itemize}
	\item $H$ is a proper subset of exactly $s^{l-1}$ sets and not a proper superset of any sets in $\mathcal{H}'$,
	\item $H$ is a proper superset of exactly $l$ sets and not a proper subset of any sets in $\mathcal{H}$,
\end{itemize}
where $s \geq \exp \left( c \dfrac{(\log h)^r}{(\log \log h)^{r-1}} \right)$.

\begin{note}[Encoding $l+1$ monotone access structures via two moduli]\label{note} Let us examine the benefits of using two moduli and two set-systems. We know from Section~\ref{work} that access structure tokens operate over a fixed set $H$ and its $s^{l-1}$ proper supersets. Also recall that $H$ does not exactly represent the minimal authorized subset $\mathrm{\Omega}$, instead it is a randomly sampled set, picked to enforce the desired access structure $\Gamma = $ cl$(\mathrm{\Omega})$. Having a set $H$ with $s^{l-1}$ proper supersets in $\mathcal{H}'$ and $l$ proper subsets in $\mathcal{H}$ enables us to use \emph{carefully} generated access structure tokens to capture the $l$ minimal authorized subsets that are represented by subsets of $H$ in $\mathcal{H}$. Let $\widetilde{\mathcal{H}} \in \mathcal{H}$ denote the collection of $l$ proper subsets of $H$, and $\widehat{\mathcal{H}} \in \mathcal{H}'$ denote the $s^{l-1}$ proper supersets of $H$. Update these as: $\widetilde{\mathcal{H}} = \widetilde{\mathcal{H}} \cup H$ and $\widehat{\mathcal{H}} = \widehat{\mathcal{H}} \cup H$ to denote the collections of $l+1$ subsets and $s^{l-1}+1$ supersets of $H$, respectively. Further, let $\wp = \{\Gamma_1, \ldots, \Gamma_{l+1}\}$ denote the family of monotone access structures that originate from the family of minimal authorized subsets $\{\mathrm{\Omega}_i\}_{i=1}^{l+1}$, where $\Gamma_i = $ cl$(\mathrm{\Omega}_i)$ for $i \in [l+1]$. Let $\{\mathrm{\mho}^{(\wp)}_i\}_{i=1}^{\ell}$ denote the access structure tokens that capture the $l+1$ access structures in $\wp$. Then, for an access structure token combining function $f$, it follows that the following holds for all subsets of parties $\mathcal{A} \in \wp$:
\[f(\{\mathrm{\mho}^{(\wp)}\}_{i \in \mathcal{A}}) = |H \cap \widetilde{H}| = 0 \bmod m \quad \text{OR} \quad f(\{\mathrm{\mho}^{(\wp)}\}_{i \in \mathcal{A}}) = |H \cap \widehat{H}| = 0 \bmod m',\]
where $\widehat{H} \in \widehat{\mathcal{H}}$ and $\widetilde{H} \in \widetilde{\mathcal{H}}$. Since $\bmod ~m$ and $\bmod ~m'$ correspond to the set-systems $\mathcal{H}$ and $\mathcal{H}'$, respectively, we need two moduli to realize this functionality. Note that the access structure token generation procedure discussed in Section~\ref{work} is not suitable to achieve this goal but any procedure that operates over the sets in a manner that guarantees that the outputs of unions remain inside the collection of some fixed set-systems can be used to harness the power of two moduli to achieve significant improvements over the current known upper bound of $2^{.637\ell + o(\ell)}$ on the share size for secret sharing for general (monotone) access structures. 
\end{note}

We are now ready to present our detailed access structure hiding verifiable secret sharing scheme that works with two moduli. Since we do not have an access structure encoding procedure that satisfies the properties outlined in Note~\ref{note}, we use our access structure encoding procedures from Section~\ref{sec5}. As explained in \Cref{high-level}, we arrange the $\ell$ parties as nodes in a DAG $G$. Without loss of generality, we assume that the parties lie on a single directed path, as shown in Figure~\ref{Fig4}. Each party $P_i \in \mathcal{P}$ operates in: 
\begin{itemize}
	\item $\mathbb{Z}_q$ if $i = 1 \bmod 2$, 
	\item $\mathbb{Z}_{q^\prime}$ if $i = 0 \bmod 2.$
\end{itemize}

\begin{figure}[h!]
	\centering
	\begin{tikzpicture}[>=stealth,->,auto,node distance = 2.5cm,thick,shorten >=1pt]
	\tikzstyle{every state}=[thick, minimum size = 11mm]
	\node[state] (P1) [label=$P_1$] {$v_1$};
	\node[state] (P2) [right=1.4 of P1,label = {{\large \textbf{s}}, $P_i$}] {$v_i$};
	\node[state] (P3) [right=1.4 of P2, label={$P_{i+ \lceil |\mathrm{\Omega}/2| \rceil}$}] {$v_{i+ \lceil |\mathrm{\Omega}/2| \rceil}$};
	\node[state] (P4) [right=1.4 of P3,label = {{\large \textbf{t}}, $P_{i+ |\mathrm{\Omega}|}$}] {$v_{i+ |\mathrm{\Omega}|}$};
	\node[state] (Pl) [right=1.4 of P4, label={$P_\ell$}] {$v_\ell$};
	
	\path[every node/.style={fill=white,inner sep=1pt}]
	(P2)edge [bend right=40] node[left=0mm] {$\textbf{D}$} (P4);
	\path[dotted,->] (P1) edge node {} (P2);
	\path[dotted,->] (P2) edge node {} (P3);
	\path[dotted,->] (P3) edge node {} (P4);
	\path[dotted,->] (P4) edge node {} (Pl);
	\end{tikzpicture}
	\caption{Parties $\mathcal{P} = \{P_1, \ldots, P_\ell\}$ arranged as a simple DAG: a generalization of Figure~\ref{Fig3}.}\label{Fig4}
\end{figure}
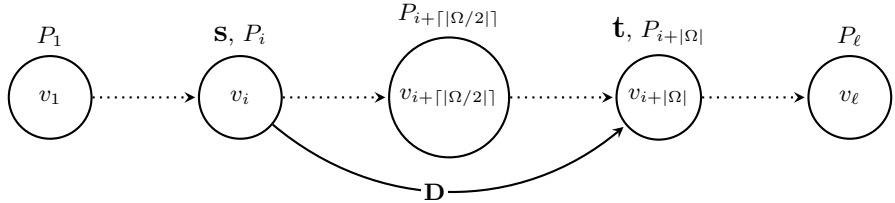

For a prime $p = 2m+1$, let $q=pc$ and $q^\prime=pc^\prime$, such that $p\nmid c, c^\prime$ and $|q'| = |q| + \epsilon(|q|)$, where $\epsilon$ is a negligible function (see Definition~\ref{Neg}). We ensure that $q,q^\prime=(d\lambda)^{\Theta(d)}$ such that the following holds:
$$q<q^\prime,\quad p\leq\sqrt{\log q}\quad\text{and}\quad\frac{2p-1}{2p}<\frac{q}{q^\prime}<\frac{2p}{2p+1}.$$ 
The other parameters are chosen as: $n=\Theta(d\lambda\log(d\lambda))$ and $w=\Theta(n\log q)=\Theta(d^2\lambda\log^2(d\lambda))$. 

The secret $k (\neq 0) \in \mathbb{Z}_p$ is a primitive root modulo $p$, and gets encoded using a $n \times n$ matrix $\mathbf{S}$ such that $||\mathbf{S}||<p$ and $\det(\mathbf{S})=k$. Given a minimal authorized subset $\mathrm{\Omega} \in \Gamma_0$, the dealer uses $H \in \mathcal{H} \cap \mathcal{H}'$ to generate $\ell$ access structure tokens $\{\mathrm{\mho}^{(\Gamma)}_i\}_{i=1}^\ell \in \mathbb{Z}_m \cup \mathbb{Z}_{m'} \setminus \{0\}$ (as defined by Note~\ref{NoteImp} in Section~\ref{sec5}) that capture the access structure $\Gamma = $ cl$(\mathrm{\Omega})$. For $s=\sqrt{n}, \sigma=\Theta(\sqrt{n\log q})=\Theta(\sqrt{n\log q^\prime})$, and security parameter $\lambda$, the dealer generates following for each party $P_i \in \mathcal{P}$: 
\begin{itemize}
	\item sample a $w \times n$ matrix $\mathbf{A}_i$ such that $||\mathbf{A}_i||<p$, 
	\item compute the `trapdoor information' $\tau_i$ for $\textbf{A}_i$ using the lattice-trapdoor generation algorithm from~\cite{Micci[12]} and a fixed generator matrix $G$,  
	\item sample a $w \times n$ matrix $\mathbf{E}_i$ from the discrete Gaussian distribution $\chi=D_{\mathbb{Z},s}$ subject to the restriction that $||\mathbf{E}_i||<s\sqrt{\lambda}$,
	\item except for a randomly picked party $P_j$, compute the share for party $P_i$ as: $\textbf{S}_i = \textbf{S}^{\mathrm{\mho}^{(\Gamma)}_i} \bmod p$. The share for party $P_j$ is generated as: $\textbf{S}_j = \textbf{S}^{\mathrm{\mho}^{(\Gamma)}_j+ 1} \bmod p,$
	\item use $\tau_i$ to compute a $w \times w$ encoding $\textbf{D}_i$ of $P_i$'s share $\textbf{S}_i$ such that the following relations hold (source-sink; see Figure~\ref{Fig4}):
		\begin{align*}
		\mathbf{D}_1\mathbf{A}_1&=\mathbf{A}_2\mathbf{S}_1+\mathbf{E}_1 \\
		\mathbf{D}_2\mathbf{A}_2&=\mathbf{A}_3\mathbf{S}_2+\mathbf{E}_2 \\
		&\ \vdots \\
		\mathbf{D}_{\ell-1}\mathbf{A}_{\ell-1}&=\mathbf{A}_\ell\mathbf{S}_{\ell-1}+\mathbf{E}_{\ell-1},
		\end{align*}
	where $||\mathbf{D}_i||<\sigma\sqrt{\lambda}$.
\end{itemize}

Since the entries of $\mathbf{A}_i$ are bounded by $p$, the following follows from our selection of $q$ and $q'$: 
$$\left\lfloor\frac{q^\prime}{q}\mathbf{A}_i\right\rceil=\mathbf{A}_i\text{ for odd }i,\quad\text{and}\quad\left\lfloor\frac{q}{q^\prime}\mathbf{A}_i\right\rceil=\mathbf{A}_i\text{ for even }i.$$
This means that we may naturally interpret the entries of $\mathbf{A}_i$'s as being in both $\mathbb{Z}_q$ and $\mathbb{Z}_{q^\prime}$. \\
\textit{Notations:} The following notations are used frequently throughout the rest of this section.
\begin{itemize}
	\item Without loss of generality, let $v_i$ be the node housing the party $P_i$ for all $i \in [r]$. 
	\item We use $\vv{\prod}$ to denote a product that is computed in the order that is defined by the relative positions of the nodes present in the given directed path, from source to sink. We call such products \emph{in-order}. For instance, the following denotes the in-order product of the `trapdoor information' of all parties in the DAG depicted in Figure~\ref{Fig3}:
	\[\vv{\prod_{i \in \mathcal{P}}} \tau_i = \tau_5 \tau_1 \tau_2 \tau_3 \tau_4 \tau_6 \tau_7.\]
	\item Similarly, if the multiplications are performed in the opposite order to what is defined by the given directed path; i.e., the multiplications are performed from sink to source, then we call it \emph{reverse-order} product and denote it as $\protect\cv{\prod}$. For example, the reverse-order product of the `trapdoor information' of all parties in the DAG depicted in Figure~\ref{Fig3} is: $\protect\cv{\prod\limits_{i \in \mathcal{P}}} \tau_i = \tau_7 \tau_6 \tau_4 \tau_3 \tau_2 \tau_1 \tau_5.$
	\item For a subset of parties $\mathcal{A} \subseteq \mathcal{P}$, which forms a directed path $\mathfrak{P}$ in the DAG $G$, let $P^{(\mathcal{A})}_\triangleright$   denote the party that is housed by the node that is at the beginning of $\mathfrak{P}$. Similarly, let $P^{(\mathcal{A})}_\triangleleft$ denote the party that is housed by the node that is at the end of $\mathfrak{P}$.
\end{itemize}
Let $\tau^{(\mathcal{A})}_\triangleleft$ denote the `trapdoor information' corresponding to the matrix $\textbf{A}^{(\mathcal{A})}_\triangleleft$ of party $P^{(\mathcal{A})}_\triangleleft$. Each party $P_i \in \mathcal{P}$ receives its share as: $\{\mathrm{\mho}^{(\Gamma)}_i, \mathrm{\Psi}_i^{(k)}\}$, where $\mathrm{\Psi}_i^{(k)} = \{\textbf{A}_i, \tilde{\tau}_i, \textbf{D}_i\}$ and $\tilde{\tau}_i$ is randomly sampled such that it holds for all subsets of parties $\mathcal{A} \subseteq \mathcal{P}$ that: $\tau^{(\mathcal{A})}_\triangleleft= \vv{\prod\limits_{i \in \mathcal{A}}} \tilde{\tau}_i$ (in $\mathbb{Z}_q$ or $\mathbb{Z}_{q'}$, depending on the value of $i \bmod 2$) if and only if $\mathcal{A} \supseteq \mathrm{\Omega}$, i.e., $\mathcal{A} \in \Gamma$. 

\subsection{\textbf{Secret Reconstruction and Correctness}} 
In order to reconstruct the secret, any subset of parties $\mathcal{A} \subseteq \mathcal{P}$ first combine their access structure tokens $\{\mathrm{\mho}^{(\Gamma)}_i\}_{i \in \mathcal{A}}$ and verify that: 
\[\sum\limits_{i \in \mathcal{A}} \mathrm{\mho}^{(\Gamma)}_i = 0 \bmod m  \quad \text{OR} \quad \sum\limits_{i \in \mathcal{A}} \mathrm{\mho}^{(\Gamma)}_i = 0 \bmod m'.\]
It follows from Section~\ref{sec5} that the access structure tokens can be generate such that above condition holds for any authorized subset of parties $\mathcal{A} \in \Gamma$, while for all unauthorized subsets $\mathcal{B} \notin \Gamma$, it holds that: $$\sum\limits_{i \in \mathcal{B}} \mathrm{\mho}^{(\Gamma)}_i \neq 0 \bmod m \quad \text{AND} \quad \sum\limits_{i \in \mathcal{B}} \mathrm{\mho}^{(\Gamma)}_i \neq 0 \bmod m'.$$

Once it is established that $\mathcal{A} \in \Gamma$, then the parties combine their encodings $\{\textbf{D}_i\}_{i \in \mathcal{A}}$, in the correct order as: 
\begin{equation}\label{EQQ}
\cv{\prod\limits_{i \in \mathcal{A}}} \textbf{D}_i \textbf{A}^{(\mathcal{A})}_\triangleright = \textbf{D} \textbf{A}^{(\mathcal{A})}_\triangleright= \textbf{A}^{(\mathcal{A})}_\triangleleft \prod_{i \in \mathcal{A}}\textbf{S}^{\mathrm{\mho}^{(\Gamma)}_i} + \textbf{E}^\prime = \textbf{A}^{(\mathcal{A})}_\triangleleft \textbf{S}^{\sum_{i \in \mathcal{A}}\mathrm{\mho}^{(\Gamma)}_i +1} + \textbf{E}^\prime. 
\end{equation}
Recall that $\textbf{A}^{(\mathcal{A})}_\triangleright$ and $\textbf{A}^{(\mathcal{A})}_\triangleleft$ respectively denote the matrices of the parties housed by the first and final nodes in the directed path formed by the nodes housing the parties in $\mathcal{A}$. Depending on the value of $i \mod 2$, each party $P_i \in \mathcal{P}$ operates within its respective $\bmod ~q$ or $\bmod~ q'$ world. Without loss of generality, let $P^{(\mathcal{A})}_\triangleleft$ operate in modulo $q$ world. Recall that for $\mathcal{A} \in \Gamma$, it holds that: $$\sum\limits_{i \in \mathcal{A}} \mathrm{\mho}^{(\Gamma)}_i = 0 \bmod m\quad \text{OR} \quad \sum\limits_{i \in \mathcal{A}} \mathrm{\mho}^{(\Gamma)}_i = 0 \bmod m',$$ i.e., it holds that: $$\sum\limits_{i \in \mathcal{A}} \mathrm{\mho}^{(\Gamma)}_i = c(p-1),$$ where $c \geq 1$ is an integer. Recall that the size of all sets in $\mathcal{H}$ and $\mathcal{H}'$ is an even multiple of $m$ and $p = 2m+1$. Therefore, sizes of the intersections between any subset-superset pairs must also be even multiples of $m$. Hence, for all $ \vartheta = 0 \bmod m$ and/or $\vartheta = 0 \bmod m'$, it holds that $\vartheta = c(p-1)$, where $c \geq 1$ is an integer. 

Recall that $q = pc$ and $q' = pc'$, where $p \nmid c, c'$. Hence, for authorized subsets of parties $\mathcal{A}\in \Gamma$, \Cref{EQQ} equates to:
\begin{equation}\label{EQ}
	\textbf{A}^{(\mathcal{A})}_\triangleleft \textbf{S}^{\left\langle \textbf{v}, \sum_{i \in \mathcal{A}}\textbf{v}_i \right\rangle +1} + \textbf{E}^\prime =\textbf{A}^{(\mathcal{A})}_\triangleleft \textbf{S}^{c(p-1) + 1} + \textbf{E}^\prime.
\end{equation}

We know that only authorized subset of parties $\mathcal{A} \in \Gamma$ can combine their \textit{trapdoor shares} $\{\tilde{\tau}_i\}_{i \in \mathcal{A}}$ to generate the trapdoor $\tau^{(\mathcal{A})}_\triangleleft$ required to invert $\textbf{A}_\mathcal{A}$. Hence, it follows from \Cref{EQQ,EQ} that for small $\textbf{E}^\prime$, the LWE inversion algorithm from~\cite{Micci[12]} can be used with $\tau^{(\mathcal{A})}_\triangleleft$ to compute matrix $\textbf{S}^{c(p-1)+1}$. We know from Fermat's little theorem (See \Cref{Fermat}) that: $$\det(\textbf{S})^{c(p-1)+1} = \det(\textbf{S})^{0+1} \bmod p.$$ Hence, the secret can be recovered as $\det(\textbf{S})^{c(p-1)+1} = \det(\textbf{S}) \bmod p = k$. Next, we prove that $\textbf{E}^\prime$ is indeed small for a bounded number of parties.

\begin{lemma}
	It holds that the largest $\mathbf{E}'$, computed by combining all encodings $\mathbf{D}_i$ as:
	$$\mathbf{D}_{\ell-1}\mathbf{D}_{\ell-2}\cdots\mathbf{D}_1\mathbf{A}_1=\mathbf{A}_\ell\mathbf{S}_{\ell-1}\mathbf{S}_{\ell-2}\cdots\mathbf{S}_1+\mathbf{E}^\prime,$$
	has entries bounded by $O\left(\sqrt{d^{6\ell-11}\lambda^{4\ell-5}\log^{6\ell-11}(d\lambda)}\right)$.
\end{lemma}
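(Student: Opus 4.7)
The plan is to expand $\textbf{D}_{\ell-1}\textbf{D}_{\ell-2}\cdots\textbf{D}_1\textbf{A}_1$ by iteratively applying the relation $\textbf{D}_j\textbf{A}_j=\textbf{A}_{j+1}\textbf{S}_j+\textbf{E}_j$ for $j=1,\dots,\ell-1$. A short induction on $\ell$ yields the closed form
\[
\textbf{D}_{\ell-1}\cdots\textbf{D}_1\textbf{A}_1=\textbf{A}_\ell\textbf{S}_{\ell-1}\cdots\textbf{S}_1+\sum_{i=1}^{\ell-1}\bigl(\textbf{D}_{\ell-1}\cdots\textbf{D}_{i+1}\bigr)\textbf{E}_i\bigl(\textbf{S}_{i-1}\cdots\textbf{S}_1\bigr),
\]
with the convention that empty products equal the identity. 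Identifying $\textbf{E}^\prime$ with the sum on the right reduces the lemma to an entry-wise norm estimate of these $\ell-1$ terms.

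I would then bound each summand using the deterministic inequality that an entry of a product $\textbf{M}_1\textbf{M}_2\cdots\textbf{M}_k$, in which $\textbf{M}_j$ has dimension $d_{j-1}\times d_j$, is at most $\bigl(\prod_{j=1}^{k-1}d_j\bigr)\prod_{j=1}^{k}\|\textbf{M}_j\|$. In the $i$-th summand, there are $\ell-1-i$ inner dimensions equal to $w$ (from the $\textbf{D}$--$\textbf{D}$ and $\textbf{D}$--$\textbf{E}$ joins) and $i-1$ inner dimensions equal to $n$ (from the $\textbf{E}$--$\textbf{S}$ and $\textbf{S}$--$\textbf{S}$ joins), so substituting the share-generation bounds $\|\textbf{D}_j\|<\sigma\sqrt{\lambda}$, $\|\textbf{E}_i\|<s\sqrt{\lambda}$, and $\|\textbf{S}_i\|<p$ yields the entry-wise bound
\[
T_i\;:=\;w^{\ell-1-i}\cdot n^{i-1}\cdot(\sigma\sqrt{\lambda})^{\ell-1-i}\cdot s\sqrt{\lambda}\cdot p^{i-1}.
\]

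To finish the argument, I would show that $T_i$ is monotonically decreasing in $i$, so the $i=1$ term dominates. Indeed $T_{i+1}/T_i=np/(w\sigma\sqrt{\lambda})=o(1)$ under the chosen parameters, because $w\sigma\sqrt{\lambda}=\Theta(d^3\lambda^2\log^3(d\lambda))$ dominates $np=\Theta(d^{3/2}\lambda\log^{3/2}(d\lambda))$. Hence every entry of $\textbf{E}^\prime$ is bounded by $(\ell-1)T_1=(\ell-1)\,w^{\ell-2}\,\sigma^{\ell-2}\,\lambda^{(\ell-1)/2}\,\sqrt{n}$, and substituting $w=\Theta(d^2\lambda\log^2(d\lambda))$, $\sigma=\Theta(d\sqrt{\lambda}\log(d\lambda))$, and $n=\Theta(d\lambda\log(d\lambda))$ and collecting exponents of $d$, $\lambda$, and $\log(d\lambda)$ yields the claimed $O\bigl(\sqrt{d^{6\ell-11}\lambda^{4\ell-5}\log^{6\ell-11}(d\lambda)}\bigr)$ bound. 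The only delicate part of the proof is keeping track of the many $\sqrt{\lambda}$ factors that arise from the high-probability tail bounds on the discrete Gaussian entries of the $\textbf{D}_j$ and $\textbf{E}_i$ matrices; beyond that, only standard matrix norm inequalities are needed.
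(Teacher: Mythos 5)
Your proposal is correct and is essentially the paper's own argument in unrolled form: the paper proves the same bound by induction on the recursion $\mathbf{E}_j'=\mathbf{E}_j\mathbf{S}_{j-1}\cdots\mathbf{S}_1+\mathbf{D}_j\mathbf{E}_{j-1}'$, and your closed-form sum with the dominant $i=1$ term is exactly that induction written out, using the same entry bounds $\|\mathbf{S}_i\|<p$, $\|\mathbf{E}_i\|<s\sqrt{\lambda}$, $\|\mathbf{D}_i\|<\sigma\sqrt{\lambda}$ (which the paper imposes deterministically at share generation, so no Gaussian tail argument is even needed). Your accounting in fact gives the slightly sharper exponent $\lambda^{4\ell-6}$, which still lies within the stated $O\bigl(\sqrt{d^{6\ell-11}\lambda^{4\ell-5}\log^{6\ell-11}(d\lambda)}\bigr)$ bound.
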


\begin{proof}
	In the setting depicted in Figure~\ref{Fig4}, if we combine the shares from parties $P_1$ and $P_2$, we obtain:
	\begin{align*}
	\mathbf{D}_2\mathbf{D}_1\mathbf{A}_1&=\mathbf{D}_2\mathbf{A}_2\mathbf{S}_1+\mathbf{D}_2\mathbf{E}_1 \\
	&=(\mathbf{A}_3\mathbf{S}_2+\mathbf{E}_2)\mathbf{S}_1+\mathbf{D}_2\mathbf{E}_1 \\
	&=\mathbf{A}_3\mathbf{S}_2\mathbf{S}_1+\mathbf{E}_2^\prime,
	\end{align*}
	where $\mathbf{E}_2^\prime=\mathbf{E}_2\mathbf{S}_1+\mathbf{D}_2\mathbf{E}_1$. Hence, it follows that:
	\begin{align*}
	||\mathbf{E}_2^\prime||&<n\cdot||\mathbf{E}_2||\cdot||\mathbf{S}_1||+m\cdot||\mathbf{D}_2||\cdot||\mathbf{E}_1|| \\
	&=O\left(\sqrt{d^7\lambda^7\log^7(d\lambda)}\right).
	\end{align*}
	
	If we now combine this with the share from party $P_3$, we obtain:
	\begin{align*}
	\mathbf{D}_3\mathbf{D}_2\mathbf{D}_1\mathbf{A}_1&=\mathbf{D}_3\mathbf{A}_3\mathbf{S}_2\mathbf{S}_1+\mathbf{D}_3\mathbf{E}_2^\prime \\
	&=(\mathbf{A}_4\mathbf{S}_3+\mathbf{E}_3)\mathbf{S}_2\mathbf{S}_1+\mathbf{D}_3\mathbf{E}_2^\prime \\
	&=\mathbf{A}_4\mathbf{S}_3\mathbf{S}_2\mathbf{S}_1+\mathbf{E}_3^\prime,
	\end{align*}
	where $\mathbf{E}_3^\prime=\mathbf{E}_3\mathbf{S}_2\mathbf{S}_1+\mathbf{D}_3\mathbf{E}_2^\prime$. Then,
	\begin{align*}
	||\mathbf{E}_3^\prime||&<n^2\cdot||\mathbf{E}_3||\cdot||\mathbf{S}_2||\cdot||\mathbf{S}_1||+m\cdot||\mathbf{D}_3||\cdot||\mathbf{E}_2^\prime|| \\
	&=O\left(\sqrt{d^{13}\lambda^{11}\log^{13}(d\lambda)}\right).
	\end{align*}
	
	Therefore, by induction, it follows for any $||\mathbf{E}^\prime||$ that:
	\begin{align*}
	||\mathbf{E}^\prime|| \leq ||\mathbf{E}_{\ell-1}^\prime||&=O\left(\sqrt{d^{6(\ell-1)-5}\lambda^{4(\ell-1)-1}\log^{6(\ell-1)-5}(d\lambda)}\right) \\
	&=O\left(\sqrt{d^{6\ell-11}\lambda^{4\ell-5}\log^{6\ell-11}(d\lambda)}\right). 
	\end{align*} $\qed$
\end{proof}

\subsection{\textbf{Secret and Share Verification}} After a successful secret reconstruction, any honest party $P_i \in \mathcal{A}$ in any authorized subset $\mathcal{A} \in \Gamma$ can verify the correctness of all shares $\{\textbf{S}_i\}_{i \in \mathcal{A}}$ from the reconstructed secret $k$. The verification is performed by removing $P_i$ from the directed path to $P^{(\mathcal{A})}_\triangleleft$ and then using $\tau^{(\mathcal{A})}_\triangleleft$ to invert the resulting \textsf{PRIM-LWE} instance. For example, if the directed path formed by the nodes housing the parties in $\mathcal{A}$ contains $P_i$\sampleline{thick,->}$P_j$\sampleline{thick,->}$P_t$ at the end, where $t = 1 \bmod 2$, then party $P_i$'s share can be verified by computing:
\[\textbf{D}_t \textbf{D}_j \textbf{A}_j = \textbf{A}_t \textbf{S}^{\mathrm{\mho}_j^{(\Gamma)} + \mathrm{\mho}_t^{(\Gamma)}} + \textbf{E}',\]
and inverting the output by using `trapdoor information' $\tau_t$  to find $\textbf{S}^{\mathrm{\mho}_j^{(\Gamma)} + \mathrm{\mho}_t^{(\Gamma)}}$, and use $\mathrm{\mho}_j^{(\Gamma)}, \mathrm{\mho}_t^{(\Gamma)}$ to verify its consistency with $k = \det(\textbf{S}) \bmod p$. 

Note that with our verification procedure, there can be a non-negligible probability of an inconsistent share to pass as valid due to random chance. For a setting with $\lceil \ell/2 \rceil$ malicious parties, the probability of an inconsistent share passing the verification of all honest parties is: $(1/p-1)^{\left\lfloor \frac{\ell}{2} \right\rfloor}.$ Hence, larger values of $p$ lead to smaller probabilities of verification failures by all honest parties. Note that unlike traditional VSS schemes, our scheme does not guarantee that all honest parties recover a consistent secret. Instead, it allows detection of malicious behavior without requiring any additional communication or cryptographic subroutines.

\subsection{\textbf{Maximum Share Size}} 
Since our access structure hiding verifiable secret sharing scheme works with minimal authorized subsets, its maximum share size is achieved when the access structure contains the largest possible number of minimal authorized subsets. For a set of $\ell$ parties, the maximum number of unique minimal authorized subsets in any access structure is $\binom{\ell}{\ell/2}$. Recall that the secret $k = \det(\textbf{S})$ belongs to $\mathbb{Z}_p$. Hence, for $|q| \approx |q'|$, it holds that $|k| \approx \sqrt{q}$. For each minimal authorized subset, every party $P_i \in \mathcal{P}$ receives a share of size $q(2mn + 1)$. Since $q = \poly(n)$ and the size of each access structure token is $\Theta(h)$ (see Section~\ref{sec5}), the maximum share size is (using results from~\cite{Das[20]}):
\begin{align*}
	\max\left(\mathrm{\Psi}^{(k)}_i\right) &\leq \binom{\ell}{\ell/2}(\sqrt{q}(2mn + 1) + \Theta(h))\\ &= (1+ o(1)) \dfrac{2^{\ell}}{\sqrt{\pi \ell/2}}(\sqrt{q}(2 q^{\varrho} + 1)+ \Theta(h)) \\ &= (1+ o(1)) \dfrac{2^{\ell}}{\sqrt{\pi \ell/2}}(2 q^{\varrho + 0.5} + \sqrt{q} + \Theta(h)),
\end{align*}
where $\varrho \leq 1$ is a constant and $h$ is the number of elements over which our set-systems are defined.\\
\textit{Possible Improvements:}
If one is able to realize the access structure encoding that is described in Note~\ref{note}, then the maximum share size drops by a factor of $l \geq 2$ by using that procedure instead of the one that we used in our scheme. Hence, the maximum share size (with respect to the secret size) of the resulting scheme would be:
\begin{align*}
\max\left(\mathrm{\Psi}^{(k)}_i\right) &\leq \dfrac{1}{l+1} \binom{\ell}{\ell/2}\sqrt{q}(2mn + 2)\\ &= \dfrac{1}{l+1} \left( (1+ o(1)) \dfrac{2^{\ell}}{\sqrt{\pi \ell/2}}\sqrt{q}(2 q^{\varrho} + 2) \right)\\ &= \dfrac{1}{l+1} \left( (1+ o(1)) \dfrac{2^{\ell}}{\sqrt{\pi \ell/2}}(2 q^{\varrho + 0.5} + 2\sqrt{q}) \right),
\end{align*}
where $\varrho \leq 1$ is a constant and $l \geq 2$ is as defined by Theorem~\ref{mainThm}.

\subsection{\textbf{Secrecy and Privacy}}
\Cref{lemma1,lemma2} establish perfect completeness and perfect soundness of our scheme, respectively (see Definition~\ref{MainDef}). We argued about perfect correctness while explaining the secret reconstruction procedure. Hence, we are left with proving computational secrecy and statistical hiding.

\begin{theorem}[Statistical Hiding]\label{profL}
	Every unauthorized subset $\mathcal{B} \notin \Gamma$ can identify itself to be outside $\Gamma$ by using its set of access structure tokens, $\{\mathrm{\mho}^{(\Gamma)}_i\}_{i \in \mathcal{B}}$. Given that the decision-LWE problem is hard, the following holds for all unauthorized subsets $\mathcal{B} \notin \Gamma$ and all access structures $\Gamma' \subseteq 2^{\mathcal{P}}$, where $\Gamma \neq \Gamma'$ and $\mathcal{B} \notin \Gamma'$:  
	\[\Big|\Pr[\Gamma~|~ \{\mathrm{\mho}^{(\Gamma)}_i\}_{i \in \mathcal{B}}] - \Pr[\Gamma'~|~ \{\mathrm{\mho}^{(\Gamma)}_i\}_{i \in \mathcal{B}}]\Big| = 2^{-\omega}, \]
	where $\omega = |\mathcal{P} \setminus \mathcal{B}|$ is the security parameter.
\end{theorem}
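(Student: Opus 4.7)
The plan is to treat the two assertions of the theorem separately. The first assertion---that every unauthorized $\mathcal{B}\notin\Gamma$ can self-identify as unauthorized from its tokens alone---follows directly from the modular arithmetic already established. The parties in $\mathcal{B}$ combine their tokens $\{\mathrm{\mho}^{(\Gamma)}_i\}_{i\in\mathcal{B}}$ using the inner-product and multilinear-form machinery of Section~\ref{work} to obtain $\bigl|\bigcap_{i\in\mathcal{B}} S_i\bigr| \bmod m$ (and analogously $\bmod\,m'$). By Lemma~\ref{lemma2}, this quantity is non-zero $\bmod\,m$ (and $\bmod\,m'$) for every $\mathcal{B}\notin\Gamma$, while by Lemma~\ref{lemma1} it would vanish for any $\mathcal{A}\in\Gamma$; hence the test is both sound and complete, and is itself a deterministic distinguisher between authorized and unauthorized subsets.

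For the second assertion, I would place a uniform prior over the admissible access structures and invoke Bayes' rule to reduce the claim to showing that the likelihood $\Pr[\{\mathrm{\mho}^{(\Gamma)}_i\}_{i\in\mathcal{B}} \mid \Gamma]$ depends on $\Gamma$ only through $\Gamma \cap 2^{\mathcal{B}}$, up to an additive error of $2^{-\omega}$, where $\omega = |\mathcal{P}\setminus\mathcal{B}|$. Inspecting the encoding procedure of Section~\ref{sec5}, the tokens seen by $\mathcal{B}$ arise from four independent sources of randomness: the anchor set $H\in\mathcal{H}\cap\mathcal{H}'$ and its proper supersets $H_i\in\Im$; the integer $\kappa$; the mask $H_0$; and the uniform permutation $\gamma$ applied before transmission. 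I would argue that, conditioned on the tokens that $\mathcal{B}$ actually sees, these random choices leave free at least $2^\omega$ distinct extensions of $\mathrm{\Omega}$ (and hence of $\Gamma_0$) across the $\omega$ parties in $\mathcal{P}\setminus\mathcal{B}$, so that each individual $\Gamma$ carries posterior mass at most $2^{-\omega}$.

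I would then structure the formal argument as a hybrid chain in the spirit of Section~\ref{Sec2}, walking from the real distribution for $\Gamma$ to the real distribution for $\Gamma'$ by swapping one random ingredient at a time. The pivotal hybrid replaces $\gamma$: because $\gamma$ is uniform and independent of $\mathrm{\Omega}$, the only residual signal carried by the permuted coordinates is the multiset of token sizes, and the statistical distance between consecutive hybrids is bounded by the probability that one of the $\omega$ ``free'' parties in $\mathcal{P}\setminus\mathcal{B}$ takes a distinguishing label. Summing the hybrids via the triangle inequality yields the $2^{-\omega}$ bound, and Bayes' rule converts it into the stated posterior estimate. The hypothesised hardness of decision-LWE enters only when the fuller view of Definition~\ref{MainDef} (tokens \emph{together with} shares $\{\mathrm{\Psi}^{(k)}_i\}_{i\in\mathcal{B}}$) is considered: there, the pseudorandomness of each encoding $\mathbf{D}_i$ conditioned on $\mathbf{A}_i$, established by the \textsf{PRIM-LWE} reduction of Section~\ref{prim}, makes the shares indistinguishable across $\Gamma$ and $\Gamma'$.

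The main obstacle, I expect, is pinning down what it means for an access structure to be ``consistent'' with a given $\mathcal{B}$-token tuple. The three party-types (interior-$\mathrm{\Omega}$, boundary-$\mathrm{\Omega}$ at indices $1$ and $|\mathrm{\Omega}|$, and outside-$\mathrm{\Omega}$) produce tokens whose nominal sizes differ by $1$ and whose underlying sets are drawn from slightly different distributions. Making the independence of $\gamma$, $H_0$, and the $H_i$'s from the identity of $\mathrm{\Omega}\cap(\mathcal{P}\setminus\mathcal{B})$ fully rigorous---and in particular ruling out subtle correlations induced by the cardinalities $|S_1|$, $|S_{|\mathrm{\Omega}|}|$, and $|S_i|$---is the delicate step. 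Since the target bound $2^{-\omega}$ depends only on the number of undetermined party labels, once the conditional distributions of the four random ingredients are correctly factorised across $\mathcal{B}$ and $\mathcal{P}\setminus\mathcal{B}$, the estimate should drop out of a direct counting argument.
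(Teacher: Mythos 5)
Your proposal takes essentially the same route as the paper: the self-identification claim is exactly the paper's appeal to Lemma~\ref{lemma2}, and the $2^{-\omega}$ hiding bound is obtained, as in the paper, from the randomness of the sampled set $H$ and of the permutation $\gamma$ (the paper's proof is in fact just this two-step assertion, with no Bayesian, hybrid or counting elaboration, and it never invokes the LWE hypothesis for the token-only statement, consistent with your observation). The step you flag as delicate --- factorising the conditional distributions across $\mathcal{B}$ and $\mathcal{P}\setminus\mathcal{B}$ and ruling out leakage through token cardinalities --- is not carried out in the paper either, so your plan is, if anything, more demanding than the published argument.
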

\begin{proof}
	It follows from Lemma~\ref{lemma2} that the following holds for all unauthorized subsets of parties $\mathcal{B} \notin \Gamma$:
	\[ \sum\limits_{i \in \mathcal{B}} \mathrm{\mho}^{(\Gamma)}_i \neq 0 \bmod m,\] 
	i.e., any unauthorized subset $\mathcal{B} \notin \Gamma$ can use its access structure tokens to identify itself as outside of the access structure $\Gamma$. The security parameter $\omega = |\mathcal{P} \setminus \mathcal{B}|$ accounts for this minimum information that is available to any unauthorized subset $\mathcal{B} \notin \Gamma$.
	
	We know that the set $H \xleftarrow{\; \$ \;} \mathcal{H}$ is randomly sampled. Furthermore, the access structure tokens $\{\mathrm{\mho}^{(\Gamma)}_i\}_{i \in \mathcal{B}}$ given to the parties are permuted according to a random permutation $\gamma$. Hence, it follows from the randomness of $H$ and $\gamma$ that:
	\[\Big|\Pr[\Gamma~|~ \{\mathrm{\mho}^{(\Gamma)}_i\}_{i \in \mathcal{B}}] - \Pr[\Gamma'~|~ \{\mathrm{\mho}^{(\Gamma)}_i\}_{i \in \mathcal{B}}]\Big| = 2^{-\omega}. \eqno \qed\]
\end{proof}

\begin{theorem}[Computational Secrecy]\label{ThMM}
	Given that decision-LWE problem is hard, it holds for every unauthorized subset $\mathcal{B} \notin \Gamma$ and all different secrets $k_1, k_2 \in \mathcal{K}$ that the distributions $\{\mathrm{\Psi}^{(k_1)}\}_{i \in \mathcal{B}}$ and $\{\mathrm{\Psi}^{(k_2)}\}_{i \in \mathcal{B}}$ are computationally indistinguishable with respect to the security parameter $\varepsilon \cdot |\mathcal{B}|$, where $\varepsilon$ denotes the advantage of a polynomial-time adversary against a \textsf{PRIM-LWE} instance.
\end{theorem}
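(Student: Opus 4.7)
The plan is to prove computational secrecy via a hybrid argument that reduces distinguishing $\{\mathrm{\Psi}_i^{(k_1)}\}_{i\in\mathcal{B}}$ from $\{\mathrm{\Psi}_i^{(k_2)}\}_{i\in\mathcal{B}}$ to the decision-\textsf{PRIM-LWE} problem, which by the theorem of Section~\ref{prim} is (up to an $O(n^2)$ factor) as hard as decision-LWE. The central observation is that for any unauthorized $\mathcal{B}\notin\Gamma$, \Cref{lemma2} combined with the token construction of Section~\ref{sec5} guarantees that $\sum_{i\in\mathcal{B}}\mathrm{\mho}_i^{(\Gamma)}\not\equiv 0\bmod m$ and $\not\equiv 0\bmod m'$. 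Consequently, no in-order product of the trapdoor shares $\{\tilde{\tau}_i\}_{i\in\mathcal{B}}$ reconstructs any useful sink-side trapdoor $\tau^{(\mathcal{A})}_\triangleleft$, and the \cite{Micci[12]} inverter cannot be run on any contiguous sub-path of $\mathcal{B}$. Therefore each equation $\mathbf{D}_i\mathbf{A}_i=\mathbf{A}_{i+1}\mathbf{S}_i+\mathbf{E}_i$ presented to $\mathcal{B}$ is, from its perspective, a well-formed \textsf{PRIM-LWE} sample whose secret $\mathbf{S}_i=\mathbf{S}^{\mathrm{\mho}_i^{(\Gamma)}}$ has determinant $k^{\mathrm{\mho}_i^{(\Gamma)}}$, which is still a generator of $\mathbb{Z}_p^\ast$ whenever $\gcd(\mathrm{\mho}_i^{(\Gamma)},p-1)=1$; the construction of Section~\ref{sec5} with $p=2m+1$ and $l+|\mathrm{\Omega}|+\kappa<p$ ensures that this coprimality holds throughout.

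Next I would set up hybrids $H_0,H_1,\ldots,H_{|\mathcal{B}|}$. In $H_0$ the view is the honest share distribution for secret $k_1$; in $H_j$ the encodings $\mathbf{D}_{\pi(j')}\mathbf{A}_{\pi(j')}$ of the first $j$ parties of $\mathcal{B}$ (ordered by some fixed topological order $\pi$ along the DAG) have been replaced by uniformly sampled matrices of the appropriate dimension over $\mathbb{Z}_q$ or $\mathbb{Z}_{q'}$. Between $H_{j-1}$ and $H_j$ the only change is swapping one \textsf{PRIM-LWE} pair for a uniform pair, which a direct reduction turns into a \textsf{PRIM-LWE} distinguisher with advantage $\varepsilon$: the reduction samples all other $\mathbf{A}_i$ together with their trapdoors via the algorithm of Section~\ref{SecTrap}, uses those trapdoors to produce the other encodings honestly, embeds the \textsf{PRIM-LWE} challenge $(\mathbf{A}^\ast,\mathbf{B}^\ast)$ at position $\pi(j)$ by left-multiplying $\mathbf{B}^\ast$ on the public $\mathbf{A}_{\pi(j)+1}$-side to obtain $\mathbf{D}_{\pi(j)}\mathbf{A}_{\pi(j)}$, and hands the adversary fresh trapdoor shares $\tilde{\tau}_i$ drawn so that their in-order product over any $\mathcal{A}\supseteq\mathrm{\Omega}$ equals $\tau^{(\mathcal{A})}_\triangleleft$ (the simulator knows this $\tau^{(\mathcal{A})}_\triangleleft$ because every $\mathcal{A}$ containing $\mathrm{\Omega}$ must contain a party outside $\mathcal{B}$ whose true trapdoor the simulator generated itself). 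The triangle inequality gives a distance between $H_0$ and $H_{|\mathcal{B}|}$ of at most $|\mathcal{B}|\cdot\varepsilon$. Since $H_{|\mathcal{B}|}$ is a function only of the $\mathbf{A}_i$, $\tilde{\tau}_i$, $\mathrm{\mho}_i^{(\Gamma)}$ and uniformly random matrices, it is independent of $k_1$; running the same sequence for $k_2$ ends in the identical distribution, and a second application of the triangle inequality yields total advantage at most $2|\mathcal{B}|\varepsilon$, absorbable into the security parameter $\varepsilon\cdot|\mathcal{B}|$ stated in the theorem.

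The main obstacle will be the joint distribution of the trapdoor shares $\{\tilde{\tau}_i\}_{i\in[\ell]}$: they must simultaneously (i) combine to the correct inversion trapdoor for every authorized $\mathcal{A}\in\Gamma$, (ii) leak nothing useful to any unauthorized $\mathcal{B}$, and (iii) be efficiently simulable by the reduction described above without knowing the challenger's secret. Establishing (i)--(iii) amounts to a standalone secret-sharing argument layered underneath the \textsf{PRIM-LWE} hybrid, and some care is needed because the ``non-degenerate intersection'' guarantee of our set-system only tells us the sums of access-structure tokens are nonzero modulo $m$ or $m'$, not modulo $p-1$; I would handle this by lifting the tokens from $\mathbb{Z}_m\cup\mathbb{Z}_{m'}$ to integer representatives in $[0,p-2]$ and exploiting $p=2m+1$ so that Fermat's theorem (\Cref{Fermat}) collapses the exponents in secret reconstruction while keeping the exponents in every unauthorized view bounded strictly between $0$ and $p-1$, which is precisely what lets the simulated trapdoor shares look correct in the hybrid without letting $\mathcal{B}$ invert anything.
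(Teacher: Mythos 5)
Your overall route is the one the paper itself takes, just spelled out in full: the paper's proof of Theorem~\ref{ThMM} consists of (i)~the observation that the trapdoor shares $\tilde{\tau}_i$ are sampled at random subject only to multiplying, in order, to a valid trapdoor for $\mathcal{A}\supseteq\mathrm{\Omega}$, so that by a one-time-pad argument $\{\tilde{\tau}_i\}_{i\in\mathcal{B}}$ reveals nothing to an unauthorized $\mathcal{B}$, and (ii)~an appeal to \textsf{PRIM-LWE} hardness for the pairs $(\mathbf{A}_i,\mathbf{D}_i)_{i\in\mathcal{B}}$, with the stated loss $\varepsilon\cdot\vert\mathcal{B}\vert$ implicitly being exactly the per-party hybrid you make explicit. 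So in structure your proposal matches the paper, and your triangle-inequality bookkeeping (with the harmless extra factor $2$ from comparing both secrets to a common final hybrid) is a legitimate formalization of what the paper merely asserts.

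Two concrete problems, however. First, your claim that the Section~\ref{sec5} construction guarantees $\gcd(\mathrm{\mho}_i^{(\Gamma)},p-1)=1$ is false: the tokens are (essentially) intersection sizes, constrained only to be nonzero modulo $m$ (or $m'$); in the example procedure each $S_i\cap H_0$ contains a set whose size is an even multiple of $m$ plus at most $\vert\mathrm{\Omega}\vert+\kappa$ further elements, so $\mathrm{\mho}_i^{(\Gamma)}$ can easily share a factor with $2m=p-1$, and then $\det(\mathbf{S}_i)=k^{\mathrm{\mho}_i^{(\Gamma)}}$ need not generate $\mathbb{Z}_p^\ast$, i.e.\ the per-edge equation is not a \textsf{PRIM-LWE} sample in the sense of Section~\ref{prim}. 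Second, and more substantively, your hybrid step replaces one pair with a \textsf{PRIM-LWE} challenge whose secret is uniform over $M_n^{prim}(\mathbb{Z}_p)$ and independent of everything else, whereas in the real distribution the per-edge secrets are the correlated powers $\mathbf{S}^{\mathrm{\mho}_i^{(\Gamma)}}$ of a single fixed low-norm matrix whose determinant is the secret being shared; the challenge distribution therefore does not match $H_{j-1}$, and an extra argument (hardness of LWE with such structured, reused secrets, or a preliminary game replacing the powers of $\mathbf{S}$) is needed before the reduction closes. Relatedly, since the adversary sees the small matrix $\mathbf{D}_{\pi(j)}$ itself and not only the product $\mathbf{D}_{\pi(j)}\mathbf{A}_{\pi(j)}$, your simulator must generate $\mathbf{A}_{\pi(j)}$ together with its trapdoor and embed the challenge only as the right-hand side via $\mathbf{A}_{\pi(j)+1}$; your description leaves this placement ambiguous. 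To be fair, the paper's own two-sentence proof silently skips the same correlated-secrets issue, simply asserting that the $(\mathbf{A}_i,\mathbf{D}_i)$ pairs leak nothing by \textsf{PRIM-LWE} hardness; but because your proposal makes the reduction explicit, the mismatch surfaces as a visible gap in your argument that you would need to repair.
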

\begin{proof}
	Recall that $\mathrm{\Psi}_i^{(k)} = \{\textbf{A}_i, \tilde{\tau}_i, \textbf{D}_i\}$. We know that the `trapdoor shares' $\{\tilde{\tau}_i\}_{i \in [\ell]}$ are generated randomly such that the following holds only for authorized subsets of parties $\mathcal{A} \in \Gamma$: $$\tau^{(\mathcal{A})}_\triangleleft= \vv{\prod\limits_{i \in \mathcal{A}}} \tilde{\tau}_i.$$
	Hence, it follows from one-time pad that $\{\tilde{\tau}_i\}_{i \in \mathcal{B}}$ leaks no information about the trapdoors of any matrix $\{\textbf{A}_i\}_{i \in \mathcal{B}}$. It follows from the hardness of \textsf{PRIM-LWE} that the pairs $(\textbf{A}_i, \textbf{D}_i)_{i \in \mathcal{B}}$ do not leak any non-negligible information to any unauthorized subset of parties $\mathcal{B} \notin \Gamma$ because it cannot reconstruct the correct trapdoor $\tau^{(\mathcal{B})}_\triangleleft$. Hence, it follows that the distributions $\{\mathrm{\Psi}^{(k_1)}\}_{i \in \mathcal{B}}$ and $\{\mathrm{\Psi}^{(k_2)}\}_{i \in \mathcal{B}}$ are computationally indistinguishable with respect to the security parameter $\varepsilon \cdot |\mathcal{B}|$, where $\varepsilon$ denotes the advantage of a polynomial-time adversary against a \textsf{PRIM-LWE} instance. $\qed$
\end{proof}

\section{Conclusion}\label{conclude}
Secret sharing is a foundational and versatile tool with direct applications to many useful cryptographic protocols. Its applications include multiple privacy-preserving techniques, but the privacy-preserving guarantees of secret sharing itself have not received adequate attention. In this paper, we bolstered the privacy-preserving guarantees and verifiability of secret sharing by extending a recent work of Sehrawat and Desmedt~\cite{Vipin[20]} wherein they introduced hidden access structures that remain unknown until some authorized subset of parties assembles. Unlike the solution from~\cite{Vipin[20]}, our scheme tolerates malicious parties and supports all possible monotone access structures. We introduced an approach to combine the learning with errors (LWE) problem with our novel superpolynomial sized set-systems to realize secret sharing for \textit{all} monotone hidden access structures. Our scheme is the first secret sharing solution to support malicious behavior identification and share verifiability in malicious-majority settings. It is also the first LWE-based secret sharing scheme for general access structures. As the building blocks of our scheme, we constructed a novel set-system with restricted intersections and introduced a new variant of the LWE problem, called \textsf{PRIM-LWE}, wherein the secret matrix is sampled from the set matrices whose determinants are generators of $\mathbb{Z}_q^*$, where $q$ is the LWE modulus. We also gave concrete directions for future work that will reduce our scheme's share size to be smaller than the best known upper bound for secret sharing over general (i.e., all monotone) access structures.

\bibliographystyle{plainurl}
\bibliography{CO2020}

\end{document}